\documentclass[12pt]{article}
\usepackage[utf8]{inputenc}
\usepackage{float}
\usepackage{url}
\usepackage{placeins}
\usepackage[T1]{fontenc}
\usepackage{mathptmx}                 
\usepackage[margin=1.0in,top=1.12in,bottom=1.10in]{geometry}
\usepackage{amsmath,amssymb,amsthm,mathtools}
\usepackage{bm}
\usepackage{setspace}
\usepackage[longnamesfirst,sort&compress]{natbib}
\usepackage{booktabs}
\usepackage{enumitem}
\usepackage{titlesec}
\usepackage{titling}
\usepackage[dvipsnames]{xcolor}
\usepackage{etoolbox}
\usepackage{graphicx}
\usepackage{subcaption}
\usepackage{adjustbox}
\usepackage{rotating}
\usepackage{booktabs}
\usepackage[hidelinks,breaklinks=true]{hyperref}
\hypersetup{colorlinks=true,
            linkcolor=blue!55!black,
            citecolor=blue!55!black,
            urlcolor=blue!55!black}

\DeclareFontFamily{U}{mathx}{\hyphenchar\font45}
\DeclareFontShape{U}{mathx}{m}{n}{
  <5> <6> <7> <8> <9> <10>
  <10.95> <12> <14.4> <17.28> <20.74> <24.88>
  mathx10
}{}

\DeclareSymbolFont{mathx}{U}{mathx}{m}{n}
\DeclareMathAccent{\widecheck}{0}{mathx}{"71}
\bibpunct{(}{)}{;}{a}{,}{,}

\titleformat{\section}
  {\normalfont\Large\scshape\filcenter}{\thesection.}{0.7em}{}
\titlespacing*{\section}{0pt}{2.2ex plus 1ex minus .2ex}{1.4ex plus .2ex}

\titleformat{\subsection}
  {\normalfont\large\itshape}{\thesubsection.}{0.6em}{}
\titlespacing*{\subsection}{0pt}{1.8ex plus .6ex minus .2ex}{1.0ex plus .2ex}

\titleformat{\subsubsection}
  {\normalfont\normalsize\itshape}{\thesubsubsection.}{0.6em}{}
\pretitle{\centering\normalfont\LARGE}
\posttitle{\par\vskip 0.6em}

\preauthor{\centering\normalfont\normalsize
           \textsc{By}\vskip 0.3em}
\postauthor{\par\vskip 2em}

\predate{\centering\normalfont\small}
\postdate{\par}

\theoremstyle{plain}
\newtheorem{proposition}{\scshape Proposition}
\newtheorem{theorem}{\scshape Theorem}
\newtheorem{lemma}{\scshape Lemma}

\theoremstyle{definition}
\newtheorem{assumption}{\scshape Assumption}

\theoremstyle{remark}
\newtheorem{remark}{\bfseries\upshape Remark}

\usepackage{bibunits}

\defaultbibliographystyle{ecta}
\defaultbibliography{references}

\title{Intentionally Overidentifying Local Projections}

\author{%
\begin{tabular}{c@{\hspace{2em}}c@{\hspace{2em}}c}
Junjie Guo\footnotemark[1] & Endong Wang\footnotemark[2] & Ke-Li Xu\footnotemark[3]\\[2pt]
{\small \textit{CUFE}} &
{\small \textit{University of Mannheim}} &
{\small \textit{Indiana University}}
\end{tabular}%
}

\date{\today}
\begin{document}
\maketitle

\footnotetext[1]{School of Finance, Central University of Finance and Economics, Shahe Higher Education Park, Changping District, Beijing, P.R. China, 102206. 
Email: \href{mailto:junjguog@cufe.edu.cn}{junjguog@cufe.edu.cn}.}

\footnotetext[2]{Department of Economics, University of Mannheim, L7, 3--5, Mannheim, Germany, 68161.
TEL: +49 621-181-1879; Email: \href{mailto:endong.wang@uni-mannheim.de}{endong.wang@uni-mannheim.de}.}

\footnotetext[3]{Department of Economics, Indiana University, 100 S. Woodlawn Ave., Bloomington, IN 47405, USA. Email: \href{mailto:kelixu@iu.edu}{kelixu@iu.edu}.}


\begin{bibunit}
\begin{center}
\begin{minipage}{0.88\textwidth}
\small
\noindent\textsc{Abstract.}\quad
Local projections (LPs) have become a standard tool for estimating dynamic effects in macroeconomics and are typically implemented using ordinary least squares (OLS) for a given lag order. We propose an alternative approach that exploits \textit{additional lags} as excluded \textit{instrumental variables} (IVs). Under a finite-order model with martingale-difference shocks, the resulting (optimally weighted) IV estimator is never asymptotically less efficient than OLS and, as we show, can deliver substantial efficiency gains when data are persistent. For the unit-root AR(1) data with i.i.d. errors, our baseline IV estimator (with one overidentifying restriction) has asymptotic variance $2-h^{-1}$, compared with $h$, the asymptotic variance of the OLS lag-augmented estimator in \cite{montiel2021local}, for an integer propagation horizon $h$. We then develop IV-based LP inference that is robust to unknown persistence and moderate local model misspecification, whereas the recursive VAR-based inference is generally not robust to the latter. We also characterize the trade-off between the degree of misspecification and the number of additional lag controls. Through simulations and an empirical application, we demonstrate that the proposed method can yield meaningfully shorter confidence intervals for impulse responses than OLS in finite samples.

\end{minipage}
\end{center}

\vspace{1em}

\textit{Keywords}:
GMM; impulse responses; local projections; misspecification; moment-condition redundancy; persistence; robust inference.

\section{Introduction}
A central task in empirical macroeconomics and finance is to estimate how shocks propagate over time \citep{sims1980macroeconomics,ramey2016macroeconomic}. Since \citet{jorda2005estimation}, local projections (LPs, hereafter) have become a standard tool for this purpose because they estimate horizon-specific impulse responses directly, and can readily accommodate a range of empirically relevant features, including nonlinearities, instability, and rich sets of controls.\footnote{See \cite{jorda2025localprojections,inoue2024unstable,dinh2024rslp,ferreira2025blp, gonccalves2024state}, among many others.} 

In recent work, \cite{montiel2021local} and \cite{olea2024double} formally prove that, under the finite-order model, the lag-augmented LP-based inference (i.e. using a sufficient number of controlled lags) is robust to persistence of the data, long horizons, and a nontrivial degree of model misspecification, strengthening the case for LPs relative to other methods of inference on impulse responses. On the other hand, it is well known that the LP estimator tends to have large variance and provide wide confidence intervals.

The current paper extends this line of research by proposing a new LP estimator which exploits the additional information implied by the model. Concretely, instead of using OLS (ordinary least squares), as is common in practice, we  intentionally overidentify the local projection by adding orthogonality conditions based on extra lags of the data to construct the optimally weighted IV (via the generalized method of moments (GMM), \cite{hansen1982large}) estimator of the LP regression. While such an estimator is guaranteed to be no less efficient than OLS, we ask whether and when the efficiency gain can be nontrivial and whether the robustness properties established in the earlier literature on local projections can be retained.

Our analysis shows that the new method has considerable potential. Additional orthogonal conditions can yield substantial efficiency gains over LP OLS, arising from the specific structure of the LP regression. As a teaser, additional lags are not necessarily informative. For the autoregression (namely, the horizon-one local projection), lag augmented or not, with conditionally homoskedastic martingale-difference errors (e.g., i.i.d. errors), additional lags are generally redundant excluded instrumental variables; the additional moment conditions do not generate an asymptotically more efficient estimator than OLS. However, such redundancy does not extend to a general LP regression at horizons larger than one, even under conditional homoskedasticity. The optimally weighted GMM can produce a strict efficiency gain over OLS, especially when the data are strongly persistent. The efficiency gain arises primarily from the serial correlation of the GMM (residualized) score process, even when the OLS (residualized) score process can be serially uncorrelated \citep{montiel2021local}. Under conditional heteroskedasticity, as our general inferential theory allows, using additional lags as excluded instruments generally yields non-redundant moment conditions (even for the horizon-one LP regression; see \citet{kuer02ET}).

Regarding robustness, we show that controlling for a sufficient number of lags in the LP regression remains crucial for valid inference with our GMM estimator, robust to both data persistence and moderate model misspecification.

\textbf{Literature}. 
The terminology of instruments and GMM has appeared in the VAR/LP literature, but their uses and purposes are all different from ours. The instruments used in the current paper are internal lags that improve efficiency. The method applies to reduced-form impulse responses and, under identification, to structural responses as well. Our use of instruments differs from the recent literature on internal and external instruments for \textit{structural identification} \citep{stock2018identification, lewis2021volatility}. \citet{inoue2026inference} similarly refer to their method as GMM LP, but their approach follows the literature on structural identification. \citet{JordaKoz2011} use GMM and minimum-distance methods to identify and estimate the structural parameter in the solution of a linear rational expectations model. 

Instruments used for structural identification can be subject to the well-known weak-IV problem \citep{Mont-SW}, whereas our use of instruments generally is not. Under the standard regularity assumptions, the extra weakly-correlated lags can at most reduce or eliminate the efficiency gain (over OLS) from overidentification, and do not themselves threaten the identification or inference.

\citet{wang2024} use reduced-form VAR innovations as instruments to conduct two-step estimation and associated inference for the generalized impulse responses and multi-horizon Granger causality in the sense of \citet{dufour1998short}. Their motivation is mostly the simplicity of the (persistence-robust) inference (no HAC variance estimator needed, albeit under an additional assumption on innovations) and the ability to handle a larger class of causal parameters. Their instruments \textit{exactly identify} the model, thus the resulting estimator of the impulse response can be more or less efficient than OLS. The current paper focuses exclusively on impulse responses, and uses extra lagged outcomes to construct overidentification restrictions which guarantee no asymptotic efficiency loss relative to OLS.

\citet{lusompa2023local} also exploits the serial correlation in LP regression errors, by constructing a generated outcome variable, which involves estimates of nuisance parameters
(i.e., the full set of impulse responses). Our method is thus simpler to implement:
it relies on a standard GMM estimator with readily available instrumental variables and can
be carried out using off-the-shelf software. We further develop robust inference and allow
for model misspecification, features that are not addressed in Lusompa’s analysis.\footnote{A recent paper by \cite{DeVos2026} warns that GLS (generalized least squares) local projections may achieve efficiency gains at the expense of robustness.}

Relatedly, recently developed smoothing, shrinkage, and Bayesian methods\footnote{See \cite{barnichon2019smooth}, \cite{ferreira2025blp}, \cite{huber2025gsulp}, among others.} can improve precision of the LP by imposing structure across horizons and borrowing information across horizon-specific estimates. By contrast, our approach leaves the local-projection estimand unrestricted and obtains efficiency solely from valid overidentifying moments.

In this paper we consider the finite-order model, following a strand of the literature (surveyed in \cite{olea2024double}) that views the low-order models commonly used in practice as approximations to an underlying infinite-order DGP. Recent work by \citet{xu2023local,xu2026bound} establishes conditions under which LP OLS attains the efficiency bound as the lag order diverges. Thus, under these conditions, we do not expect our GMM estimator to deliver nontrivial efficiency gains.

\textbf{Outline}. Section \ref{sec:implementation} presents the method and its implementation. Section~\ref{sec:eff} derives the actual efficiency gain in a scalar AR(1) model. Section~\ref{sec:uniform} develops uniform inference theory under correct model specification, and Section \ref{sec:local} extends the analysis to the locally misspecified model. Sections \ref{sec:mc} and \ref{sec:application} report Monte Carlo evidence and an empirical application. Technical details are provided in the appendices.

\textbf{Notation}. For two random (column) vectors $y$ and $x$, define $P(y\mid x)=B'x$ as the population linear projection, where $B=[\mathbb E(xx')]^{-1}\mathbb E(xy')$ is the projection coefficient matrix. For covariance-stationary random vectors $x_t$ and $y_t$, define
$\operatorname{LRV}(x_t)=\sum_{k=-\infty}^{\infty}
\operatorname{Cov}(x_t,x_{t-k})$ and
$\operatorname{LRCov}(y_t,x_t)=\sum_{k=-\infty}^{\infty}
\operatorname{Cov}(y_t,x_{t-k})$. For symmetric matrices, $\lambda_{\min}(\cdot)$ and
$\lambda_{\max}(\cdot)$ denote the extreme eigenvalues. For rectangular
matrices, $\lambda_{\min}(\cdot)$ denotes the smallest singular value. The sum $\sum_{\ell=a}^{b}x_{\ell}$ is defined as zero if $a>b$.

\section{Overidentified GMM local projections}
\label{sec:implementation}
In this section we outline the method in a general setting. The formal analysis is provided in Sections \ref{sec:uniform} and \ref{sec:local}.
\subsection{Model and target parameter}

Let $\{y_t: t=1,\cdots,T\}$ be a $K$-dimensional process following the
VAR$(d)$ model,
\begin{equation}
\label{eq:vard}
y_t
=
A_1 y_{t-1}+\cdots+A_d y_{t-d}+u_t,
\end{equation}
where $\{u_t\}$ is a mean-zero white noise innovation sequence. Let
$\Sigma_u:=\mathbb E[u_tu_t']$. Deterministic terms
are omitted in our analysis to maintain the focus. In practice, we can think of $y_t$ as the data after removing deterministic components (e.g., by demeaning or detrending). The finite-order VAR in \eqref{eq:vard} is the maintained model; Section~\ref{sec:local} allows departures from the model \eqref{eq:vard}.

For any $p\ge d$, define the lag vector $
Y_{t,p}:=
(y_t',y_{t-1}',\cdots,y_{t-p+1}')'
\in\mathbb R^{Kp}.$
For a given forecast horizon $h\ge1$, let $y_{1,t+h}:=\nu_1'y_{t+h}$, where $\nu_1$ is the first column of $I_K$. The target is the coefficient $\beta_h$ on $y_t$ in the following horizon-$h$
regression (known as the local projection (LP)),
\begin{align}
\label{eq:vard_lp}
y_{1,t+h}
&=
(\beta_h',\eta_{h,p}')Y_{t,p}+e_{t,h},
\qquad t=p,\cdots,T-h,
\end{align}
where $e_{t,h}$ is the projection error. Under \eqref{eq:vard}, the local projection residual follows $e_{t,h}=\nu_1'\sum_{\ell=0}^{h-1}\Psi_\ell u_{t+h-\ell}$, with $\Psi_\ell=J_d\mathbf A^\ell J_d'$, $\mathbf A$ the $dK\times dK$ companion matrix of $\{A_1,\cdots,A_d\}$,
and $J_d=[I_K,0,\cdots,0]$ being $K
\times dK$. Here $\beta_h\in\mathbb R^K$ collects the
coefficients on $y_t$ and $\eta_{h,p}\in\mathbb R^{K(p-1)}$ those on the included
lag controls. The two leading cases are $p=d$, the usual local projection
\citep{jorda2005estimation}, and $p=d+1$, the lag-augmented local projection (LALP)
\citep{montiel2021local}.

The coefficient $\beta_h$ in population is invariant to the lag blocks included
beyond the VAR order: once $Y_{t,d}$ is included, further lags have no
predictive content for $y_{1,t+h}$ under \eqref{eq:vard}, so their population
coefficients are zero. Importantly, these lags can nonetheless affect the behavior
of the estimator.

\subsection{Optimal weighting and HAC inference}

The LP regression \eqref{eq:vard_lp} is commonly estimated by OLS. OLS is simple but only uses a subset of
moment restrictions available: for any $q\ge p$, the deeper lag stack $Y_{t,q}$ is also
orthogonal to $e_{t,h}$, and OLS leaves these moments unused. An extended set of valid lag moments can therefore enhance asymptotic efficiency. We
treat \eqref{eq:vard_lp} as a linear GMM problem indexed by the integers $(p,q)$.

\vspace{0.5\baselineskip}

\noindent\textbf{Definition}. The $\operatorname{GMM}(p,q)$ estimator of $\beta_h$ (formally defined in \eqref{gmm_iv} below) is the optimally weighted GMM estimator of the regression
\eqref{eq:vard_lp} using $Y_{t,q}$ as the instruments for the regressors $Y_{t,p}$, where
$q\ge p\ge d$.

\vspace{0.5\baselineskip}

To formalize the estimator, define the sample moment function $ \overline  g_{h,p,q}(\beta_h,\eta_{h,p})
=\\
T^{-1}\sum_{t=q}^{T-h}
Y_{t,q}
\Bigl(
y_{1,t+h}
-
(\beta_h',\eta_{h,p}')Y_{t,p}
\Bigr)$. For an optimal weighting matrix estimate, let $\widehat e_{t,h}$ be a first-step consistent estimate of the projection
error in \eqref{eq:vard_lp}. We use $\widehat e_{t,h}
=
y_{1,t+h}
-
(\widehat\theta_h^{\operatorname{LALP}})'Y_{t,d+1},$ the OLS LALP
residual, formed by plugging in $\widehat\theta_h^{\operatorname{LALP}}$, the
vector of all coefficients in \eqref{eq:vard_lp} with $p=d+1$.\footnote{Other choices of $\widehat e_{t,h}$ are possible, including OLS
local-projection residuals without lag augmentation or GMM residuals, the
latter leading to an iterated GMM procedure. Any first-step choice, for which
the substituted residual does not affect the probability limit of the
feasible long-run variance estimator, leaves the first-order limit of the GMM
local projection estimator unchanged.} Setting $\widehat s_{t,q,h}=Y_{t,q}\widehat e_{t,h}$, we define the HAC (heteroskedasticity and autocorrelation consistent) score variance estimator
$\widehat{\Omega}_{q,h}
=
\sum_{\ell=-b_T}^{b_T}
k\!\left(\frac{\ell}{b_T}\right)
\widehat{\Gamma}_{q,h}(\ell),$
where $\widehat{\Gamma}_{q,h}(\ell)
=
T^{-1}\sum_{t=\ell+q}^{T-h}
\widehat s_{t,q,h}\widehat s_{t-\ell,q,h}'$
for $\ell\ge0$,
$\widehat{\Gamma}_{q,h}(-\ell)
=
\widehat{\Gamma}_{q,h}(\ell)'$, and $k(\cdot)$ is a bounded symmetric kernel, continuous at zero, with
$k(0)=1$, and $b_T$ is a truncation parameter. 

For $q>p$, the population analog of the score process $\widehat s_{t,q,h}$ (or the one after partialling-out (Section \ref{subsec:optimal_iv})) is generally serially correlated (even when the error $u_t$ is assumed to be independent); see Section \ref{subsec:lalp_ar1} for an illustration under the AR(1) model. Thus the HAC-type variance estimator is generally required.\footnote{For the OLS estimator of $\beta_h$, see \citet{montiel2021local} and \citet{xu2023local} for the possibility of avoiding the HAC variance estimator for valid inference.}

The feasible optimally weighted estimator is $\{\widehat{\beta}_h^{\mathrm{GMM}(p,q)},\widehat{\eta}_{h,p}^{\mathrm{GMM}(p,q)}\}
\\=
\arg\min_{\beta_h,\,\eta_{h,p}}
\overline g_{h,p,q}(\beta_h,\eta_{h,p})'
\widehat{\Omega}_{q,h}^{-1}
\overline g_{h,p,q}(\beta_h,\eta_{h,p}).
$
The block of
interest has the closed form
\begin{equation}
\label{gmm_iv}
\widehat{\beta}_h^{\mathrm{GMM}(p,q)}
=
J_p
\left(
\widehat{\Sigma}_{p,q}\widehat{\Omega}_{q,h}^{-1}\widehat{\Sigma}_{p,q}'
\right)^{-1}
\widehat{\Sigma}_{p,q}\widehat{\Omega}_{q,h}^{-1}\widehat m_{q,h},
\end{equation}
where $\widehat m_{q,h}=T^{-1}\sum_{t=q}^{T-h}Y_{t,q}y_{1,t+h}$ and
$\widehat\Sigma_{p,q}=T^{-1}\sum_{t=q}^{T-h}Y_{t,p}Y_{t,q}'$. 
 When $q=p$, GMM($p,q$) reduces to OLS with $p$ regressors. When $q>p$, the excluded lags generate overidentifying moments,
and $\widehat\Omega_{q,h}^{-1}$ enters the estimator itself, selecting the
efficient combination of IVs; the long-run covariance is then relevant for both
estimation and inference. The variance estimator for $\widehat{\beta}_h^{\mathrm{GMM}(p,q)}$ is defined as
\begin{equation}
\label{eq:Vpq_hat}
\widehat V_h^{\mathrm{GMM}(p,q)}
=
J_p
\left(
\widehat{\Sigma}_{p,q}\widehat{\Omega}_{q,h}^{-1}\widehat{\Sigma}_{p,q}'
\right)^{-1}
J_p'.
\end{equation}
Standard errors are the square roots of the diagonal entries of
$T^{-1}\widehat V_h^{\mathrm{GMM}(p,q)}$. The construction of the variance estimator appears to rely on stationarity. As we show in Section \ref{sec:uniform}, however, the resulting inference remains robust to (near-)unit roots.

\subsection{Exactly-identified optimal IV implementation}
\label{subsec:optimal_iv}

It is useful to have an equivalent form of \eqref{gmm_iv} which separates out the parameter vector of interest. It is obtained as an exactly-identified IV representation after Frisch--Waugh--Lovell residualization.

Write $Y_{t,p}=(y_t',w_{t,p}')'$, where $w_{t,p}=(y_{t-1}',\cdots,y_{t-p+1}')'$ collects the controls. Write the instrument vector $Y_{t,q}=(z_{t,p,q}',w_{t,p}')'$, where $z_{t,p,q}=(y_t',y_{t-p}',\cdots,y_{t-q+1}')'$.

Projecting $y_t$ and $z_{t,p,q}$ on $w_{t,p}$ over
$t=q,\cdots,T-h$, gives residuals $\widehat y_t^{\perp}$ and
$\widehat z_{t,p,q}^{\perp}$, respectively. Let
$\widehat \Sigma_{p,q}^{\perp}=T^{-1}\sum_{t=q}^{T-h}
\widehat y_t^{\perp}(\widehat z_{t,p,q}^{\perp})'$. Let
$\widehat\Omega_{p,q,h}^{\perp}$ be defined like $\widehat{\Omega}_{q,h}$ except that $\widehat z_{t,p,q}^{\perp}\widehat e_{t,h}$ is used in place of $Y_{t,q}\widehat e_{t,h}$. The optimal instrument is the $K$-vector
$\widehat z_{t,p,q}^*
=
\widehat \Sigma_{p,q}^{\perp}
\bigl(\widehat\Omega_{p,q,h}^{\perp}\bigr)^{-1}
\widehat z_{t,p,q}^{\perp}.$
Define
$
\widehat\beta_h^{\operatorname{OIV}(p,q)}
=
\Biggl(
\sum_{t=q}^{T-h}
\widehat z_{t,p,q}^*\,\widehat y_t^{\perp\prime}
\Biggr)^{-1}
\sum_{t=q}^{T-h}
\widehat z_{t,p,q}^*\,y_{1,t+h},
$
obtained by solving the \emph{exactly identified} IV regression, and the variance matrix estimator
$\widehat V_h^{\operatorname{OIV}(p,q)}
=
\Bigl(
\widehat \Sigma_{p,q}^{\perp}
\bigl(\widehat\Omega_{p,q,h}^{\perp}\bigr)^{-1}
\widehat \Sigma_{p,q}^{\perp\prime}
\Bigr)^{-1}.$

We can also define the population quantities under the stationarity assumption of $y_t$, which are useful in the efficiency gain calculation (Section \ref{sec:eff}). Under stationarity, the asymptotic variance
of $\widehat{\beta}_h^{\mathrm{GMM}(p,q)}$ is $V_h^{\mathrm{GMM}(p,q)}
=
J_p
\left(
\Sigma_{p,q}\Omega_{q,h}^{-1}\Sigma_{p,q}'
\right)^{-1}
J_p',
$ where $\Omega_{q,h}
:=
\sum_{\ell=-\infty}^{\infty}
\mathbb E
\left[
Y_{t,q}e_{t,h}e_{t-\ell,h}Y_{t-\ell,q}'
\right]$, $\Sigma_{p,q}:=\mathbb E[Y_{t,p}Y_{t,q}']$ and
$Q_{p,q,h}:=\Sigma_{p,q}\Omega_{q,h}^{-1}\Sigma_{p,q}'$. 

To separate out the block of interest, define
$y_t^{\perp}:=y_t-P(y_t\mid w_{t,p})\in\mathbb R^{K}$, the
residualized regressor of interest, and
$z_{t,p,q}^{\perp}:=z_{t,p,q}-P(z_{t,p,q}\mid w_{t,p})\in\mathbb R^{(q-p+1)K}$, the residualized instruments. Both are orthogonal to $w_{t,p}$. Define $\Sigma_{p,q}^{\perp}
:=
\mathbb E\!\left[y_t^{\perp}(z_{t,p,q}^{\perp})'\right],$ and
$
\Omega_{p,q,h}^{\perp}
:=
\sum_{\ell=-\infty}^{\infty}
\mathbb E\!\left[
z_{t,p,q}^{\perp}e_{t,h}\,e_{t-\ell,h}(z_{t-\ell,p,q}^{\perp})'
\right]$. 
Write the $K\times K$ matrix
\begin{equation}
\label{eq:Vpq_pop_2}
V_h^{\operatorname{OIV}(p,q)}
=
\left(
\Sigma_{p,q}^{\perp}
\bigl(\Omega_{p,q,h}^{\perp}\bigr)^{-1}
\Sigma_{p,q}^{\perp\prime}
\right)^{-1}.
\end{equation}

\begin{proposition}[Optimal-IV representation of GMM local projections]
\label{prop:optimal_iv_gmm}
(i) Suppose $\widehat{\Omega}_{q,h}$ and $\widehat\Omega_{p,q,h}^{\perp}$ are nonsingular and are built from the same residuals, kernel, and
bandwidth. Then $\widehat\beta_h^{\mathrm{GMM}(p,q)}
=\widehat\beta_h^{\mathrm{OIV}(p,q)}$ and 
$\widehat V_h^{\mathrm{GMM}(p,q)}
=
\widehat V_h^{\mathrm{OIV}(p,q)}.$ (ii) Under stationarity, $V_h^{\operatorname{GMM}(p,q)}
=
V_h^{\operatorname{OIV}(p,q)}$.
\end{proposition}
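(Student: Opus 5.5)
The plan is to reduce both parts to one linear-algebra fact: a linear GMM problem is invariant to a nonsingular linear reparametrization of the instruments. Combined with Frisch--Waugh--Lovell for the partialled-out regressors, this gives the result. Order the instruments as $Y_{t,q}=(z_{t,p,q}',w_{t,p}')'$ and the regressors as $Y_{t,p}=(y_t',w_{t,p}')'$; since reordering rows only permutes $\widehat\Sigma_{p,q}$, $\widehat\Omega_{q,h}$ and $\widehat m_{q,h}$ conformably, it leaves \eqref{gmm_iv} and \eqref{eq:Vpq_hat} unchanged. Let $\widehat\Pi$ be the sample projection coefficient of $z_{t,p,q}$ on $w_{t,p}$ over $t=q,\dots,T-h$. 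Define the block-triangular, unit-determinant matrix
\[
M=\begin{pmatrix} I & -\widehat\Pi' \\ 0 & I\end{pmatrix},
\qquad
MY_{t,q}=\begin{pmatrix}\widehat z_{t,p,q}^{\perp}\\ w_{t,p}\end{pmatrix}.
\]
Because $\widehat s_{t,q,h}$ is linear in $Y_{t,q}$ and the same residuals $\widehat e_{t,h}$, kernel and bandwidth are used, we get $\widehat\Omega$ built from $MY_{t,q}$ equal to $M\widehat\Omega_{q,h}M'$. The moment function transforms to $M\overline g$, so the objective $\overline g'\widehat\Omega^{-1}\overline g$ is unchanged. Hence the GMM estimator and $(\widehat\Sigma\widehat\Omega^{-1}\widehat\Sigma')^{-1}$ are invariant, and we may work with the transformed instruments. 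Their HAC matrix has upper-left block $\widehat\Omega_{p,q,h}^{\perp}$, off-diagonal block built from $\widehat z^{\perp}\widehat e$ and $w\widehat e$, and lower-right block built from $w\widehat e$.

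Next, treat $\eta_{h,p}$ as a nuisance parameter and concentrate it out. For the transformed system, the $w$-moments $T^{-1}\sum w_{t,p}(y_{1,t+h}-\beta'y_t-\eta'w_{t,p})$ are exactly as many as $\eta$, so they are exactly identifying for $\eta$ given $\beta$. Concentrating them out (setting them to zero) gives $\eta(\beta)$. I expect this to be the main obstacle: the off-diagonal HAC block is nonzero, and a general over-identified GMM with exactly identified nuisance blocks does not obviously reduce this way.

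To handle it, invoke the standard redundancy result: adding moment conditions that exactly identify an additional parameter leaves the efficient estimator of the original parameters unchanged. In closed form, write the Jacobian $G=\widehat\Sigma M'$ in block form. The $w$ rows give a block $\widehat\Sigma_{ww}$ that is square and nonsingular, and the upper-left block is $\widehat\Sigma^{\perp}_{p,q}$ after substitution. Then the partitioned inverse of $G\widehat\Omega^{-1}G'$ shows that $J_p(\cdot)^{-1}J_p'$ equals $(\widetilde\Sigma\,\widehat\Omega_{zz}^{-1}\widetilde\Sigma')^{-1}$, where $\widetilde\Sigma$ is the $\beta$-Jacobian of the $z^{\perp}$ moments after eliminating $\eta$. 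Because $\widehat z^{\perp}$ is orthogonal in sample to $w$, the $z^{\perp}$ moment at $\eta(\beta)$ equals $T^{-1}\sum \widehat z^{\perp}(y_{1,t+h}-\beta'\widehat y_t^{\perp})$. This uses the same FWL orthogonality that replaces $y_t$ by $\widehat y_t^{\perp}$. Its Jacobian is therefore $\widehat\Sigma^{\perp\prime}_{p,q}$, so $\widetilde\Sigma=\widehat\Sigma_{p,q}^{\perp}$.

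It follows that $\widehat\beta^{\rm GMM}$ minimizes
\[
\bar g_z(\beta)'(\widehat\Omega^{\perp}_{p,q,h})^{-1}\bar g_z(\beta),
\]
whose first-order condition is $\sum\widehat z^*_{t}(y_{1,t+h}-\widehat y_t^{\perp\prime}\beta)=0$. This is exactly $\widehat\beta^{\rm OIV}$, and the variance formula matches $\widehat V^{\rm OIV}$. The weighting used for the $z$-block is the Schur-complement form arising in the partitioned inverse; the block algebra to verify is that only the $zz$ block survives, because the $w$ block exactly identifies $\eta$. Part (ii) repeats the same algebra with population projections $P(\cdot\mid w_{t,p})$, using $\mathbb E[z^{\perp}w']=0$ and $\mathbb E[y^{\perp}w']=0$ in place of their sample counterparts, so that $V^{\rm GMM}=V^{\rm OIV}$.
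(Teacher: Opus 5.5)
Your proof is correct and rests on the same mechanism as the paper's. After residualizing the excluded lags, the $w_{t,p}$-moments form a square, nonsingular block that exactly identifies $\eta_{h,p}$. The nonzero cross block $\widehat\Xi$ of the transformed HAC matrix therefore drops out, and only $\widehat\Omega_{p,q,h}^{\perp}$ survives as the weight.

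The packaging differs. The paper also residualizes the regressors with $\widehat\Phi$, using $J_p\widehat\Phi'=J_p$. This makes the cross-moment matrix exactly $\operatorname{diag}(\widehat\Sigma_{p,q}^{\perp},\widehat W_p)$, and a single block identity for $(D\Omega^{-1}D')^{-1}D\Omega^{-1}$ then gives the estimator and the variance at once, with part (ii) following verbatim. You leave the regressors untouched, so your Jacobian is only block upper triangular, with a nonzero $(y_t,w_{t,p})$ block. You then use FWL orthogonality to see that the $z^{\perp}$-moments involve only $\widehat y_t^{\perp}$, get the estimator from a separability argument, and get the variance from a separate partitioned inverse. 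Your route shows more transparently why $\widehat\Xi$ is irrelevant; the paper's is fully self-contained.

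One step needs tightening: the parenthetical ``setting them to zero'' is not what the joint minimization does, and taken literally it gives the wrong answer.
\begin{itemize}
\item $\bar g_z$ does not depend on $\eta$ at all, because $\sum_t\widehat z_{t,p,q}^{\perp}w_{t,p}'=0$.
\item For fixed $\beta$, the map $\eta\mapsto\bar g_w$ is an affine bijection, because $\widehat W_p$ is nonsingular.
\item Hence minimizing over $\eta$ is the same as minimizing $(\bar g_z',b')(M\widehat\Omega_{q,h}M')^{-1}(\bar g_z',b')'$ freely over $b$.
\item The minimum is $\bar g_z'(\widehat\Omega_{p,q,h}^{\perp})^{-1}\bar g_z$, attained at $b=\widehat\Xi'(\widehat\Omega_{p,q,h}^{\perp})^{-1}\bar g_z$, which is generally nonzero.
\end{itemize}
Imposing $b=0$ would instead weight $\bar g_z$ by $(\widehat\Omega_{p,q,h}^{\perp}-\widehat\Xi\widehat\Omega_{ww}^{-1}\widehat\Xi')^{-1}$. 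For $q>p$ that generally produces a different estimator.

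Written this way, the step replaces the appeal to a redundancy result with a one-line proof that holds exactly in finite samples. It also gives the variance identity with no further algebra. The Hessian of the profiled quadratic, $2\widehat\Sigma_{p,q}^{\perp}(\widehat\Omega_{p,q,h}^{\perp})^{-1}\widehat\Sigma_{p,q}^{\perp\prime}$, is the Schur complement of the $\eta$-block in the full Hessian $2\widehat\Sigma_{p,q}\widehat\Omega_{q,h}^{-1}\widehat\Sigma_{p,q}'$. Its inverse is therefore the leading block $J_p(\cdot)^{-1}J_p'$, and the factors of $2$ cancel.
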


Proposition~\ref{prop:optimal_iv_gmm} is proved in
Appendix~\ref{app:proof_optimal_iv}. Part (i) is a finite-sample identity.\footnote{It extends the finite-sample equivalence result on the residualized-IV representation of \citet{basu2024frisch} to a setting
where the weighting is governed by serial dependence in the moment process.} We will implement the GMM in its optimal IV form; the automatic truncation selection in the HAC variance estimator directly targets the (effective) residualized score for the parameter of interest.

\textbf{\textsc{Lag orders}}. What values of $p$ and $q$ should be used in practice? Our recommendations are $p=d+1$ and $q=d+2$, after the researcher has chosen the model order $d$, typically through some data-dependent information criterion. The resulting GMM$(d+1,d+2)$ is therefore an overidentified version of the LALP of \citet{montiel2021local} with a single set of overidentifying restrictions. This recommendation is based on the fact that the lag-augmented GMM estimator shares the robustness properties of the LALP estimator which have been recently established \citep{montiel2021local,olea2024double}, as we will show in Sections \ref{sec:uniform} and \ref{sec:local} in our setting. Although more overidentifying conditions can theoretically produce larger efficiency gains, too many of them can also lead to poor performance in finite samples, especially when the optimal weighting matrix is estimated. In the next section, we will illustrate in a simple model that even one single set of overidentification conditions can in fact generate a large portion of the  efficiency gain that is generated by using all possible lags as instruments.

\subsection{Non-redundancy of the overidentifying restrictions}
\label{subsec:Non-Redundancy}

While the GMM never has a larger asymptotic variance than OLS with the same set of regressors, it is natural to ask when the overidentifying restrictions provide a \textit{strict} efficiency gain. In Appendix \ref{app:gmm_vs_ols}, we provide an analysis of a general time series regression with serially correlated IVs (which accommodates the LP regression \eqref{eq:vard_lp}) to address this question, connecting to the literature on the redundancy of the moment conditions for the GMM estimator (\cite{breusch1999redundancy}, \cite{hall2007information}).

The analysis shows that the redundancy occurs only when two population projection coefficient matrices are equal. The first projection is the ordinary projection of the excluded IVs (i.e. $z_{t,p,q}^{\perp}$, in the context of the LP regression \eqref{eq:vard_lp}, using the notation in Section \ref{subsec:optimal_iv}) on the regressors of interest ($y_t^{\perp}$), after the controls are partialed out. The second projection is the \textit{long-run} projection of the orthogonal moment process ($z_{t,p,q}^{\perp}e_{t,h}$) on the regression score process ($y_t^{\perp}e_{t,h}$), after the controls are partialed out. The  two projection coefficient matrices are generally different (thus the added lags carry first-order information), and the main driving factors of the non-equivalence are the serial correlation of the residualized moment process (e.g. the LP with $h>1$), and the conditional heteroskedasticity of the innovation $u_t$. In the case of LALP as we study in the next subsection, the non-redundancy condition for the extra IVs reduces to a particularly simple one.

\subsection{Lag augmentation}
\label{subsec:lag_aug}
We now focus on the LP with the lag-augmented regressor set, i.e. $p=d+1$. We address two questions: the source of the efficiency gain, and the serial correlation in the GMM moment process (why the HAC inference is generally needed).

Assume stationarity in this subsection for illustrative purposes. With lag augmentation we have $y_t^{\perp}=u_t$. Also,
$z_{t,p,q}^{\perp}=(u_t',v_{t-1}')'$, where $v_{t-1}=
\sum_{j=0}^{\infty}C_j u_{t-1-j}$ is the projection residual of $y_{t-d-1}$ on $y_{t-1},\cdots,y_{t-d}$, with the coefficients $C_j$ being square summable in their norms.\footnote{Note that $P(y_{t-d-1}\mid y_{t-1},\cdots,y_{t-d})$ is a reverse projection; it projects on future variables, not on the past, as usual.} Thus $\mathbb{E}[y_t^\perp z_{t,p,q}^{\perp\prime}]=(\Sigma_u,0_K)$.

With lag augmentation, the off-diagonal long-run term in $\operatorname{LRV}\!\left(z_{t,p,q}^\perp e_{t,h}\right)$ is what drives the efficiency gains from overidentification. If we used a diagonal weighting matrix (like the identity matrix),  GMM would collapse to LALP; the formula \eqref{eq:Vpq_pop_2} would show that they had the same asymptotic variance. The efficiency gain of the GMM lives entirely in the long run covariance between $u_t e_{t,h}$ and $v_{t-1} e_{t,h}$.

\cite{montiel2021local} show that with the lag-augmented regressor set, the OLS (residualized) LP regression score process for the impulse response can be serially uncorrelated, under appropriate assumptions on the higher-order dependence in innovations (i.e. mean independence of its own past and future; their Assumption 1). Thus the standard Eicker-Huber-White HC inference remains asymptotically valid, without the need to correct for the serial correlation in the score as done by the HAC inference.\footnote{See \citet{xu2023local} for the justification of the Eicker-Huber-White HC inference under the weaker assumption of MDS innovations with a zero fourth-order cumulant condition.} By contrast, the GMM moment process $z_{t,p,q}^\perp e_{t,h}$ is generally serially correlated (even with the mean-independent or i.i.d. innovations), because of the reverse-regression residual $v_{t-1}$. We illustrate these points in the simple model in the next section.

\section{Illustration with AR(1) Data}
\label{sec:eff}

This section shows how overidentification can improve the asymptotic efficiency of the
lag-augmented local projection through a simple example.

Consider the simplest dynamic model for the scalar outcome, AR(1),
\begin{align}
\label{eq:ar1}
y_t = \alpha\, y_{t-1} + u_t, \ \ \ \ \ u_t \overset{\text{i.i.d}}{\sim} (0,\sigma^2),
\end{align}
with $|\alpha|\le 1$, $\mathbb{E}u_t=0$, and $\mathbb{E}u_t^2=
\sigma^2>0$. The GMM method proposed in this paper applies to both the standard LP (using $y_t$ as the only regressor) and the lag-augmented LP (LALP) (using $y_t$ and $y_{t-1}$ as the regressors). We focus on our recommended GMM-LALP regression here, and the corresponding results for the GMM-LP under the AR(1) model are provided in Appendix \ref{app:gmm_lp}.

For $h\ge1$, under the AR(1) model, the LALP regression is $y_{t+h}
=
\beta_h y_{t}+\eta_{h}y_{t-1}+e_{t,h}$, where
$\beta_h=\alpha^h$, $\eta_{h}=0$, and $e_{t,h}=\sum_{i=0}^{h-1}\alpha^i u_{t+h-i}$.

\subsection{The first overidentified estimator}
\label{subsec:lalp_ar1}

We first consider the simplest overidentified estimator, $\operatorname{GMM}(2,3)$. As the benchmark, the LALP estimator of \citet{montiel2021local} partials out $y_{t-1}$ and reduces the regressor effectively to $y_t-\alpha y_{t-1}=u_t$, so the
identifying variation is the innovation in $y_t$. LALP uses this single moment, and has the asymptotic variance
$V_h^{\mathrm{LALP}}
=
\sum_{j=0}^{h-1}\alpha^{2j}.$
With one overidentifying restriction, $\operatorname{GMM}(2,3)$ uses $y_{t-2}$ as the single excluded instrument, besides the regressors $y_{t}$ and $y_{t-1}$.
Specializing the analysis of Section \ref{subsec:lag_aug} to the AR(1) model, and using the notation there, we can show (Appendix~\ref{subsec:hac_gmmlalp}) that if $|\alpha|<1$, the residualized excluded instrument is  
\begin{align}
\label{eq:reverse_residual}
v_{t-1}:=y_{t-2}-P(y_{t-2}\mid y_{t-1})
=
-\alpha u_{t-1}
+
(1-\alpha^2)
\sum_{j=0}^{\infty}\alpha^j u_{t-2-j}.
\end{align}

The residualized regressor of interest is $y_t^\perp=y_t-P(y_t\mid y_{t-1})
=u_t$. So the residualized instrument vector is $z_t^\perp=(u_t,v_{t-1})'$. Thus $\mathbb{E}[y_t^\perp z_t^{\perp\prime}]=\sigma^2(1,0)$. Then
\begin{align}
\label{eq:resid_lrv_gmmlalp}
\operatorname{LRV}\!\left(z_t^\perp e_{t,h}\right)
=
\sigma^4
\begin{pmatrix}
\sum_{j=0}^{h-1}\alpha^{2j} & -(h-1)\alpha^{2h-2}\\
-(h-1)\alpha^{2h-2} & \sum_{j=0}^{h-1}\alpha^{2j}
\end{pmatrix},
\end{align}
which is non-diagonal whenever $h>1$ and $\alpha\ne0$. 

Equation \eqref{eq:resid_lrv_gmmlalp} also shows the serial correlation in $z_t^\perp e_{t,h}$. The
contemporaneous covariance matrix of $z_t^\perp e_{t,h}$, $\operatorname{Var}\left(z_t^\perp e_{t,h}\right)=\sigma^4\left(\sum_{j=0}^{h-1}\alpha^{2j}\right)I_2$, is diagonal, and its difference from the LRV matrix arises entirely from the off-diagonal elements.

The residualized
GMM variance formula $\left\{
\mathbb{E}(y_t^\perp z_t^{\perp\prime})
\bigl[\operatorname{LRV}\!\left(z_t^\perp e_{t,h}\right)]^{-1}
\mathbb{E}(y_t^\perp z_t^{\perp\prime})'
\right\}^{-1}$ then gives
\begin{align}
\label{eq:vgmm23}
V_h^{\mathrm{GMM}(2,3)}
&=
\sum_{j=0}^{h-1}\alpha^{2j}
-
\frac{(h-1)^2\alpha^{4h-4}}{\sum_{j=0}^{h-1}\alpha^{2j}}.
\end{align}
The asymptotic variance formula \eqref{eq:vgmm23} in fact extends to the unit root case $\alpha=1$, due to the uniform limit theory we establish in Section \ref{sec:uniform}. Comparison with $V_h^{\mathrm{LALP}}
=
\sum_{j=0}^{h-1}\alpha^{2j}$ shows $V_h^{\mathrm{GMM}(2,3)}\leq V_h^{\mathrm{LALP}}$,
and the strict inequality holds for $h>1$, $\alpha\ne0$.

\begin{figure}[t]
\hspace*{-0.5cm} 
\makebox[\linewidth][c]{
\includegraphics[width=1.4\textwidth]{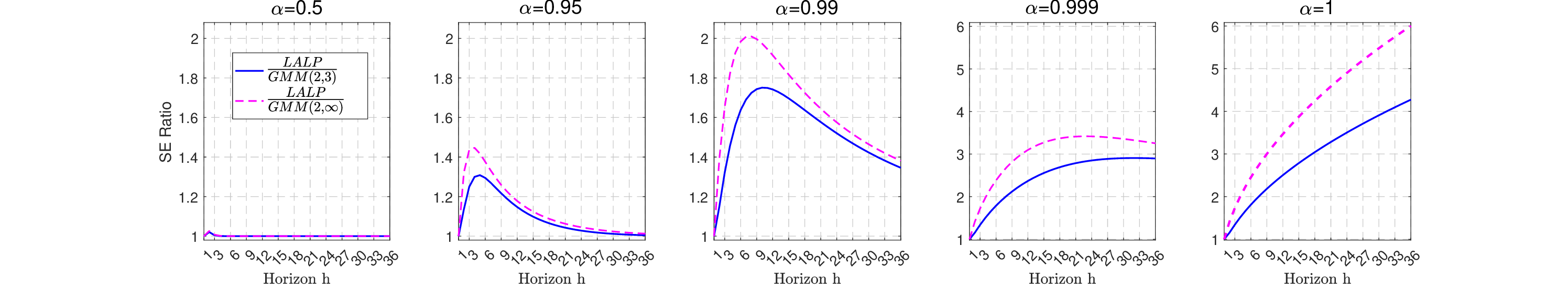}
}
    \caption{Asymptotic standard-error ratio between LALP and GMM$(2,3)$ (solid lines), and between LALP and GMM$(2,\infty)$ (dashed lines).
[AR(1) DGP as in \eqref{eq:ar1}]}
\label{fig:gmm_lalp_ratio}
\begin{minipage}{0.95\linewidth}
\footnotesize
\emph{Note:} The figure plots $\sqrt{V_h^{\mathrm{GMM}(2,2)}/V_h^{\mathrm{GMM}(2,q)}}$
for $q\in\{3,\infty\}$, where $q$ is the number of IVs. $\alpha$ is the slope of the AR(1) DGP. The just-identified $\operatorname{GMM}(2,2)$
coincides with the LALP estimator of \citet{montiel2021local}. We use the formulas in Table~\ref{tab:unit_root_variance_ordering} for the three asymptotic variances for $q\in\{2,3,\infty\}$. 
\end{minipage}
\end{figure}

Figure \ref{fig:gmm_lalp_ratio} visualizes the efficiency comparison between LALP and GMM$(2,3)$ by plotting the ratio of the square roots of their asymptotic variances. The figure shows that the efficiency gain of GMM depends on both the persistence level $\alpha$ and the horizon $h$. When the data are only weakly dependent (like $\alpha=0.5$), there is little efficiency gain. 
The biggest efficiency gain is achieved at the unit root $\alpha=1$, where 
$V_h^{\mathrm{LALP}}=h$, while 
$V_h^{\mathrm{GMM}(2,3)}=(2h-1)/h<2$: a single extra lag as the excluded IV caps the variance below
$2$ at every horizon, whereas LALP grows without bound. 

As indicated in Section \ref{subsec:lag_aug}, the efficiency gain is driven by the long-run score covariance, which specializes under the AR(1) model \eqref{eq:ar1} as $\operatorname{LRCov}\!\left(u_t e_{t,h},\,v_{t-1}e_{t,h}\right)
=
-(h-1)\sigma^4\alpha^{2h-2}$.
Exact redundancy arises only at $h=1$ (no overlap) or
$\alpha=0$ (no propagation). For $|\alpha|$ sufficiently far from unity, $(h-1)^2\alpha^{4h-4}$ is small for moderate and large $h$, leading to the asymptotic variance of GMM close to that of OLS.

\subsection{More instruments}
\label{sec:theory_ar1}

The asymptotic efficiency of the GMM$(2,3)$ estimator can be further improved if we use further lags to overidentify the LALP. For general $q$, the asymptotic variance of GMM$(2,q)$ under the DGP \eqref{eq:ar1} is given in the following proposition. We also consider the particularly interesting case in which $q$ diverges to infinity, corresponding to the efficiency gain that can be achieved by exploiting information from the infinite past.

\begin{proposition}
\label{prop:ar1_ordering}
Let $\{y_t\}$ follow \eqref{eq:ar1}. For $h\ge1$, write
$\gamma_{h,\ell}=\alpha^\ell\sum_{j=0}^{h-\ell-1}\alpha^{2j}$ for $0\le\ell\le h-1$
and $\gamma_{h,\ell}=0$ otherwise. Then:
\begin{enumerate}[label=(\roman*)]
\item For $q\ge3$,
$V_h^{\mathrm{GMM}(2,q)}=\gamma_{h,0}-r_{h,q}'A_{h,q}^{-1}r_{h,q},$
where the $(q-2)$-vector $r_{h,q}$ has entries
$(r_{h,q})_k=-(h-k)\alpha^{2h-k-1}$ for $1\le k\le\min\{h-1,q-2\}$ and zero
otherwise, and $A_{h,q}$ is the $(q-2)\times(q-2)$ Toeplitz matrix with
$(A_{h,q})_{k\ell}=\gamma_{h,|k-\ell|}$. 
\item $\lim_{q\to\infty}\displaystyle V_h^{\mathrm{GMM}(2,q)}=\gamma_{h,0}-\alpha^{2h}\gamma_{h-1,0}
=\sum_{j=0}^{h-1}\alpha^{2j}-\sum_{j=h}^{2h-2}\alpha^{2j}$.
\end{enumerate}
\end{proposition}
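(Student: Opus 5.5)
The plan is to work with the exactly identified optimal-IV form of Proposition~\ref{prop:optimal_iv_gmm}(ii) and reduce $V_h^{\mathrm{GMM}(2,q)}$ to a Schur complement. Here $p=2$, so the control is $w_t=y_{t-1}$ and $y_t^\perp=u_t$. For $|\alpha|<1$ the residualized excluded instruments are the residuals of $y_{t-2},\dots,y_{t-q+1}$ on $y_{t-1}$. The formula \eqref{eq:Vpq_pop_2} is invariant to replacing $z^\perp_{t}$ by $Mz^\perp_t$ for any nonsingular $M$, because $\Sigma^\perp$ and $\Omega^\perp$ transform congruently. So the first step is to pick a convenient basis $(u_t,\zeta_{t,1},\dots,\zeta_{t,q-2})$ of the instrument space in which every $\zeta_{t,k}$ is uncorrelated with $u_t$. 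Then $\Sigma^\perp=\sigma^2(1,0,\dots,0)$, and by partitioned inversion
\[
V_h^{\mathrm{GMM}(2,q)}=\sigma^{-4}\bigl(\Omega_{11}-\Omega_{12}\Omega_{22}^{-1}\Omega_{21}\bigr).
\]
Here $\Omega_{11}=\operatorname{LRV}(u_te_{t,h})=\sigma^4\gamma_{h,0}$, which reproduces $V_h^{\mathrm{LALP}}$. It therefore remains to show $\Omega_{21}=\sigma^4 r_{h,q}$ and $\Omega_{22}=\sigma^4A_{h,q}$ for a suitable basis.

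\textbf{Choice of basis.} I would choose $\zeta_{t,k}$ as (a rescaling of) the innovation-type combinations $v_{t-1}$ from \eqref{eq:reverse_residual} and its analogues for deeper lags, namely the residuals of $y_{t-k-1}$ on $(y_{t-1},\dots,y_{t-k})$ Gram--Schmidt-orthogonalized. These are linear in $u_{t-1},u_{t-2},\dots$ and mutually uncorrelated. The target structure guides the choice. Since $\gamma_{h,\ell}=\sigma^{-2}\operatorname{Cov}(e_{t,h},e_{t-\ell,h})$, the Toeplitz form of $A_{h,q}$ should arise as follows. With i.i.d.\ $u$, the lag-$m$ terms in $\operatorname{LRV}(\zeta_{t,k}e_{t,h},\zeta_{t,l}e_{t,h})$ factor into $\mathbb E[\zeta_{t,k}\zeta_{t-m,l}]\,\mathbb E[e_{t,h}e_{t-m,h}]$, plus cross terms in which $u$'s inside $e_{t-m,h}$ coincide with $u$'s inside $\zeta_{t,k}$. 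The basis should make these collapse to $\gamma_{h,|k-l|}$. For the cross block, $\Omega_{21}$ comes only from such overlap terms, since $\mathbb E[u_t\zeta_{t-m,k}]=0$ for $m\ge0$. Counting the overlapping pairs gives the factor $(h-k)$ and the power $\alpha^{2h-k-1}$. As a check, the case $k=1$ must reproduce the entry $-(h-1)\alpha^{2h-2}$ in \eqref{eq:resid_lrv_gmmlalp}.

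\textbf{Main obstacle.} The hard step is this fourth-moment bookkeeping: finding the basis in which $\Omega_{22}$ is exactly the Toeplitz matrix $A_{h,q}$. If the orthogonalized reverse residuals do not produce it directly, I would instead use the basis $\{u_{t-1},\dots,u_{t-q+2}\}$ combined with the residual of $y_{t-q+1}$, and verify the identity through the transformation $M$.

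\textbf{Part (ii).} View $A_{h,q}$ as the covariance matrix of $(e_{t-1,h},\dots,e_{t-q+2,h})/\sigma$, an MA$(h-1)$ process, and view $r_{h,q}$ as its covariance with some square-integrable variable $X$. Then $r_{h,q}'A_{h,q}^{-1}r_{h,q}$ is the variance of the projection of $X$ on a growing finite past, which increases monotonically to the projection on the infinite past. I would evaluate that limit with the Wold/innovation representation of the MA process, or equivalently a spectral-factorization computation of the Toeplitz quadratic form, and aim to obtain $\alpha^{2h}\gamma_{h-1,0}$. Since $\gamma_{h,0}-\alpha^{2h}\gamma_{h-1,0}=\sum_{j=0}^{h-1}\alpha^{2j}-\sum_{j=h}^{2h-2}\alpha^{2j}$, this gives the stated form. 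Finally, $\alpha=1$ is covered by continuity of both sides in $\alpha$, together with the uniform limit theory invoked after \eqref{eq:vgmm23}.
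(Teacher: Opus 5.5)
\textbf{Part (i).} Your route is the paper's: the residualized optimal-IV form, the instrument basis $(u_t,v_{t-1},\dots,v_{t-q+2})$, and a Schur complement.

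Your orthogonalized reverse residuals are exactly the paper's $v_{t-k}=y_{t-k-1}-\alpha y_{t-k}$. For a stationary AR(1), the projection of $y_{t-k-1}$ on $(y_{t-1},\dots,y_{t-k})$ is just $\alpha y_{t-k}$, so the Gram--Schmidt step does nothing and you need no fallback basis. The Toeplitz form of $A_{h,q}$ comes from the $k$-th instrument being the shift $v_{t-k}$ of a single white-noise sequence, i.e.\ $\mathbb E[v_{t-k}v_{t-m-l}]=\sigma^2$ if $m=k-l$ and $0$ otherwise.

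Your account of where $\Omega_{21}$ comes from is backwards, and carried out literally it returns zero. For linear forms $a,b,c,d$ in i.i.d.\ mean-zero $u$'s,
$\mathbb E[abcd]=\mathbb E[ab]\mathbb E[cd]+\mathbb E[ac]\mathbb E[bd]+\mathbb E[ad]\mathbb E[bc]+\kappa_4\sum_i a_ib_ic_id_i$.
Apply this to $\mathbb E[u_te_{t,h}\,v_{t-m-k}e_{t-m,h}]$, summed over all $m\in\mathbb Z$:
\begin{itemize}
\item The within-date pairing is zero.
\item The overlap pairing $\mathbb E[u_te_{t-m,h}]\,\mathbb E[e_{t,h}v_{t-m-k}]$ is identically zero: the first factor needs $1\le m\le h$, the second needs $m\le -k-1$.
\item The cumulant term is zero because $e_{t,h}$ does not load on $u_t$.
\end{itemize}
So all of $(r_{h,q})_k$ comes from the factorized pairing $\mathbb E[u_tv_{t-m-k}]\,\mathbb E[e_{t,h}e_{t-m,h}]$ at the leads $-(h-1)\le m\le -k$. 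These are exactly the lags your justification ``$\mathbb E[u_t\zeta_{t-m,k}]=0$ for $m\ge 0$'' leaves out. Summing them gives $\sigma^4\{-\alpha\gamma_{h,k}+(1-\alpha^2)\sum_{j\ge0}\alpha^j\gamma_{h,k+1+j}\}=-\sigma^4(h-k)\alpha^{2h-k-1}$.

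In $\Omega_{22}$ the same decomposition leaves only the factorized term at $m=k-l$, because $v_{t-k}$ loads on indices $\le t-k$ and $e_{t,h}$ on indices $\ge t+1$. This gives $\sigma^4\gamma_{h,|k-l|}$. With this correction, part (i) is complete and supplies the computation the paper states in one line.

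\textbf{Part (ii).} Here you take a genuinely different route. The paper guesses the solution $x$ of the infinite banded system $A_{h,\infty}x=r_{h,\infty}$ (with $x_{k+h}=\alpha^hx_k$), verifies it row by row, and evaluates $r_{h,\infty}'x$. In doing so it tacitly identifies $\lim_q r_{h,q}'A_{h,q}^{-1}r_{h,q}$ with that infinite quadratic form.

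Your prediction-theoretic route works once you supply a concrete $X$:
\begin{itemize}
\item Take $X=-\sigma^{-1}\sum_{j=1}^{h-1}\alpha^{2h-j-1}u_{t+h-j}$. Then $\operatorname{Cov}(X,e_{t-k,h}/\sigma)=(r_{h,q})_k$ for every $k\ge1$.
\item For $|\alpha|<1$, $\sum_{i=0}^{h-1}\alpha^iz^i$ has no zeros in the closed unit disk. So the closed span of $\{e_{t-k,h}\}_{k\ge1}$ equals that of $\{u_s\}_{s\le t+h-1}$, which contains $X$.
\item Hence $r_{h,q}'A_{h,q}^{-1}r_{h,q}$, the variance of the projection of $X$ on a growing finite past, increases to $\operatorname{Var}(X)=\sum_{j=1}^{h-1}\alpha^{4h-2j-2}=\alpha^{2h}\gamma_{h-1,0}$.
\end{itemize}

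What each route buys: yours justifies the $q\to\infty$ limit rigorously, by monotone convergence of projections, and explains the form of the answer. The paper's route instead delivers the explicit limiting optimal weights.

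On $\alpha=1$: continuity of each finite-$q$ formula in $\alpha$ does not by itself justify interchanging $q\to\infty$ with $\alpha\to1$. Your projection argument avoids this, given the finite-$q$ formula at $\alpha=1$. Since $1+z+\cdots+z^{h-1}$ has no zeros in the open unit disk, $u$ is still the innovation, and the quadratic form tends directly to $\operatorname{Var}(X)=h-1$.
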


The proof is given in
Appendix~\ref{subsec:hac_gmmlalp}. 
In Figure \ref{fig:gmm_lalp_ratio}, we plot the ratio of standard deviations between LALP and GMM$(2,\infty)$, together with the ratio of standard deviations between LALP and GMM$(2,3)$. The relative efficiency for other values of $q$ lies between these two curves. At the unit root $\alpha=1$, the IV estimator optimally weighting the infinite past brings the asymptotic variance down to 1, across all horizons $h$, compared with the previous result for GMM(2,3), $2-h^{-1}$. The results also show that a substantial efficiency gain can be obtained with \textit{only one overidentifying restriction}. Additional results (not reported) indicate that unless $q$ is much larger than 3, the further efficiency gain over GMM$(2,3)$ is quite minor (e.g. the relative efficiency gain at $q=6$ is almost indistinguishable from that at $q=3$).

\begin{table}[t]
\centering
\caption{Asymptotic variances of various estimators of the lag augmentation LP regression. [AR(1) DGP as in \eqref{eq:ar1}]}
\label{tab:unit_root_variance_ordering}
\renewcommand{\arraystretch}{1}
\setlength{\tabcolsep}{14pt}
\begin{tabular}{l|l|l}
\hline\hline
Estimator & Variance for $|\alpha|\le1$ & Variance at $\alpha=1$ \\
\hline
&& \\
$\operatorname{GMM}(2,\infty)$ & 
$\sum_{j=0}^{h-1}\alpha^{2j}-\sum_{j=h}^{2h-2}\alpha^{2j}$ & $1$ \\
$\operatorname{GMM}(2,3)$ 
& 
$\sum_{j=0}^{h-1}\alpha^{2j}
-
 (h-1)^2\alpha^{4h-4}/(\sum_{j=0}^{h-1}\alpha^{2j})$& $2-1/h$ \\
LALP (i.e. GMM(2,2)) & 
$\sum_{j=0}^{h-1}\alpha^{2j}$& $h$ \\
LA-AR & 
$h^2\alpha^{2(h-1)}$ & $h^2$ \\
\hline\hline
\end{tabular}
\par\vspace{0.8em}
\begin{minipage}{0.9\linewidth}
\footnotesize
\emph{Note:} Asymptotic variances of ordinary LALP, the first overidentified
GMM-LALP estimator, its infinite-instrument limit, and the recursive lag-augmented
AR estimator. The infinite-instrument case is the limit of the optimally
weighted $\operatorname{GMM}(2,q)$ as $q\to\infty$.
\end{minipage}
\end{table}

For convenience, we summarize the comparison in Table \ref{tab:unit_root_variance_ordering}, with particular attention to the unit root case $\alpha=1$.
In Table \ref{tab:unit_root_variance_ordering}, we also include the iterated lag-augmented AR (LA-AR) estimator of \citet{inoue2020uniform}, which is analyzed in \cite{montiel2021local} (p.1817).
Compared with LA-AR, the efficiency advantage of the GMM mainly lies in the high-persistence regime. For example, for $\alpha$ between 0.93 and 1, and for horizons up to 36, the GMM(2,3) always has smaller asymptotic variance than LA-AR. At the unit root, the variance of LA-AR is $h^2$, growing quadratically with the horizon, while that of GMM(2,3) remains uniformly bounded by 2. If the data are only weakly dependent, the LA-AR typically has the smaller variance. Nevertheless, as we show in Section \ref{sec:local}, following the analysis of \citet{olea2024double}, the inference based on the LA-AR estimator is more fragile to model misspecification than our GMM estimator.

\section{Uniform inference}
\label{sec:uniform}

This section establishes that the Wald statistic based on a general GMM($p,q$) LP estimator is asymptotically pivotal \emph{uniformly} over the persistence of the data under the specified conditions. The following assumption is also used in \cite{montiel2021local} (definition 1, p.1804). Following their framework, we assume that the VAR lag order $d$ is fixed. 

\begin{assumption}[VAR parameter space]
\label{ass:persistent} Let $A(L)=I_K-A_1L-\cdots-A_dL^d$ denote the VAR polynomial
in \eqref{eq:vard}. Suppose that
$A(L)=B(L)(I_K-DL)$, where
$B(L)=I_K-B_1L-\cdots-B_{d-1}L^{d-1}$ and
$D=\operatorname{diag}(\rho_1,\cdots,\rho_K)$.
For $k=1,\cdots,K$, $\rho_k\in[-1,1]$. The companion matrix
$\mathbf B$ of $B(L)$ satisfies
$\|\mathbf B^\ell\|_2\le C(1-c)^\ell$ for every $\ell\ge1$,
with $C<\infty$ and $c\in(0,1)$.
\end{assumption}

Let $\xi_t=(I_K-DL)y_t$. Then
\begin{equation}
y_t
=
Dy_{t-1}+\xi_t,
\label{eq:persistent-decomposition}
\end{equation}
or, equivalently, $y_{k,t}=\rho_ky_{k,t-1}+\xi_{k,t}$ for each $k$. Thus each variable carries its own potentially persistent root, while $B(L)$ collects the transitory dynamics.

For each $k$, define $\tau_{k,T}=
\sum_{\ell=0}^{T-1}\rho_k^{2\ell}$ and $\Lambda_T=\operatorname{diag}(\tau_{1,T},\cdots,\tau_{K,T})$. In what follows the parameters may depend on $T$. We suppress the triangular-array index for expositional simplicity, and all constants below are uniform over the parameter space and over $T$. Letting the roots vary with $T$ embeds fixed stationary dynamics, shrinking-to-unity sequences, and exact unit roots of either sign in a single parameter space.\footnote{The diagonal form of $D$ is for expository purposes; the same argument applies after a fixed rotation separating the persistent directions from the transitory block.}

\begin{assumption}[Innovations and transitory process]
\label{ass:stable-score}
(i) $\{u_t,\mathcal F_t\}$ is a martingale difference sequence (MDS), where $\mathcal F_t=\sigma(y_0,u_s:s\le t)$;
(ii) $\{u_t\}$ is stationary and $\alpha$-mixing with coefficients $\alpha(j)$ satisfying
$\sum_{j\ge1} j^{2/(r-2)}\alpha(j)^{1-2/r}<\infty$
for some $r>2$, and
$\mathbb E\|u_t\|^{4r+\delta}\le C$
for some $\delta>0$;
(iii) The initial condition satisfies
$\mathbb E\|\Lambda_T^{-1/2}y_0\|^{4r+\delta}\le C$ uniformly over the parameter space.
\end{assumption}

Write $\mathcal P$ for the set of parameters admitted by
Assumption~\ref{ass:persistent}, a generic
element of which is denoted $(D,B)$. Assumption~\ref{ass:stable-score} (i)-(ii) is imposed only on
the primitive innovation process $\{u_t\}$. The moment and dependence
properties of the transitory process $\xi_t$ are instead derived from
its stable-filter representation. Under correct specification (as in this section),
$\xi_t=B(L)^{-1}u_t$; under local misspecification (in the next section), the corresponding
representation is given in \eqref{eq:aux-local-xi-representation}.

\subsection{A rotation of the lags}
\label{subsec:uniform-rotation}

The following rotation separates quasi-differences from levels of the time series, which is especially useful in the persistent regime. We assume $p\ge2$, the rotation thus places $\beta_h$ on a quasi-difference. We emphasize that the rotation is only used in the proof, and does not change any implementation detailed in Section \ref{sec:implementation}.

Let $\Delta_p$ be block upper-bidiagonal such that
\begin{align}
\underset{pK\times pK}{\Delta_p}
&=
\begin{pmatrix}
I_K & -D & 0 & \cdots & 0\\
0 & I_K & -D & \ddots & \vdots\\
\vdots & \ddots & \ddots & \ddots & 0\\
0 & \cdots & 0 & I_K & -D\\
0 & \cdots & \cdots & 0 & I_K
\end{pmatrix}, \quad \underset{pK\times pK}{\Delta_p^{-1}} =
\begin{pmatrix}
I_K & D & D^2 & \cdots & D^{p-1} \\
0 & I_K & D & \ddots & \vdots \\
\vdots & \ddots & \ddots & \ddots & D^2 \\
0 & \cdots & 0 & I_K & D \\
0 & \cdots & \cdots & 0 & I_K
\end{pmatrix}
,
\label{eq:delta-p}
\end{align}
where the inverse follows immediately.
Let $\Upsilon_p=\operatorname{diag}\left(I_{(p-1)K},\Lambda_T^{-1/2}\right)$, and set $H_p=\Upsilon_p\Delta_p$. Then
\begin{align}
H_pY_{t,p}
=
\left(
\xi_t',\,
\xi_{t-1}',\,
\cdots,\,
\xi_{t-p+2}',\,
(\Lambda_T^{-1/2}y_{t-p+1})'
\right)',
\label{eq:rotated-vector}
\end{align}
which stacks quasi-differences above scaled levels whose second moments are uniformly bounded. Their lower-tail nondegeneracy is imposed in Assumption~\ref{ass:rotated-nonsingularity} below.

Rotate the sample objects by $\widecheck\Sigma_{p,q}=H_p\widehat\Sigma_{p,q}H_q'$, $\widecheck\Omega_{q,h}=H_q\widehat\Omega_{q,h}H_q'$, and $\widecheck m_{q,h}=H_q\widehat m_{q,h}$. For $p\ge2$, we have $J_pH_p'
=
J_p,
H_pJ_p'
=
J_p',$
$J_p(H_p')^{-1}=J_p$, and $H_p^{-1}J_p'=J_p'$.
Hence the estimator \eqref{gmm_iv} is numerically identical in the
rotated and original coordinates,
$\widehat\beta_h^{\operatorname{GMM}(p,q)}
=
J_p
\left(
\widecheck\Sigma_{p,q}\widecheck\Omega_{q,h}^{-1}\widecheck\Sigma_{p,q}'
\right)^{-1}
\widecheck\Sigma_{p,q}\widecheck\Omega_{q,h}^{-1}\widecheck m_{q,h},$
and so is its variance \eqref{eq:Vpq_hat}, 
$\widehat V_h^{\operatorname{GMM}(p,q)}=J_p(\widecheck\Sigma_{p,q}\widecheck\Omega_{q,h}^{-1}\widecheck\Sigma_{p,q}')^{-1}J_p'.$ 
The rotation therefore reorganizes only the nuisance directions.

\subsection{Wald statistic}
\label{subsec:uniform-main}
For the uniformity result, we impose the following high-level condition. A similar condition is also used in \cite{montiel2021local} and \cite{xu2023local}.

\begin{assumption}[Rotated nonsingularity]
\label{ass:rotated-nonsingularity}
For every $\varepsilon>0$ there exists $\underline c>0$ such that
\begin{align}
\liminf_{T\to\infty}
\inf_{(D,B)\in\mathcal P}
\mathbb P_{D,B}
\left\{
\lambda_{\min}(\widecheck\Sigma_{p,p})\ge\underline c,\;
\lambda_{\min}(\widecheck\Omega_{q,h})\ge\underline c
\right\}
\ge1-\varepsilon.
\label{eq:rotated-nonsingularity}
\end{align}
\end{assumption}

The stochastic formulation in Assumption \ref{ass:rotated-nonsingularity} accommodates every persistence regime. When $\max_k|\rho_k|\le1-c$, it reduces to the usual requirement that the stationary design and long-run covariance limits be positive definite. When some $|\rho_k|\to1$, the corresponding scaled-level block may converge to a random quadratic functional. Assumption~\ref{ass:rotated-nonsingularity} requires the lower tails of this block and of the rotated HAC matrix to remain uniformly away from singularity in probability. Appendix~\ref{app:local-rotated-nonsingularity} provides an illustration. Throughout the paper, the truncation parameter for the HAC estimator is such that $b_T\to\infty$ with $b_T=o(T^{1/2})$.

\begin{theorem}[Uniform GMM inference]
\label{thm:persistence-uniform-gmm}
Let $G$ be a known $g\times K$ matrix of rank $g\le K$. Let $q\ge p\ge\max(d,2)$. Define the Wald statistic
$\\ W_{T,h}
=
T
\bigl(G\widehat\beta_h^{\operatorname{GMM}(p,q)}-G\beta_h\bigr)'
\bigl(
G\widehat V_h^{\operatorname{GMM}(p,q)}G'
\bigr)^{-1}
\bigl(G\widehat\beta_h^{\operatorname{GMM}(p,q)}-G\beta_h\bigr),$
where $\widehat V_h^{\operatorname{GMM}(p,q)}$ is defined in Equation~\eqref{eq:Vpq_hat}. Then under Assumptions~\ref{ass:persistent}--\ref{ass:rotated-nonsingularity},
\begin{align}
\limsup_{T\to\infty}
\;
\sup_{(D,B)\in\mathcal P}
\;
\sup_{x\in\mathbb R}
\left|
\mathbb P_{D,B}(W_{T,h}\le x)
-
F_{\chi_g^2}(x)
\right|
=
0,
\label{eq:uniform-wald-convergence}
\end{align}
where $F_{\chi_g^2}(x)$ is the cumulative distribution function of the chi-square variate with $g$ degrees of freedom.
\end{theorem}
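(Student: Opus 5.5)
The plan is to work entirely in the rotated coordinates of Section~\ref{subsec:uniform-rotation}. There the estimator and its variance estimator are numerically unchanged, and every block of the rotated design has uniformly bounded second moments.

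\textbf{Step 1: exact error decomposition.} Since $y_{1,t+h}=\beta_h'y_t+\eta_{h,p}'w_{t,p}+e_{t,h}$ with the deeper-lag coefficients equal to zero, \eqref{gmm_iv} gives
\begin{equation*}
\widehat\beta_h^{\operatorname{GMM}(p,q)}-\beta_h
=J_p\bigl(\widecheck\Sigma_{p,q}\widecheck\Omega_{q,h}^{-1}\widecheck\Sigma_{p,q}'\bigr)^{-1}\widecheck\Sigma_{p,q}\widecheck\Omega_{q,h}^{-1}\,T^{-1}\sum_{t=q}^{T-h}H_qY_{t,q}e_{t,h}.
\end{equation*}
Write $\widecheck S_T=T^{-1/2}\sum_t H_qY_{t,q}e_{t,h}$ for the normalized rotated score. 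By \eqref{eq:rotated-vector}, each summand is a vector of quasi-differences $\xi_{t-j}$ and one scaled level $\Lambda_T^{-1/2}y_{t-q+1}$, multiplied by $e_{t,h}=\nu_1'\sum_{\ell<h}\Psi_\ell u_{t+h-\ell}$. The last factor is a finite sum of future MDS innovations, so the summand is an $h$-dependent-type array of martingale-difference pieces.

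\textbf{Step 2: conditional Gaussian limit for the score.} I will show that $\widecheck S_T$ is asymptotically mixed normal uniformly over $\mathcal P$. Conditionally on $\widecheck\Omega_T^\ast$, the infeasible long-run variance of the rotated score, $\widecheck S_T$ is asymptotically $N(0,\widecheck\Omega_T^\ast)$, and the convergence holds jointly with $\widecheck\Sigma_{p,q}$.

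To get there, I reorganize the sum by innovation date, $\sum_t H_qY_{t,q}e_{t,h}=\sum_s\bigl(\sum_{\ell=0}^{h-1}H_qY_{s-h+\ell,q}\nu_1'\Psi_\ell\bigr)u_s+$ (negligible end effects). This is a martingale with $\mathcal F_{s-1}$-measurable weights. I then apply a martingale CLT for triangular arrays with a stable/mixing limit. The conditional variance condition requires $T^{-1}\sum_s(\cdot)u_su_s'(\cdot)'-\widecheck\Omega_T^\ast\to_p0$ uniformly. This follows from the $\alpha$-mixing and $4r+\delta$ moment conditions in Assumption~\ref{ass:stable-score}, the geometric bound on $\mathbf B^\ell$, and the bounded moments of $\Lambda_T^{-1/2}y_t$ from (iii). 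The Lindeberg condition follows from the same higher moments.

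Uniformity will be handled by the standard subsequence argument. Any sequence $(D_T,B_T)$ attaining the supremum has a subsequence along which the normalized roots $T(1-|\rho_{k,T}|)$ and the matrices $B_T$ converge. It therefore suffices to prove the chi-square limit along every such drifting sequence. The stationary, local-to-unity and unit-root cases are treated together through the scaled-level block.

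\textbf{Step 3: consistency of the HAC matrix.} I need $\widecheck\Omega_{q,h}-\widecheck\Omega_T^\ast\to_p0$ uniformly, where $\widecheck\Omega_{q,h}$ is built from the LALP residuals. There are two pieces.
\begin{itemize}
\item \emph{Residual replacement.} The error from substituting $\widehat e_{t,h}$ for $e_{t,h}$ is controlled by $\|\widehat\theta^{\operatorname{LALP}}_h-\theta_h\|$ in rotated coordinates. The uniform $\sqrt T$-rate of \citet{montiel2021local} for LALP gives $O_p(b_T/\sqrt T)=o_p(1)$.
\item \emph{Kernel approximation.} The remaining bias and variance of the kernel estimator are controlled using the $h$-dependence of $e_{t,h}$ together with mixing covariance inequalities, and require $b_T\to\infty$ with $b_T=o(T^{1/2})$.
\end{itemize}
The scaled-level block is the delicate one. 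Its autocovariances at lags $\ell\le b_T$ are not summable in the usual sense near unity. However, the relevant cross-products $\Lambda_T^{-1/2}y_{t-q+1}e_{t,h}\,e_{t-\ell,h}y_{t-\ell-q+1}'\Lambda_T^{-1/2}$ have mean zero for $\ell\ge h$ by the MDS property of $u$. I will bound these terms using this martingale structure rather than summability.

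\textbf{Step 4: conclusion.} With Assumption~\ref{ass:rotated-nonsingularity} bounding the inverses in probability, the continuous mapping theorem gives that, conditionally, $\sqrt T\,G(\widehat\beta_h-\beta_h)$ is approximately $N\bigl(0,\,GJ_p(\widecheck\Sigma\widecheck\Omega^{-1}\widecheck\Sigma')^{-1}J_p'G'\bigr)$, where $\widecheck\Sigma$ may remain random in the persistent block. The same random matrix is consistently estimated by $G\widehat V_hG'$, so $W_{T,h}$ is the self-normalized quadratic form of a conditionally standard normal $g$-vector. Hence its limit is $\chi^2_g$ regardless of the random design limit. Because the $\chi^2_g$ distribution function is continuous, Pólya's theorem upgrades pointwise convergence to the sup over $x$.

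The main obstacle is Steps 2--3 for the scaled-level block near unity, where the design and the long-run variance have random limits. There I must establish a uniform conditional CLT with stable convergence, jointly with HAC consistency for weights that are only bounded in $L^{4r+\delta}$ rather than weakly dependent. I would first try to handle it by conditioning on $\mathcal F_{s-h}$ and exploiting that the weights are predictable.
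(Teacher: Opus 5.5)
Your plan has a genuine gap in Step~2. Step~4 depends on it entirely. You claim that the whole rotated score $\widecheck S_T$ is asymptotically mixed normal given its conditional variance, and this is false whenever some $|\rho_k|\to1$.

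Take the scalar AR(1) with $\rho=1$, $d=1$, $p=2$, $q=3$. Then $H_3Y_{t,3}=(u_t,u_{t-1},T^{-1/2}y_{t-2})'$ and $e_{t,h}=\sum_{i=1}^h u_{t+i}$. The last coordinate of $\widecheck S_T$ is
$T^{-1}\sum_t y_{t-2}e_{t,h}\Rightarrow h\sigma^2\int_0^1W\,dW=h\sigma^2\{W(1)^2-1\}/2$.
This law is skewed and bounded below, so it is not a variance mixture of centered normals.

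The stable martingale CLT you want cannot deliver mixed normality here. The limiting integrand is driven by the same Brownian motion as the martingale itself, so the limit is the stochastic integral $\int_0^1W\,dW$ rather than a Gaussian mixture. Relatedly, the conditional variance $T^{-2}\sum_t y_{t-2}^2$ converges only in distribution, not in probability to a fixed (possibly random) limit.

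Consequently, your ``self-normalized quadratic form of a conditionally standard normal vector'' argument has no basis. A priori, the $J_p$-rows of $(\widecheck\Sigma_{p,q}\widecheck\Omega_{q,h}^{-1}\widecheck\Sigma_{p,q}')^{-1}\widecheck\Sigma_{p,q}\widecheck\Omega_{q,h}^{-1}$ can load on this Dickey--Fuller-type component through the off-diagonal blocks. In that case $W_{T,h}$ would not be $\chi^2_g$.

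The missing idea is asymptotic block separation, which makes any conditional CLT unnecessary.
\begin{itemize}
\item Along a subsequence with $\rho_k\to\rho_k^\ast$, form a stable block from the quasi-differences together with the scaled levels for which $|\rho_k^\ast|<1$ (these have bounded $\tau_{k,T}$). Keep the persistent scaled levels $\Lambda_{1,T}^{-1/2}y^{(1)}_{t-q+1}$, for which $\tau_{k,T}\to\infty$, separate.
\item Cross moments between the two blocks are $O(\tau_{k,T}^{-1/2})+O_p(T^{-1/2})=o_p(1)$ in $\widecheck\Sigma_{p,q}$. The same holds in $\widecheck\Omega_{q,h}$ at lags $|j|\le h$. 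At lags $h<|j|\le b_T$, the MDS structure gives $O_p(b_TT^{-1/2})=o_p(1)$.
\item Assumption~\ref{ass:rotated-nonsingularity} controls the diagonal blocks. Because $p\ge2$ places $\beta_h$ in the quasi-difference block, block inversion then shows that both $\sqrt T(\widehat\beta_h^{\operatorname{GMM}(p,q)}-\beta_h)$ and $\widehat V_h^{\operatorname{GMM}(p,q)}$ depend, up to $o_p(1)$, only on the stable block.
\item On the stable block, an ordinary LLN, HAC consistency to a \emph{deterministic} limit $\Omega^\xi_{q,h,T}$, and a Gaussian CLT all hold. The limit is therefore normal with nonrandom variance.
\end{itemize}
The persistent blocks of the design, the HAC matrix and the score only need to be $O_p(1)$, not consistently estimated. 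This also removes the ``delicate'' level-block HAC step in your Step~3.

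A smaller point: residual replacement needs all first-step coefficients to be $O_p(T^{-1/2})$ in rotated coordinates. The uniform LALP result you cite is stated for the impulse-response coefficient under a stronger two-sided MDS condition. You should instead derive this rate from the rotated OLS expansion under Assumptions~\ref{ass:stable-score}--\ref{ass:rotated-nonsingularity}.
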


Confidence sets that invert $W_{T,h}$ at the $\chi^2_{g,1-\alpha}$ critical value therefore have asymptotic coverage $1-\alpha$ uniformly over $\mathcal P$. In practice, inference can be improved by the bootstrap, which we implement in our numerical studies in Section \ref{sec:mc} and Section \ref{sec:application}.

The proof of Theorem~\ref{thm:persistence-uniform-gmm}, given in Appendix \ref{sec:uniform-theorem-proof}, splits the parameter space in two. When $\max_k|\rho_k|\le1-c$ the rotated system is stationary and a triangular-array GMM argument applies. When $\max_k|\rho_k|\to1$ the cross blocks between quasi-differences and levels vanish once multiplied by $\Lambda_T^{-1/2}$, so the scaled levels survive only in the nuisance block on the diagonal and affect neither the estimator of $\beta_h$ nor its selected variance. This is the mechanism behind the observation of \citet{sims1990inference} that coefficients on stationary linear combinations of the regressors remain asymptotically normal in the presence of unit roots; the rotation shows that, once $p\ge2$, $\beta_h$ is such a coefficient.

Our proof strategy differs from that used for just-identified local projections. \citet{montiel2021local} partial the lagged controls out of $y_{1,t+h}$ and of $y_t$, reducing inference on $\beta_h$ to a regression on a residualized regressor. Our argument instead follows the classical VAR literature: estimate the whole projection coefficient jointly, rotate so that each block is a quasi-difference or a scaled level, and read the estimator of $\beta_h$ off the first block. An advantage of this method is that we only require $p \ge max(d,2)$, while the residualization-based proof typically needs $p \ge d+1$. Additional lags are therefore allowed but not required for persistence robustness.

A simplifying assumption we make is that $h$ is a fixed horizon. \cite{montiel2021local} and \cite{xu2023local} allow $h$ to grow with the sample size, but both rely on the residualization-based arguments. Moreover, the inference procedure of \cite{montiel2021local} relies on the two-sided MDS assumption which we dispense with. Nevertheless, we find in simulations that the GMM-based LP inference is quite robust over a wide range of horizons.

\section{GMM-LALP under local misspecification}
\label{sec:local}

A finite-order VAR is rarely the true data-generating process. Temporal
aggregation, marginalization of omitted variables, and the equilibrium
dynamics of structural models generally induce moving-average components,
so the reduced form is generically VARMA.\footnote{See
\cite{lutkepohl2005,ravenna2007,fernandez2007}.} This section establishes
the validity of GMM-LALP under local VARMA departures in the spirit of
\citet{olea2024double}, while allowing unit
or near-unit roots in the data.

\subsection{A locally misspecified VAR}

We extend the model \eqref{eq:vard} to the following specification
\begin{equation}
\label{localmis}
y_t
=
A_1y_{t-1}+\cdots+A_dy_{t-d}+\widetilde u_t,
\end{equation}
where the VAR error $\widetilde u_t$ is not exactly a white noise process. Instead, $\widetilde u_t$ is assumed to follow the local VARMA($m,n$) law
\begin{align}
\label{eq:local_arma_error}
\prod_{j=1}^{m}\bigl(I_K-\Phi_{j,T}L\bigr)\widetilde u_t
=
\prod_{k=1}^{n}\bigl(I_K+\Theta_{k,T}L\bigr)u_t,
\end{align}
where the lag orders $m$ and $n$ are fixed but unknown. To regulate the degree of misspecification, assume
$\Phi_{j,T}=T^{-\zeta_{AR}}\Phi_j$ and
$\Theta_{k,T}=T^{-\zeta_{MA}}\Theta_k$, where
$\zeta_{AR}>0$, $\zeta_{MA}>0$, and
$\Phi_j$ and $\Theta_k$ are fixed matrices. Write $\bar\zeta=\min\{\zeta_{AR},\zeta_{MA}\}$.

By \eqref{eq:local_arma_error}, we can write
\begin{align}
\label{eq:local-filter-expansion}
\left\{
\prod_{j=1}^{m}
\bigl(I_K-\Phi_{j,T}L\bigr)
\right\}^{-1}
\prod_{k=1}^{n}
\bigl(I_K+\Theta_{k,T}L\bigr)
=
I_K+\sum_{j\geq1}\Pi_{j,T}L^j.
\end{align}
Each
$\Pi_{j,T}$ is a finite sum of products of exactly $j$ local matrices
and hence $\|\Pi_{j,T}\|=O(T^{-j\bar\zeta})$. Thus, for every $r\geq1$,
$\sum_{j\geq r}\|\Pi_{j,T}\|=O(T^{-r\bar\zeta})$; see Appendix
\ref{subsec:local-aux-lemmas}
(Lemma~\ref{lem:local-filter-approximation}).

\subsection{Lag augmentation and robustness}
\label{subsec:local-asym}

The following proposition gives an exact population representation for the local projection regression under the models \eqref{localmis} and
\eqref{eq:local_arma_error}, using $s$ additional lags as regressors.

\begin{proposition}[Exact lag-augmented representation]
\label{prop:local-lalp-representation}
Let $p=d+s$, for $s\geq0$. Under \eqref{localmis} and
\eqref{eq:local_arma_error}, if $s\geq1$,
\begin{equation}
\label{eq:lalp_local}
y_{1,t+h}
=
(\beta_{h,T}',\eta_{h,p,T}')Y_{t,p}
+
e_{t,h,T},
\end{equation}
for some $\eta_{h,p,T}$, where
\begin{equation}
\label{eq:local_beta_T}
\beta_{h,T}
=
\beta_h
+
\sum_{j=1}^{h}
\Pi_{j,T}'\beta_{h-j}.
\end{equation}
Moreover,
\begin{align}
\label{eq:local-error-decomposition}
e_{t,h,T}
&=
\nu_1'
\sum_{\ell=0}^{h-1}
\Psi_\ell u_{t+h-\ell}
+
\nu_1'
\sum_{j=1}^{\min\{s,h-1\}}
\sum_{\ell=0}^{h-j-1}
\Psi_\ell\Pi_{j,T}
u_{t+h-\ell-j}
+
\mathcal E_{t,h,T},
\end{align}
where
$\mathcal E_{t,h,T}
=
O_p(T^{-(s+1)\bar\zeta})$.
If $s=0$, \eqref{eq:lalp_local} holds with
$(\beta_{h,T},\eta_{h,p,T})=(\beta_h,\eta_{h,d})$, and
\eqref{eq:local-error-decomposition} holds with
$\mathcal E_{t,h,T}=O_p(T^{-\bar\zeta})$.
\end{proposition}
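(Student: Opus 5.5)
The plan is to iterate the VAR forward and substitute the filter expansion \eqref{eq:local-filter-expansion} for $\widetilde u_t$. First, iterate \eqref{localmis} forward $h$ periods using the companion form:
\[
y_{t+h}=J_d\mathbf A^h Y_{t,d}+\sum_{\ell=0}^{h-1}\Psi_\ell\widetilde u_{t+h-\ell}.
\]
Then write $\widetilde u_s=u_s+\sum_{j\ge1}\Pi_{j,T}u_{s-j}$.

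The resulting double sum $\sum_{\ell=0}^{h-1}\sum_{j\ge0}\Psi_\ell\Pi_{j,T}u_{t+h-\ell-j}$ (with $\Pi_{0,T}=I_K$) splits according to the date of the shock.

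\emph{Future shocks.} Terms with $\ell+j\le h-1$ involve only $u_{t+1},\dots,u_{t+h}$.

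\emph{Date-$t$ and past shocks.} Terms with $\ell+j\ge h$ involve only $u_t,u_{t-1},\dots$. To absorb these into the regressors, use the identity $u_{t-i}=\widetilde u_{t-i}-\sum_{j\ge1}\Pi_{j,T}u_{t-i-j}$ and $\widetilde u_{t-i}=y_{t-i}-\sum_{k=1}^dA_ky_{t-i-k}$. Each $u_{t-i}$ with $0\le i\le s-1$ can then be expressed exactly through $y_{t-i},\dots,y_{t-i-d}$, all of which lie in $Y_{t,d+s}$, plus higher-order $\Pi$ remainders.

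\emph{Coefficient on $y_t$.} The only terms loading on $u_t$ (hence on $y_t$ with an identity coefficient) are those with $\ell+j=h$, that is, $\nu_1'\Psi_{h-j}\Pi_{j,T}u_t$ for $j=1,\dots,h$. Adding these to the direct term $J_d\mathbf A^hY_{t,d}$ gives $\beta_{h,T}=\beta_h+\sum_{j=1}^h\Pi_{j,T}'\beta_{h-j}$, using $\beta_{h-j}'=\nu_1'\Psi_{h-j}$; this is \eqref{eq:local_beta_T}. Replacing $u_t$ by $y_t$ minus lags only perturbs the coefficients on $y_{t-1},\dots,y_{t-d}$, which go into $\eta_{h,p,T}$. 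This step needs $s\ge1$ so that $y_{t-d}$ is available, which is why the $s=0$ case degenerates.

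\emph{Error term.} The future-shock terms give the displayed decomposition: $j=0$ is the leading term, and $1\le j\le \min\{s,h-1\}$ gives the second term. The terms with $j>s$ have norm $O(T^{-(s+1)\bar\zeta})$ by Lemma~\ref{lem:local-filter-approximation}. The past-shock terms beyond what $Y_{t,p}$ can absorb are also left in $\mathcal E_{t,h,T}$: after recursive substitution $s$ times, each carries a product of at least $s+1$ local matrices. Uniformly bounded second moments of $u_t$ then convert these norm bounds into $O_p(T^{-(s+1)\bar\zeta})$.

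\emph{The $s=0$ case.} Take $(\beta_h,\eta_{h,d})$ and put every $\Pi$-term into $\mathcal E$, which is $O_p(T^{-\bar\zeta})$ since $\sum_{j\ge1}\|\Pi_{j,T}\|=O(T^{-\bar\zeta})$.

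\emph{Main obstacle.} The delicate step is the bookkeeping in the recursive substitution of past $u$'s by lags of $y$. It has to be organised so that each leftover term carries at least $s+1$ factors of $\Pi$. Some care is also needed in stating \eqref{eq:lalp_local}: it is an exact identity with a remainder that is orthogonal to $Y_{t,p}$ only up to $O_p(T^{-(s+1)\bar\zeta})$. I would therefore first define $\eta_{h,p,T}$ constructively by the substitution, and only then verify the order of the remainder. I would not claim that it is the population projection coefficient.
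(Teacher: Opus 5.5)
Your proposal is correct and follows essentially the same route as the paper. The paper iterates the VAR($d$) forward, substitutes $\widetilde u_t=\sum_{j\ge0}\Pi_{j,T}u_{t-j}$, splits terms by the date of the shock, and successively eliminates $u_t,\dots,u_{t-s+1}$ via $u_{t-r}=y_{t-r}-\sum_{j=1}^{d}A_jy_{t-r-j}-\sum_{j\ge1}\Pi_{j,T}u_{t-r-j}$, organizing the bookkeeping you flag through recursively defined coefficients $\delta_{r,h,T}$ (with $\delta_{0,h,T}=\beta_{h,T}-\beta_h$ and $\|\delta_{r,h,T}\|=O(T^{-(r+1)\bar\zeta})$); this gives both your ``at least $s+1$ local factors'' property for the remainder and an explicit, constructive $\eta_{h,p,T}$ that is not claimed to be a projection coefficient.
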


The proof of Proposition~\ref{prop:local-lalp-representation} is given in Appendix \ref{proof:local_mis_lalp_exact}, where
explicit expressions for $\eta_{h,p,T}$ and $\mathcal E_{t,h,T}$ are
also provided. Importantly, the regression form \eqref{eq:lalp_local} shows that under the models \eqref{localmis} and
\eqref{eq:local_arma_error}, the true impulse response is $\beta_{h,T}$. This can be seen from the fact that the error $e_{t,h,T}$ only involves the past ($\mathcal E_{t,h,T}$) and the future (other terms in $e_{t,h,T}$) of $u_t$, but not the contemporaneous innovation $u_t$. Note that $\beta_{h,T}$ generally differs from $\beta_{h}$, the true response under the exact finite-order VAR($d$) as in Section \ref{sec:uniform}, and the difference is of order $T^{-\bar\zeta}$, which can be larger than the order $T^{-1/2}$.

Proposition~\ref{prop:local-lalp-representation} isolates the role of
lag augmentation. With $p=d+s$, the additional $s$ lags successively
absorb the predetermined components through order $s$, leaving terms
of order $T^{-(s+1)\bar\zeta}$. The goal of this section is to study inference based on the same
estimator $\widehat\beta_h^{\operatorname{GMM}(p,q)}$ and its variance
estimator $\widehat V_h^{\operatorname{GMM}(p,q)}$ as in previous
sections under the models \eqref{localmis} and
\eqref{eq:local_arma_error}. The following theorem gives the result.

\begin{theorem}
\label{theorem:local_mis_lalp}
Let $s\geq0$ and
$q\geq p=d+s>1$ be fixed. Suppose $\{y_t\}$ is generated by
\eqref{localmis} and \eqref{eq:local_arma_error}, and
Assumptions~\ref{ass:persistent}--\ref{ass:rotated-nonsingularity}
hold. Assume
\begin{equation}
    \bar\zeta=\min\{\zeta_{AR},\zeta_{MA}\}>1/(2s+2).
\end{equation}
 Then the Wald statistic
$W_{T,h}^{\operatorname{local}}
=
T
\bigl(G\widehat\beta_h^{\operatorname{GMM}(p,q)}-G\beta_{h,T}\bigr)'
\bigl(G\widehat V_h^{\operatorname{GMM}(p,q)}G'\bigr)^{-1}
\bigl(G\widehat\beta_h^{\operatorname{GMM}(p,q)}-G\beta_{h,T}\bigr)$, where $G$ is as in
Theorem~\ref{thm:persistence-uniform-gmm},
satisfies
\begin{equation}
\label{eq:local_mis_uniform}
\limsup_{T\to\infty}
\;
\sup_{(D,B)\in\mathcal P}
\;
\sup_{x\in\mathbb R}
\left|
\mathbb P_{D,B}
\left(
W_{T,h}^{\operatorname{local}}\le x
\right)
-
F_{\chi_g^2}(x)
\right|
=
0.
\end{equation}
\end{theorem}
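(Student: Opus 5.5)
The plan is to reduce Theorem~\ref{theorem:local_mis_lalp} to the argument behind Theorem~\ref{thm:persistence-uniform-gmm}, using the exact representation in Proposition~\ref{prop:local-lalp-representation}. That proposition gives
\[
y_{1,t+h}=(\beta_{h,T}',\eta_{h,p,T}')Y_{t,p}+e_{t,h,T}.
\]
Substituting it into \eqref{gmm_iv} and working in the rotated coordinates of Section~\ref{subsec:uniform-rotation}, we obtain
\[
\widehat\beta_h^{\operatorname{GMM}(p,q)}-\beta_{h,T}
=J_p\bigl(\widecheck\Sigma_{p,q}\widecheck\Omega_{q,h}^{-1}\widecheck\Sigma_{p,q}'\bigr)^{-1}\widecheck\Sigma_{p,q}\widecheck\Omega_{q,h}^{-1}\,T^{-1}\textstyle\sum_t H_qY_{t,q}e_{t,h,T}.
\]
The error splits as $e_{t,h,T}=e^{*}_{t,h,T}+\mathcal E_{t,h,T}$. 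Here $e^{*}_{t,h,T}$ collects the first two terms of \eqref{eq:local-error-decomposition}; it is a finite linear combination of the future innovations $u_{t+1},\dots,u_{t+h}$, with coefficients converging to those of the correctly specified case. The remainder satisfies $\mathcal E_{t,h,T}=O_p(T^{-(s+1)\bar\zeta})$ and depends only on $u_t$ and its past.

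The first step treats the leading term. Since $e^{*}_{t,h,T}$ is measurable with respect to future innovations, $Y_{t,q}e^{*}_{t,h,T}$ is a sum of $h$ martingale-difference-type arrays. The triangular-array CLT and the HAC consistency arguments from Appendix~\ref{sec:uniform-theorem-proof} then apply unchanged. The only new ingredient is that the regressors are now generated by $\widetilde u_t$ rather than $u_t$. I would handle this through the stable-filter representation \eqref{eq:aux-local-xi-representation} together with Lemma~\ref{lem:local-filter-approximation}: the filter $I_K+\sum_j\Pi_{j,T}L^j$ is uniformly absolutely summable and converges to $I_K$. It follows that the moment, mixing and scaled-level bounds, and the persistent/stationary case split used in the proof of Theorem~\ref{thm:persistence-uniform-gmm}, carry over uniformly over $\mathcal P$. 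In the near-unit-root regime the cross blocks still vanish after scaling by $\Lambda_T^{-1/2}$.

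The second step, which I expect to be the main obstacle, is to show that the remainder contributes $o_p(T^{-1/2})$ to the $\beta_h$ block, uniformly over $\mathcal P$. The naive bound works for the quasi-difference block. Its rows $\xi_{t-j}$ have bounded moments, so $T^{-1}\sum_t\xi_{t-j}\mathcal E_{t,h,T}=O_p(T^{-(s+1)\bar\zeta})$, and this is $o_p(T^{-1/2})$ exactly when $\bar\zeta>1/(2s+2)$. That is where the rate condition enters. For the scaled-level rows $\Lambda_T^{-1/2}y_{t-q+1}$, the second moments are also uniformly bounded, so the same bound holds by Cauchy--Schwarz. These rows enter $\beta_h$ only through the off-diagonal blocks of $(\widecheck\Sigma\widecheck\Omega^{-1}\widecheck\Sigma')^{-1}\widecheck\Sigma\widecheck\Omega^{-1}$, and Assumption~\ref{ass:rotated-nonsingularity} keeps these blocks $O_p(1)$. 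To make the bound on $\mathcal E_{t,h,T}$ uniform, I would use the explicit expression from Appendix~\ref{proof:local_mis_lalp_exact}. That expression writes $\mathcal E_{t,h,T}$ as $\sum_{j>s}$ of terms involving $\Pi_{j,T}$ applied to past $u$'s (plus terms that $\eta_{h,p,T}$ does not absorb), and bounding it as $\sum_{j\ge s+1}\|\Pi_{j,T}\|$ times uniformly $L^2$-bounded variables gives the rate. In the $s=0$ case the condition becomes $\bar\zeta>1/2$ and the same argument applies.

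Finally, I would check the variance estimator. The first-step LALP residuals differ from $e_{t,h,T}$ by estimation error of order $O_p(T^{-1/2})$. Replacing $e_{t,h,T}$ by $e^{*}_{t,h,T}$ changes $\widecheck\Omega_{q,h}$ by at most $O_p(b_T T^{-(s+1)\bar\zeta})$ plus lower-order terms. This is $o_p(1)$ because $b_T=o(T^{1/2})$ and $(s+1)\bar\zeta>1/2$. Hence $\widecheck\Omega_{q,h}$ is consistent for the long-run variance of the leading score. Combining the CLT for the leading score with the continuous mapping theorem yields the $\chi^2_g$ limit \eqref{eq:local_mis_uniform}. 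Uniformity follows from the subsequence argument already used for Theorem~\ref{thm:persistence-uniform-gmm}.
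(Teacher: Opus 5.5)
Your proposal is correct and follows essentially the paper's route. The paper also substitutes Proposition~\ref{prop:local-lalp-representation}, works in rotated coordinates, and removes the remainder through $\|T^{-1/2}\sum_t H_qY_{t,q}\mathcal E_{t,h,T}\|_{L^1}\le CT^{1/2-(s+1)\bar\zeta}=o(1)$. It then applies the stable/persistent block separation, the CLT and the HAC consistency on the $\xi$-block only, bounds the feasible-HAC error by $O_p(b_TT^{-(s+1)\bar\zeta})$ plus lower-order terms, and obtains uniformity from the subsequence argument. One structural difference: the paper proves this theorem first and derives Theorem~\ref{thm:persistence-uniform-gmm} as the special case $\Pi_{j,T}=0$, so the block-separation argument you cite is carried out there.

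The remaining differences are minor. The paper shows that the first-order future terms $\Psi_\ell\Pi_{j,T}u_{t+h-\ell-j}$ contribute only $O_p(T^{-\bar\zeta})$ to the score, and so reduces to the reference error $e_{t,h}$ rather than applying the CLT to your $e^{*}_{t,h,T}$. Your description of $\mathcal E_{t,h,T}$ as depending only on $u_t$ and its past is not accurate: it contains future innovations for $s+1\le j\le h-1$. This is harmless, because your argument uses only its $L^2$ size. For $s=0$ the representation is centered at $\beta_h$, so you need the explicit recentering $\sqrt T\|\beta_{h,T}-\beta_h\|=O(T^{1/2-\bar\zeta})=o(1)$.
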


Theorem~\ref{theorem:local_mis_lalp} establishes persistence-uniform
inference for GMM-LALP under local VARMA misspecification.
With $p=d+s$, $s\geq0$, valid inference requires the model to be sufficiently close to the VAR($d$) such that
$\bar\zeta>1/(2s+2)$. A larger $s$ allows the inference to remain robust under a greater degree of misspecification. Thus $s=0$ gives the trivial condition
$\bar\zeta>1/2$; without lag augmentation, the inference is robust to misspecification of smaller order than $T^{-1/2}$. $s=1$ recovers the $T^{-1/4}$ boundary
considered by \citet{olea2024double}. With two-lag augmentation ($s=2$), the inference can be robust to misspecification of order arbitrarily close to $T^{-1/6}$.


The proof of Theorem \ref{theorem:local_mis_lalp}, given in Appendix~\ref{proof:local_mis_lalp}, combines
Proposition~\ref{prop:local-lalp-representation} with the
persistence-uniform argument in Section~\ref{sec:uniform}. After
rotation, the moments generated by $\mathcal E_{t,h,T}$ are uniformly
negligible because $T^{1/2-(s+1)\bar\zeta}\to0$. When $s=0$, the same
condition gives
$\sqrt T\|\beta_{h,T}-\beta_h\|=o(1)$.
The same argument establishes first-order equivalence between the
feasible HAC matrix based on $\widehat e_{t,h}$ and its infeasible
counterpart based on $e_{t,h,T}$.

The nonsingularity requirement in Assumption~\ref{ass:rotated-nonsingularity} needed for Theorem \ref{theorem:local_mis_lalp} is similar to the one needed for Theorem \ref{thm:persistence-uniform-gmm}. The local VARMA filter perturbs both
the rotated sample second moments and the corresponding long-run
covariance matrix by asymptotically negligible terms, so that their
smallest eigenvalues inherit the lower bounds of the reference VAR.
Appendix~\ref{app:local-rotated-nonsingularity} illustrates how this
condition can be verified under local misspecification in an AR$(1)$
model with local ARMA$(1,1)$ errors.

We conclude this section by an illustration of the failure of the similarly constructed VAR-based estimator in a simple model. Consider the scalar AR$(1)$ model with local ARMA(1,1) innovations; a special case of \eqref{localmis} and
\eqref{eq:local_arma_error} with $d=m=n=1$. Assume stationarity, $|\alpha|<1$, where $\alpha$ is the AR(1) slope. Let $\zeta_{AR}=\zeta_{MA}=1/2$, so that
$\Pi_{1,T}=T^{-1/2}\pi$, where
$\pi=\Phi_1+\Theta_1$ is scalar. Consider the \emph{lag-augmented} VAR-based impulse response estimator of
\citet{inoue2020uniform}. The estimator fits an AR$(2)$ to obtain two slopes $\widehat\phi_1$ and $\widehat\phi_2$, and reports $\widehat\phi_1^{\,h}$ as the impulse response estimate at horizon $h$. The analysis by \cite{olea2024double} focuses on the \emph{non-augmented} VAR-based response estimator, thus does not cover the lag-augmented estimator. 
Since $\beta_h=\alpha^h$, \eqref{eq:local_beta_T} gives
$\beta_{h,T}
=
\alpha^h
+
T^{-1/2}\pi\alpha^{h-1}
+
\sum_{j=2}^{h}
T^{-j/2}\pi\Phi_1^{\,j-1}\alpha^{h-j}$.
The last sum is $O(T^{-1})$ for given $h$.  
In Appendix \ref{sec:uniform-lemmas} (Remark \ref{rem:var-illustration}), we show that
\begin{equation}
\sqrt T\bigl(\widehat\phi_1^{\,h}-\beta_{h,T}\bigr)
\xrightarrow{d}
N\Bigl(\alpha^{h-1}\pi(h-1),\;h^2\alpha^{2(h-1)}\Bigr).
\label{eq:var-illustration-aug}
\end{equation}
Compared with the non-augmented estimator, the lag augmentation eliminates the bias (due to misspecification) only at $h=1$ (unless $\alpha=0$).\footnote{The \emph{non-augmented} VAR-based response estimator fits an AR$(1)$ and reports
$\widehat\alpha^{\,h}$ as the impulse response estimate for given $h$. In Appendix \ref{proof:local_mis_lalp}, we show that, in our setting,
\begin{equation}
\sqrt T\bigl(\widehat\alpha^{\,h}-\beta_{h,T}\bigr)
\xrightarrow{d}
N\Bigl(\alpha^{h-1}\pi\bigl[h(1-\alpha^2)-1\bigr],\;
h^2\alpha^{2(h-1)}(1-\alpha^2)\Bigr).
\label{eq:var-illustration-nonaug}
\end{equation}}
The LA-AR estimator is generally asymptotically biased for $h\ge2$.

Because of the bias, the Wald statistic formed from the lag-augmented VAR response estimator $\widehat\phi_1^{\,h}$ (the square of \eqref{eq:var-illustration-aug} divided by a variance estimate) is not asymptotically centered chi-squared, so the inference is asymptotically distorted. If the local
misspecification is larger than order $T^{-1/2}$, equivalently allowing
$\pi\to\infty$, the noncentrality diverges and the VAR-based
Wald statistic explodes. By contrast, under the simple illustrative scalar model with local misspecification, the GMM-LALP Wald statistic (including the one based on OLS LALP as a special case) is asymptotically central $\chi_1^2$ by Theorem~\ref{theorem:local_mis_lalp} even when the local
misspecification is larger than order $T^{-1/2}$.

\section{Monte Carlo simulation}
\label{sec:mc}

Sections \ref{sec:mc} and \ref{sec:application} examine the finite-sample performance of inference based on our recommended GMM-LALP procedure, GMM($d+1,d+2$) (overidentification by one), and compare it with the LALP method of \cite{montiel2021local}. Given the popularity of the LALP and the similar robust properties our GMM approach shares, we focus on the LALP as the comparison target. For practical purposes, we mainly consider the bootstrap implementation of both methods following the algorithm given in \cite{montiel2021local}.

\begin{figure}[t]
\centering
    \begin{minipage}{1.05\textwidth}
       \noindent\hspace*{-2cm}%
        \includegraphics[
            width=1.2\linewidth,
            height=1.99\textheight,
            keepaspectratio
        ]{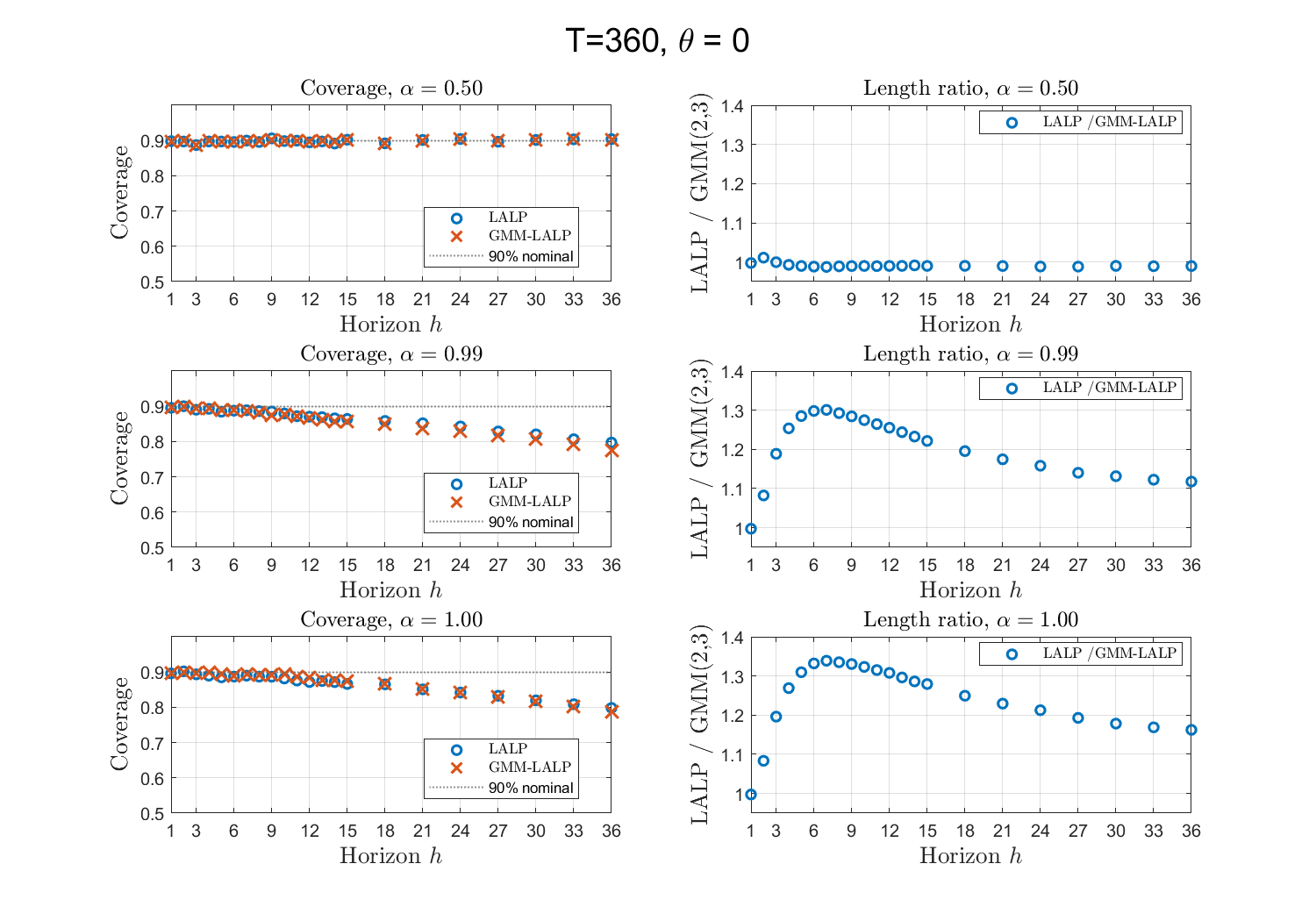}
\caption{Bootstrap confidence intervals (CIs) based on GMM(2,3) and LALP methods, under the AR(1) data generating process with AR coefficient $\alpha \in \{0.5, 0.99, 1\}$, $T = 360$.}
\label{fig:coverage_bandwidth_theta1_360}
        \vspace{-0.0em}
        \footnotesize
        \textit{Notes:}
Left panels plot the coverage probabilities of bootstrap CIs (Nominal level: 90\%). Right panels plot the ratio of the length of LALP to GMM-LALP CIs; values above one indicate that GMM-LALP CIs are shorter.
    \end{minipage}
\end{figure}

In the Monte Carlo simulations, the data are
generated from a scalar ARMA$(1,1)$ process with i.i.d.\ standard normal innovations, the AR coefficient $\alpha \in \{0.5, 0.99, 1\}$ and MA coefficient
$\theta \in \{0, 0.2\}$. For each replication, we demean the series before applying the estimators and bootstrap procedures, and the subsequent local-projection and GMM-LALP regressions are estimated without a constant term. The case $\theta=0$ corresponds to the correctly specified AR
design (with $d=1$), while $\theta=0.2$ introduces misspecification with a moving-average component. We first use the sample
size $T=360$ to match that of our empirical application. We use 5{,}000 Monte Carlo replications and 1{,}000
bootstrap draws. We consider horizons $h = 1, \cdots, 36$.


\begin{figure}[ht]
    \centering
 \begin{minipage}{1.05\textwidth}
       \noindent\hspace*{-2cm}%
        \includegraphics[
            width=1.2\linewidth,
            height=1.99\textheight,
            keepaspectratio
        ]{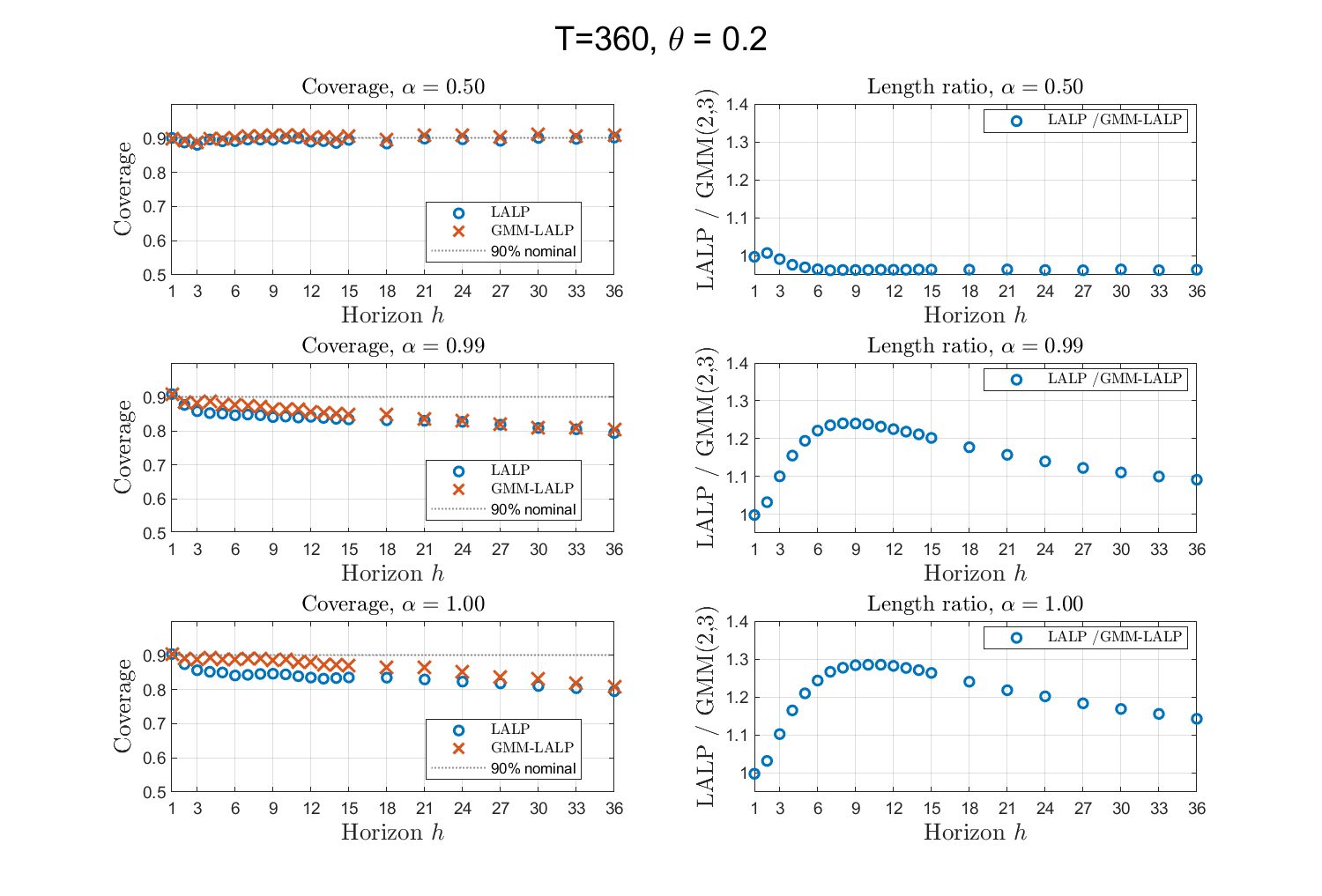}
\caption{Bootstrap confidence intervals (CIs) based on GMM(2,3) and LALP, under the ARMA(1,1) data generating process with AR coefficient $\alpha \in \{0.5, 0.99, 1\}$ and MA coefficient $\theta=0.2$, $T = 360$.}
\label{fig:coverage_bandwidth_theta2_360}
        \vspace{-0.0em}
        \footnotesize
        \textit{Notes:}
Left panels plot the coverage probabilities of bootstrap CIs (Nominal level: 90\%). Right panels plot the ratio of the length of LALP to GMM-LALP CIs; values above one indicate that GMM-LALP CIs are shorter.
    \end{minipage}
\end{figure}

Bootstrap coverage
probabilities and confidence-interval (CI) length ratios for LALP and GMM-LALP are reported in Figure~\ref{fig:coverage_bandwidth_theta1_360} ($\theta=0$) and Figure~\ref{fig:coverage_bandwidth_theta2_360} ($\theta=0.2$). The results show that GMM-LALP CIs generally have similar coverage as LALP CIs, but are remarkably shorter. LALP CIs can be up to 30\% wider than GMM-LALP CIs, and the largest differences appear for the persistent data ($\alpha=0.99$ or 1) and at moderately long horizons. The two CIs perform similarly across all horizons when the data are only weakly dependent ($\alpha=0.5$). We also find that the GMM-LALP can have noticeably better coverage under the misspecified case.

\begin{figure}[ht]
    \centering
 \begin{minipage}{1.05\textwidth}
       \noindent\hspace*{-2cm}%
        \includegraphics[
            width=1.2\linewidth,
            height=1.99\textheight,
            keepaspectratio
        ]{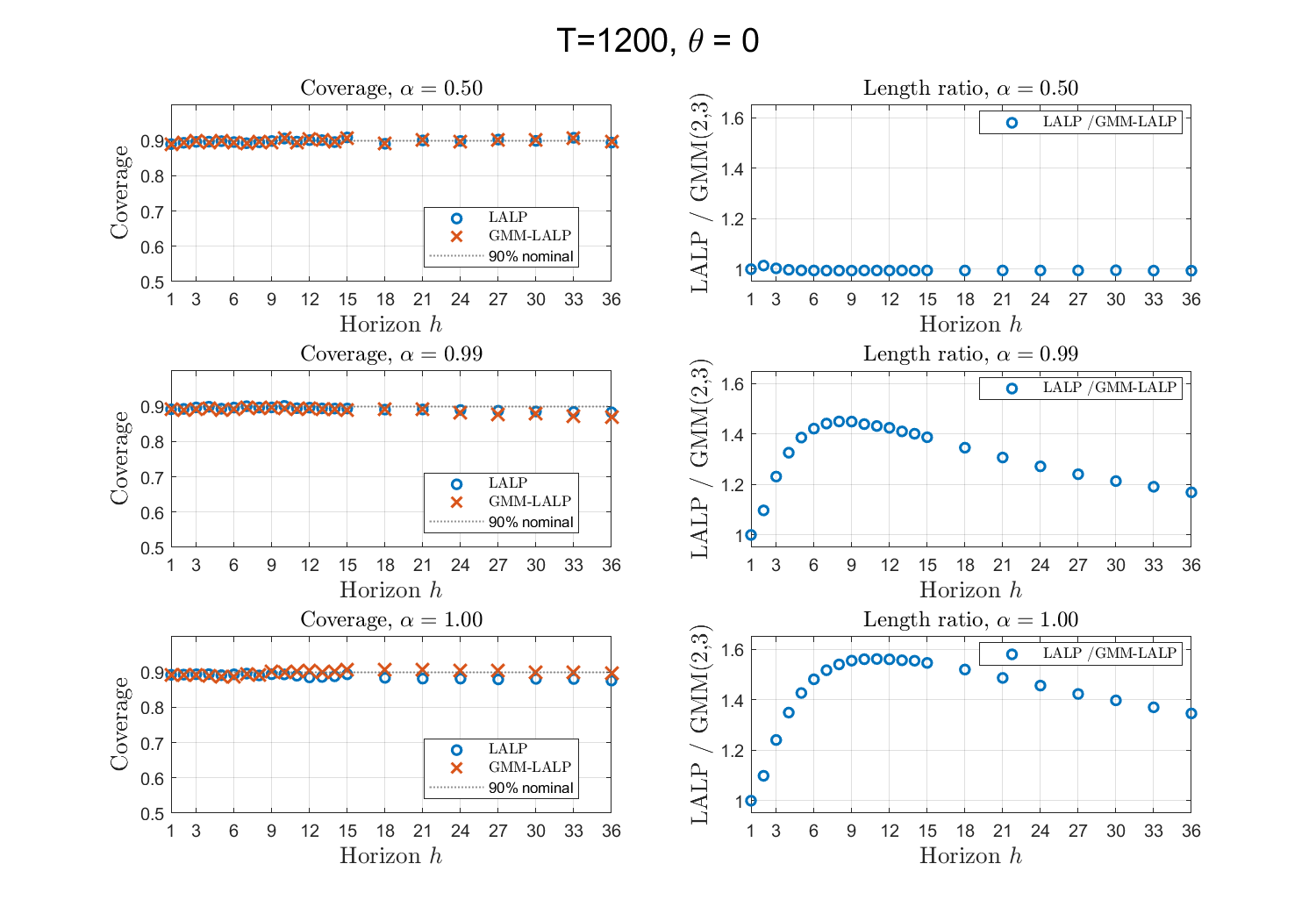}
\caption{Bootstrap confidence intervals (CIs) based on GMM(2,3) and LALP methods, under the AR(1) data generating process with AR coefficient $\alpha \in \{0.5, 0.99, 1\}$, $T = 1200$.}
\label{fig:coverage_bandwidth_theta1_1200}
        \vspace{-0.0em}
        \footnotesize
        \textit{Notes:}
Left panels plot the coverage probabilities of bootstrap CIs (Nominal level: 90\%). Right panels plot the ratio of the length of LALP to GMM-LALP CIs; values above one indicate that GMM-LALP CIs are shorter.
    \end{minipage}
\end{figure}

\begin{figure}[t]
    \centering
 \begin{minipage}{1.05\textwidth}
       \noindent\hspace*{-2cm}%
        \includegraphics[
            width=1.2\linewidth,
            height=1.99\textheight,
            keepaspectratio
        ]{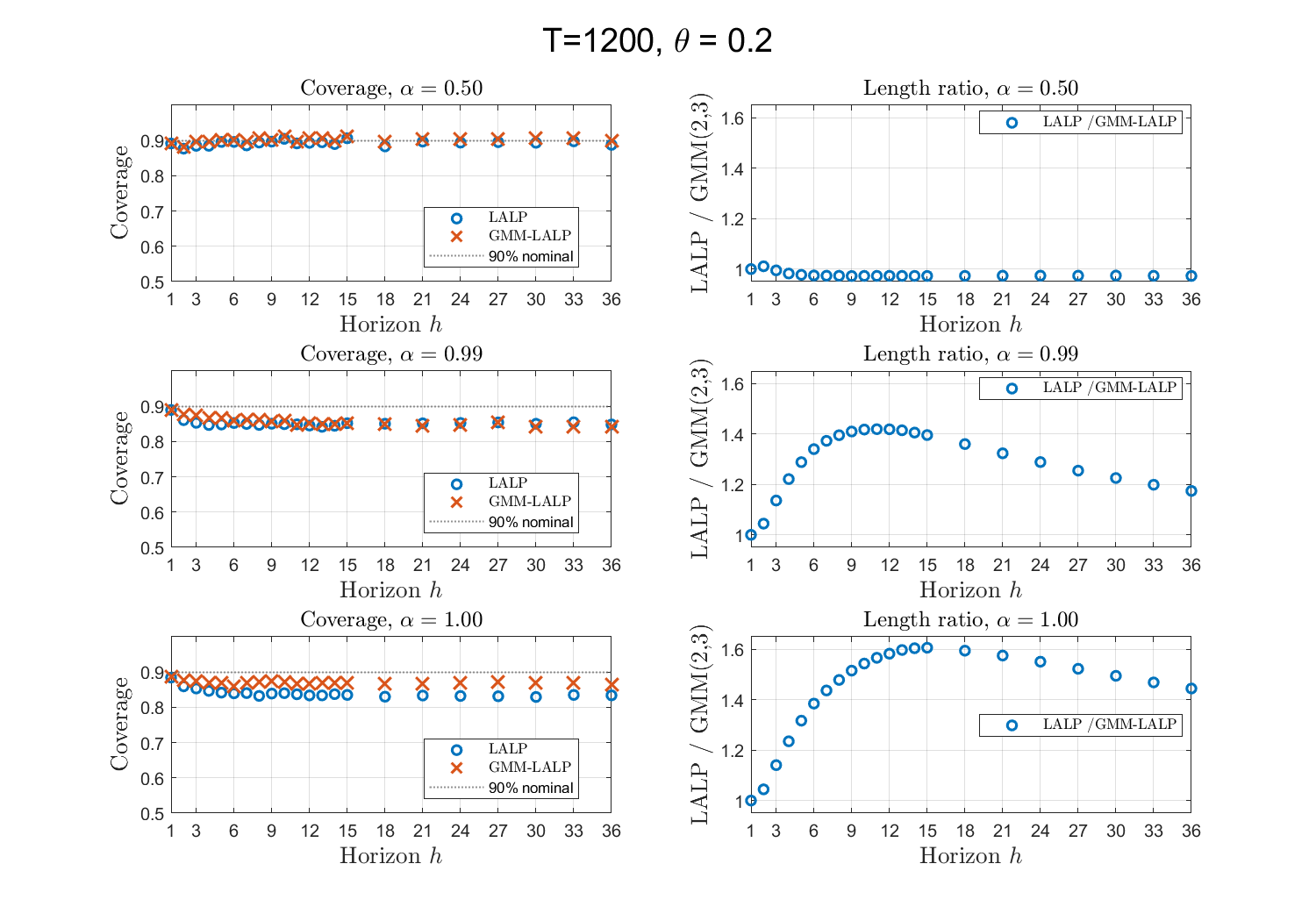}
\caption{Bootstrap confidence intervals (CIs) based on GMM(2,3) and LALP, under the ARMA(1,1) data generating process with AR coefficient $\alpha \in \{0.5, 0.99, 1\}$ and MA coefficient $\theta=0.2$, $T = 1200$.}
\label{fig:coverage_bandwidth_theta2_1200}
        \vspace{-0.0em}
        \footnotesize
        \textit{Notes:}
Left panels plot the coverage probabilities of bootstrap CIs (Nominal level: 90\%). Right panels plot the ratio of the length of LALP to GMM-LALP CIs; values above one indicate that GMM-LALP CIs are shorter.
    \end{minipage}
\end{figure}

We also consider a much larger sample size, $T=1200$, albeit one that may be unrealistic in applications, to better approximate the asymptotic setting underlying our theoretical results. The results are reported in Figures \ref{fig:coverage_bandwidth_theta1_1200} and \ref{fig:coverage_bandwidth_theta2_1200}. Both CIs now have the coverage rate quite close to the nominal level across all cases. The differences in CI lengths are even larger; LALP CIs can be up to 60\% wider than GMM-LALP CIs when the data are persistent.

\section{Empirical Application}
\label{sec:application}

We apply the proposed GMM-LALP estimator to the monthly five-variable VAR model of \cite{baker2016measuring}, which has become a standard setting for studying the macroeconomic
effects of uncertainty shocks. The data contain $360$ monthly
observations from January 1985 through December 2014.\footnote{The data are downloaded from 
Nicholas Bloom's personal website, \text{https://nbloom.people.stanford.edu/research}.} Table \ref{tab:variable_definitions} shows that all variables except EPU are very highly persistent,
which is precisely the environment in which efficiency losses from standard OLS
local-projection estimation can be empirically relevant.



\begin{table}[b]
\centering
\caption{Variable definitions and first-order autocorrelations}
\label{tab:variable_definitions}
\begin{tabular}{c|c|c}
\hline\hline
Variable & Definition & \begin{tabular}[c]{@{}c@{}}First-order  autocorrelation\end{tabular} \\
\midrule
EPU        & Economic policy uncertainty index       & 0.6707 \\
SP500      & Logarithm of the S\&P 500 index         & 0.9984 \\
FFR        & Federal funds rate                       & 0.9973 \\
EMPLOYMENT & Logarithm of employment                  & 0.9996 \\
IP         & Logarithm of industrial production       & 0.9995 \\
\hline\hline
\end{tabular}
\end{table}

For each response variable $i$ and each horizon $h=1,\cdots,36$, we estimate a
local projection using the full current vector
$y_t=(\mathrm{EPU}_t,\mathrm{SP500}_t,\mathrm{FFR}_t,\mathrm{EMPLOYMENT}_t,
\mathrm{IP}_t)'$.\footnote{Prior to computing the impulse responses, each series is demeaned. Accordingly, the local-projection regression and GMM-LALP are estimated without the intercept.} The coefficient on the $j$th
element of $y_t$ is interpreted as the horizon-$h$ (reduced-form) response of variable $i$ to
shock variable $j$, conditional on the other current variables and the
lag-augmented controls. We compute both LALP and GMM-LALP estimates and their bootstrap CIs using the same procedures as the previous simulation section. We use 5{,}000 bootstrap replications.

Figure~\ref{fig:application-epu-shock68} reports the responses of all five variables to the EPU shock with 68\% confidence intervals.\footnote{68\% confidence intervals are quite common in empirical macroeconomics, especially for graphical presentation of impulse responses. We also compute 90\% confidence intervals for these responses. See Appendix \ref{app:IRF}.} To examine whether the efficiency gain of GMM-LALP survives a richer dynamic specification, the figure reports two lag choices: the upper panel uses the AIC-selected (Akaike information
criterion) baseline lag order, $d=3$, and the lower panel uses a substantially larger lag order, $d=12$. With lag augmentation, both
LALP and GMM-LALP use $p=d+1$ (thus 4 or 13) monthly lags as regressors, and GMM uses an additional lag as the excluded instrumental variable.
Thus, the comparison is not driven by a parsimonious lag specification alone; it asks whether optimal GMM weighting continues to improve precision even when the lag-augmented projection includes a large number of controls.

Figure~\ref{fig:application-epu-shock68} can be viewed as a non-structural \textit{local-projection} counterpart to the baseline monthly VAR evidence in \cite{baker2016measuring}. Our exercise differs from \cite{baker2016measuring} in an important respect: we do not impose a recursive structural identification scheme (which can be controversial), and hence the responses in Figure~\ref{fig:application-epu-shock68} should not be read as structural IRFs or compared to their estimates in magnitude.\footnote{Their Figure VIII uses the same broad five-variable monthly system---EPU, the stock market, the federal funds rate, employment, and industrial production---with three lags, and reports Cholesky-identified responses of employment and industrial production to an upward EPU innovation.} However, similar to Figure VIII of \cite{baker2016measuring}, the responses of employment and industrial production move downward after an EPU innovation, indicating that the same empirical system points to weaker real activity following increases in policy uncertainty.

 Overall, except for a few cases, LALP and GMM-LALP deliver similar point estimates for the impulse responses. The GMM-LALP confidence intervals are systematically shorter than the LALP confidence intervals, reflecting the efficiency gain from using valid lagged instruments and optimally weighting the resulting moment conditions. Using $d=3$, the LALP CI for the response of IP can be up to 55\% wider than the GMM CI (on average 22\% wider). Importantly, this pattern remains visible in the lower panel with a larger lag order, albeit to a lesser extent.  Using $d=12$, the LALP CI for the response of IP can be up to almost 30\% wider than the GMM CI (on average 15\% wider).

 The tighter CIs by GMM are also economically meaningful. Using $d=3$, for the response of IP, the LALP CIs cover the zero line (meaning the effects are insignificant) across all horizons, while this is not true for GMM CIs. For the response of EMPLOYMENT, GMM-LALP CIs indicate statistical significance at almost half of the horizons, whereas LALP CIs do so at only a couple of horizons.
In summary, the empirical IRFs reinforce the paper's theoretical and simulation results. The optimal GMM weighting provides additional precision conditional on that same lag-augmented specification. 

\begin{figure}[t]
\makebox[\linewidth][c]{
\includegraphics[width=1.1\textwidth]{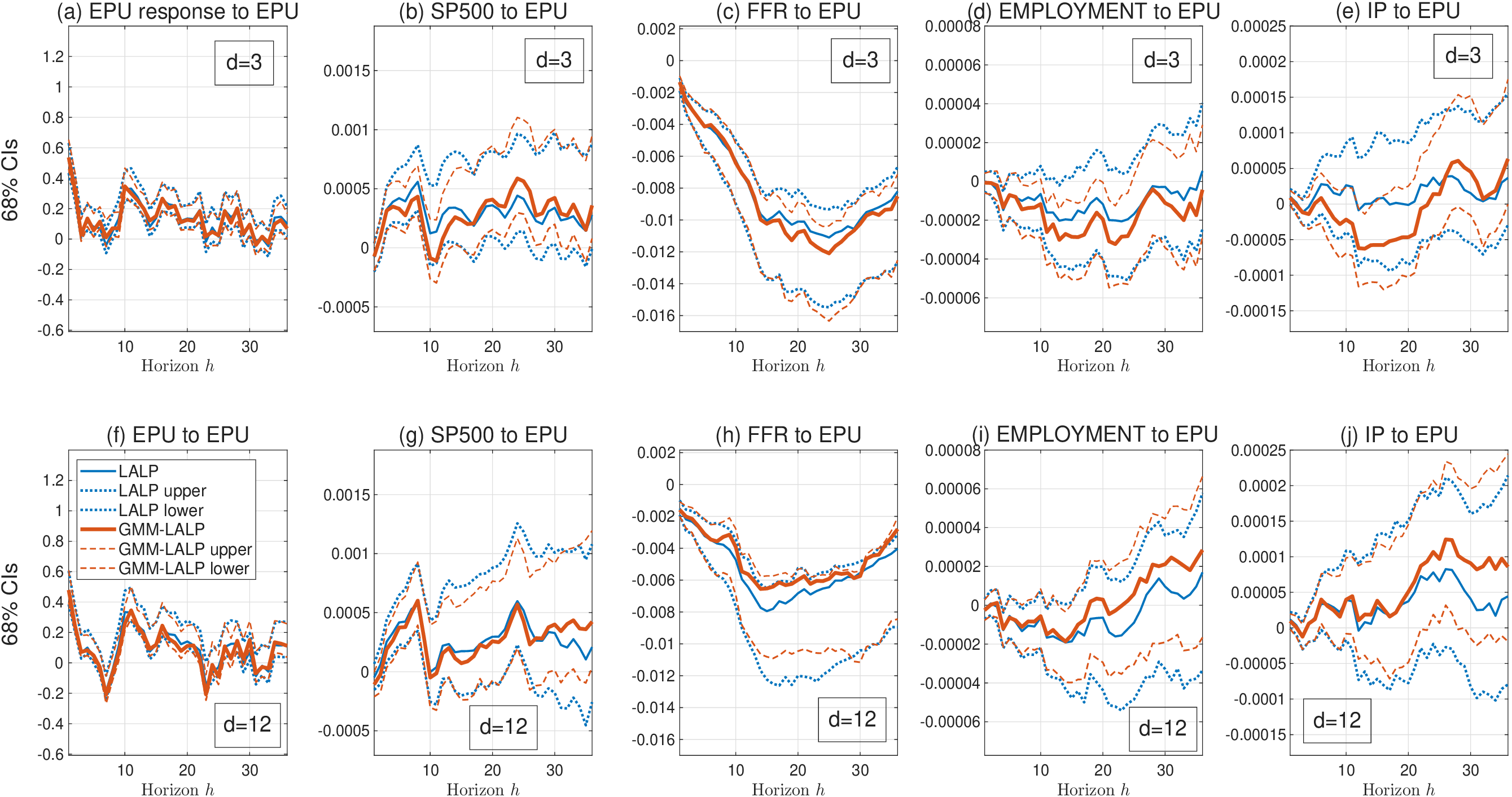}
}
\caption{Responses to an EPU innovation.}
\label{fig:application-epu-shock68}
\begin{minipage}{1\textwidth}
\vspace{2pt}
\footnotesize
\textit{Notes:} The figure plots the impulse responses for EPU, SP500,
FFR, EMPLOYMENT, and IP to an EPU innovation with 68\% confidence intervals. Blue lines are LALP estimates and
bootstrap confidence intervals; orange lines are GMM-LALP estimates and bootstrap confidence intervals. The
model lag order is $d=3$ for the upper panels, and is $d=12$ for the lower panels. The horizons run from 1 to 36 months.
\end{minipage}
\end{figure}

\section{Conclusion}
\label{sec:conclu}

This paper proposes a new method for estimating impulse responses and conducting inference based on local projections. Rather than estimating a local projection by OLS for a given set of lag controls, we exploit additional lags of the data as excluded instruments and optimally combine the resulting moment conditions using GMM. We show that these additional moments can deliver substantial efficiency gains relative to OLS, particularly when the data are persistent. We formally establish the robustness of the proposed inference to data persistence and model misspecification.

\renewcommand{\refname}{References}
\putbib

\end{bibunit}

\newpage
\appendix
\titleformat{\section}
  {\normalfont\normalsize\scshape\filcenter}{Appendix \thesection.}{0.7em}{}

\numberwithin{equation}{section}

\begin{bibunit}

\begin{center}

{\Large\bfseries Supplementary Appendix to\\[0.4em]
``Intentionally Overidentifying Local Projections''}

\vspace{1.2em}

{\large
Junjie Guo\footnotemark[1]
\hspace{2em}
Endong Wang\footnotemark[2]
\hspace{2em}
Ke-Li Xu\footnotemark[3]}

\vspace{0.4em}

{\small
\textit{CUFE}
\hspace{4.2em}
\textit{University of Mannheim}
\hspace{4.2em}
\textit{Indiana University}}

\vspace{0.8em}

{\small \today}

\end{center}

\footnotetext[1]{School of Finance, Central University of Finance and Economics,
Shahe Higher Education Park, Changping District, Beijing, P.R. China, 102206.
Email: \href{mailto:junjguog@cufe.edu.cn}{junjguog@cufe.edu.cn}.}

\footnotetext[2]{Department of Economics, University of Mannheim,
L7, 3--5, 68161 Mannheim, Germany.
Tel.: +49 621-181-1879;
Email: \href{mailto:endong.wang@uni-mannheim.de}{endong.wang@uni-mannheim.de}.}

\footnotetext[3]{Department of Economics, Indiana University,
100 S. Woodlawn Ave., Bloomington, IN 47405, USA.
Email: \href{mailto:kelixu@iu.edu}{kelixu@iu.edu}.}

\begin{center}
\textbf{Contents}

\

\begin{tabular}{@{}p{0.16\textwidth}p{0.76\textwidth}@{}}
Appendix~A & AR(1) example \\
\quad A.1 & GMM-LP \\
\quad A.2 & GMM-LALP (Proof of Proposition~\ref{prop:ar1_ordering}) \\
Appendix~B & Proof of Proposition~\ref{prop:optimal_iv_gmm} \\
Appendix~C & Proof of Proposition~\ref{prop:local-lalp-representation} \\
Appendix~D & Proofs of Theorems~\ref{thm:persistence-uniform-gmm} and~\ref{theorem:local_mis_lalp} \\
& \quad Lemmas~\ref{lem:rho-tau-dichotomy}--\ref{lem:stationary-regime}, Remarks \ref{rem:local-intuition}-\ref{rem:mpqw} \\
Appendix~E & Assumption~\ref{ass:rotated-nonsingularity} under local misspecification \\
Appendix~F & Strict efficiency of optimal GMM \\
& \quad Lemma~\ref{lem:gmm_vs_ols} \\
Appendix~G & Bootstrap implementation \\
Appendix~H & 90\% confidence intervals in the empirical application
\end{tabular}
\end{center}

\newpage
Throughout the appendices, $\|\cdot\|$ denotes the Euclidean norm for
vectors and the Frobenius norm for matrices, while $\|\cdot\|_2$
denotes the spectral norm for matrices.

\vspace{0.5\baselineskip}

\section{AR(1) example}
\label{app:ar1_derivations}

This appendix collects the algebra behind the AR(1) efficiency comparisons in Section~\ref{sec:eff}. Throughout, $\{y_t\}$ follows AR(1), where the innovation satisfies $\mathbb E u_t=0$ and $\mathbb E u_t^2=\sigma^2$. All variance formulas below are normalized by $\sigma^2$. For the stationary calculations, let $c_\ell:=\sigma^{-2}\operatorname{Cov}(y_t,y_{t-\ell})$ for $\ell\ge0$, so $c_\ell=\alpha^\ell/(1-\alpha^2)$. For $0\le\ell\le h-1$, define
$\gamma_{h,\ell}
:=
\sigma^{-2}\operatorname{Cov}(e_{t,h},e_{t+\ell,h})
=
\alpha^\ell\sum_{j=0}^{h-\ell-1}\alpha^{2j},
$
with $\gamma_{h,\ell}=0$ for $\ell\ge h$ and $\gamma_{h,-\ell}=\gamma_{h,\ell}$.

\subsection{GMM-LP}
\label{app:gmm_lp}

In Subsection \ref{app:gmm_lp} we assume $|\alpha|<1$. The GMM-LP estimator (i.e. GMM(1,$q$)) uses $y_t$ as the regressor and $Y_{t,q}=(y_t,y_{t-1},\cdots,y_{t-q+1})'$ as the instrument vector. Define
$\omega_{h,r}
:=
\sigma^{-4}\operatorname{LRCov}(y_t e_{t,h},y_{t-r}e_{t,h})
=
\sum_{\ell=1-h}^{h-1} c_{|\ell-r|}\gamma_{h,|\ell|}.$
For $q\ge1$, define the $q \times q$ matrix,
\begin{equation}
\Omega_{q,h}
:=
\sigma^{-4}\operatorname{LRV}(Y_{t,q}e_{t,h})
=
(\omega_{h,|i-j|})_{i,j=0}^{q-1}.
\label{eq:Omega_qh}
\end{equation}
With regressor $Y_{t,p}$ and instrument $Y_{t,q}$, write $\Sigma_{p,q}:=\sigma^{-2}\mathbb E[Y_{t,p}Y_{t,q}']$.

The case $q=1$ coincides with OLS LP. Since $\Sigma_{1,1}=(1-\alpha^2)^{-1}$ and $\Omega_{1,h}=\omega_{h,0}$,
\begin{equation}
V_h^{\mathrm{LP}}
=
(1-\alpha^2)^2\omega_{h,0},
\label{eq:vlp}
\end{equation}
where direct evaluation gives
$\omega_{h,0}
=
\frac{1+\alpha^2-(2h+1)\alpha^{2h}+(2h-1)\alpha^{2h+2}}
{(1-\alpha^2)^3}.$

\textbf{Two instruments}. With two IVs, $\Sigma_{1,2}=(1-\alpha^2)^{-1}(1,\alpha)$ and
$\omega_{h,1}
=
\frac{2\alpha(1-\alpha^{2h})-h\alpha^{2h-1}(1-\alpha^4)}
{(1-\alpha^2)^3}.$
Then
\begin{align}
V_h^{\mathrm{GMM}(1,2)}
=
\bigl(\Sigma_{1,2}\Omega_{2,h}^{-1}\Sigma_{1,2}'\bigr)^{-1}
=
\frac{(1-\alpha^2)^2(\omega_{h,0}^2-\omega_{h,1}^2)}
{\omega_{h,0}(1+\alpha^2)-2\alpha\omega_{h,1}}.
\label{eq:V_gmm_lp}
\end{align}
With $\rho_{h,1}:=\omega_{h,1}/\omega_{h,0}$, we have $
\frac{V_h^{\mathrm{LP}}}{V_h^{\mathrm{GMM}(1,2)}}
=
\frac{1+\alpha^2-2\alpha\rho_{h,1}}{1-\rho_{h,1}^2}
=
1+\frac{(\alpha-\rho_{h,1})^2}{1-\rho_{h,1}^2}.$
Since $\Omega_{2,h}$ is nonsingular, $|\rho_{h,1}|<1$, so $\mathrm{GMM}(1,2)$ weakly improves on OLS LP, with equality only in the redundant cases $h=1$ or $\alpha=0$. Figure \ref{fig:relative_gmm_lp} visualizes the efficiency gain of $\mathrm{GMM}(1,2)$, which depends on $\alpha$ and $h$.

\begin{figure}[t]
\makebox[\linewidth][c]{
\includegraphics[width=1.2\linewidth]{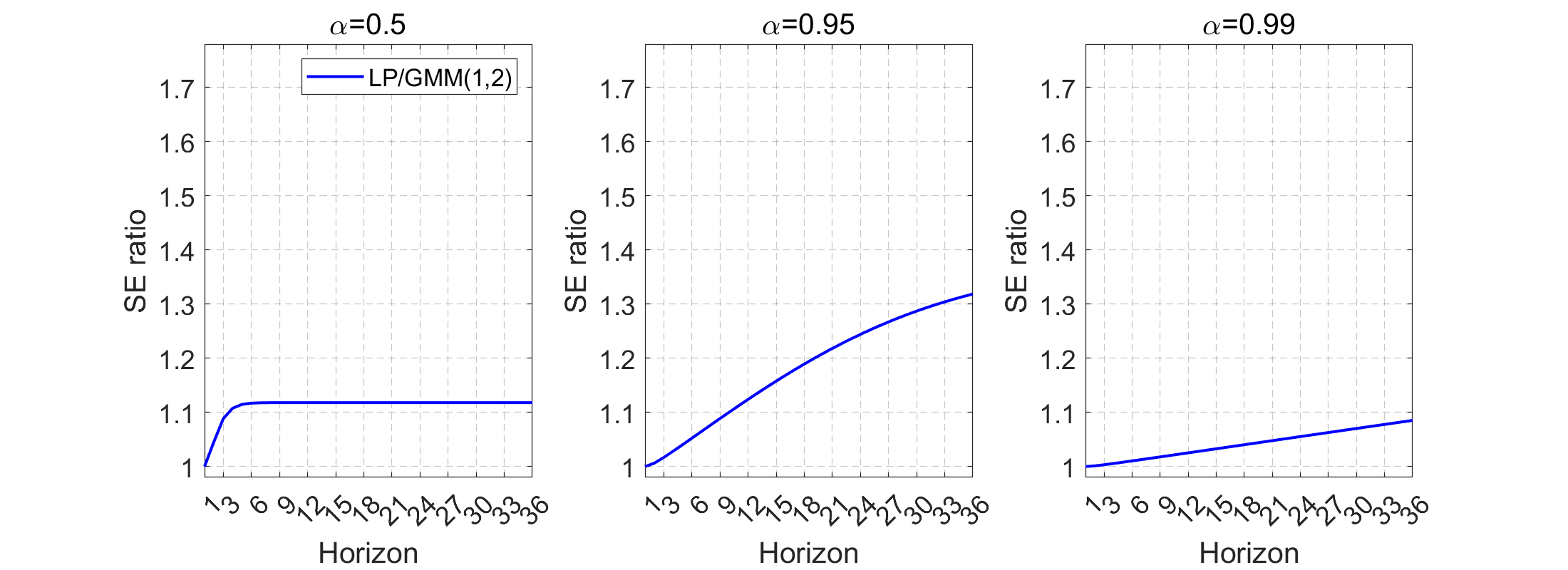}
}
\caption{Asymptotic standard-error ratio for $\mathrm{GMM}(1,2)$.}
\label{fig:relative_gmm_lp}
\vspace{0.4em}
\begin{minipage}{1\linewidth}
\footnotesize
\emph{Note:} The figure plots $\sqrt{V_h^{\mathrm{LP}}/V_h^{\mathrm{GMM}(1,2)}}$. We compute $V_h^{\mathrm{LP}}$ using \eqref{eq:vlp} and compute $V_h^{\mathrm{GMM}(1,2)}$ using \eqref{eq:V_gmm_lp}.
\end{minipage}
\end{figure}

\textbf{More instruments}. For general $q\ge2$, the regressor is $y_t$ and the instrument vector is $Y_{t,q}$. With $\Sigma_{1,q}=(1-\alpha^2)^{-1}(1,\alpha,\cdots,\alpha^{q-1})$ and score long-run covariance $\Omega_{q,h}$,
\begin{align}
V_h^{\mathrm{GMM}(1,q)}
=
\bigl(\Sigma_{1,q}\Omega_{q,h}^{-1}\Sigma_{1,q}'\bigr)^{-1}.
\label{eq:V_gmm_1q}
\end{align}
The sequence $\{V_h^{\mathrm{GMM}(1,q)}\}_{q\ge1}$ is non-increasing, since adding valid moments weakly improves optimally weighted GMM efficiency. Figure \ref{fig:relative_gmm_lp_IV} shows the efficiency gain of $\mathrm{GMM}(1,q)$ over OLS, for a few values of $q$. It is interesting to note that the efficiency gain of our ``intentionally overidentified'' GMM over the LP (Figure \ref{fig:relative_gmm_lp_IV}) is generally less impressive than that over the LALP; c.f. Figure \ref{fig:gmm_lalp_ratio}.

\begin{figure}[t]
\makebox[\linewidth][c]{
\includegraphics[width=1.2\linewidth]{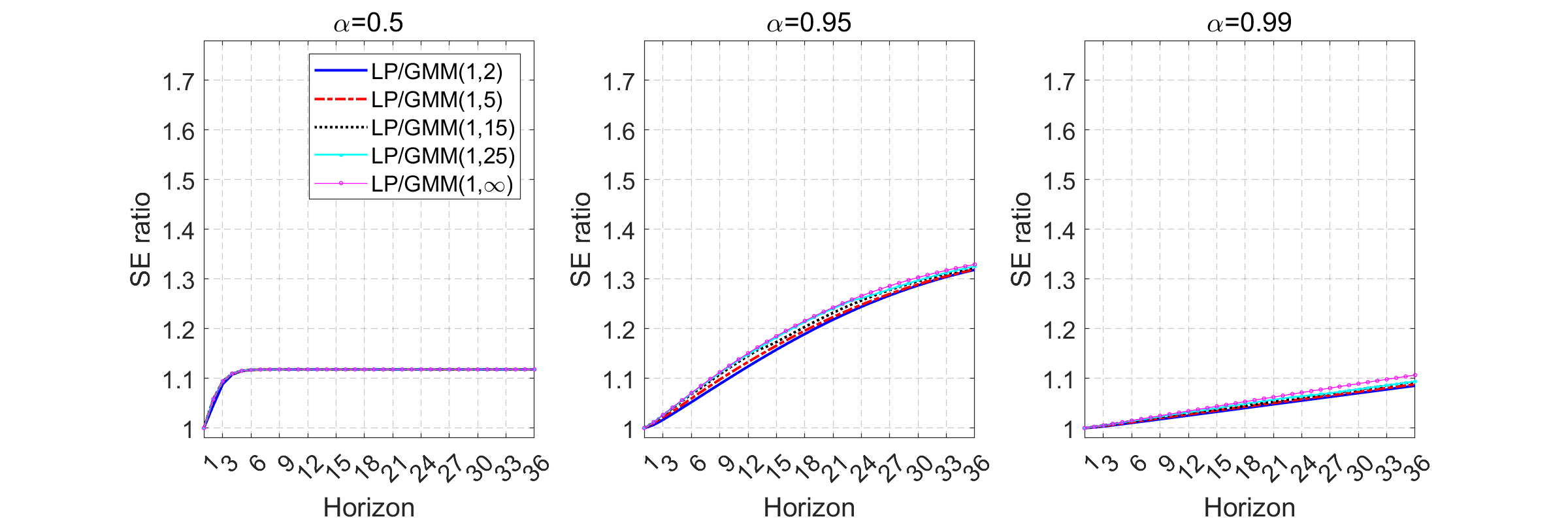}
}
\caption{Asymptotic standard-error ratio for $\mathrm{GMM}(1,q)$, $q \ge 2$.}
\label{fig:relative_gmm_lp_IV}
\vspace{0.4em}
\begin{minipage}{1\linewidth}
\footnotesize
\emph{Note:} The figure plots $\sqrt{V_h^{\mathrm{GMM}(1,1)}/V_h^{\mathrm{GMM}(1,q)}}$ for $q\in\{2,5,15,25,\infty\}$, where $q$ is the number of instruments, and $\mathrm{GMM}(1,1)$ is OLS local projection. We compute
$V_h^{\mathrm{GMM}(1,1)}$ from \eqref{eq:vlp}, the finite-$q$ variances
from \eqref{eq:V_gmm_1q}, and the infinite-instrument limit from
\eqref{eq:V_gmm_lp_infty}.
\end{minipage}
\end{figure}

\textbf{Infinite instruments}.
The closed linear span of
$\{y_{t-j}\}_{j\ge0}$ equals that of $\{u_{t-j}\}_{j\ge0}$.
Since optimal GMM is invariant to nonsingular linear transformations
of the instrument vector, the limit of \eqref{eq:V_gmm_1q} can be
computed using the infinite instrument vector
$(u_t,u_{t-1},\ldots)'$.

Let $G=(G_0,G_1,\ldots)'$ denote the
regressor--instrument moment vector. Since
$y_t=\sum_{j\ge0}\alpha^j u_{t-j}$ and
$\mathbb E(u_t^2)=\sigma^2$, $G_r
:=
\sigma^{-2}\mathbb E[u_{t-r}y_t]
=
\alpha^r, $ $ r\ge0.$
Thus $G=(1,\alpha,\alpha^2,\ldots)'$.

Recall that $\gamma_{h,\ell}
:=
\sigma^{-2}\operatorname{Cov}(e_{t,h},e_{t+\ell,h})$, with $\gamma_{h,\ell}=0$ for $\ell\ge h$. The
infinite-dimensional covariance matrix is
$
\Omega
=
\bigl(\gamma_{h,|r-s|}\bigr)_{r,s\ge0}.
$

Thus, the optimal-GMM asymptotic variance is $(G'\Omega^{-1}G)^{-1}$. We will have
\begin{align}
\Omega^{-1}G
&=
\frac{1}{\gamma_{h,0}}
\left(
1,
\underbrace{0,\ldots,0}_{h-1},
\alpha^h,
\underbrace{0,\ldots,0}_{h-1},
\alpha^{2h},
\underbrace{0,\ldots,0}_{h-1},
\ldots
\right)'.
\label{eq:infinite-gmm-solution}
\end{align}
To verify this, consider the $r$th element after premultiplication by
$\Omega$. If $r=nh$ for positive integer $n$, only the element at position $nh$ contributes,
giving $\alpha^{nh}=G_{nh}$. If $r=nh+\ell$ with
$1\le\ell\le h-1$, only the elements at positions $nh$ and
$(n+1)h$ contribute. Using
$\gamma_{h,\ell}+\alpha^h\gamma_{h,h-\ell}
=\alpha^\ell\gamma_{h,0}$ gives
$\alpha^{nh+\ell}=G_r$.

It follows that
$G'\Omega^{-1}G
=\gamma_{h,0}^{-1}\sum_{n\ge0}\alpha^{2nh}
=\{\gamma_{h,0}(1-\alpha^{2h})\}^{-1}$.
Therefore,
\begin{align}
\lim_{q\to\infty}V_h^{\mathrm{GMM}(1,q)}
&=
\left(G'\Omega^{-1}G\right)^{-1}
=
\gamma_{h,0}(1-\alpha^{2h})
=
\frac{(1-\alpha^{2h})^2}{1-\alpha^2}.
\label{eq:V_gmm_lp_infty}
\end{align}

\textbf{Comparison with the iterated AR estimator}. The iterated AR impulse response estimator applies the nonlinear transformation $\alpha$ as $\alpha^h$ to the OLS estimate of $\alpha$. Since $\sqrt T(\widehat\alpha-\alpha)\overset{d}{\to}N(0,1-\alpha^2)$, the asymptotic variance of the AR estimator is
$V_h^{\mathrm{AR}}=h^2\alpha^{2h-2}(1-\alpha^2),$ where $\alpha \neq 0.$
Combining this expression with \eqref{eq:V_gmm_lp_infty},
\begin{align}
\frac{\lim_{q\to\infty}V_h^{\mathrm{GMM}(1,q)}}{V_h^{\mathrm{AR}}}
=
\left(
\frac{h^{-1}\sum_{j=0}^{h-1}\alpha^{2j}}
{|\alpha|^{h-1}}
\right)^2.
\label{eq:lp_vs_ar_ratio}
\end{align}
The numerator inside the square is the arithmetic mean of $\{\alpha^{2j}\}_{j=0}^{h-1}$ and the denominator is their geometric mean. By AM--GM (arithmetic mean–geometric mean inequality), the ratio is weakly above one, strictly so whenever $h>1$ and $0<|\alpha|<1$. Infinite-instrument GMM-LP therefore does not attain the iterated AR variance; the gap reflects the unimposed cross-horizon restriction $\beta_h=\alpha^h$.

\subsection{GMM-LALP (Proof of Proposition~\ref{prop:ar1_ordering})}
\label{subsec:hac_gmmlalp}
The first overidentified GMM-LALP estimator (i.e. GMM(2,3)) estimates LA-LP using one excluded lag as the IV: the regressor is $Y_{t,2}$ and the instrument is $Y_{t,3}$. For $q\ge3$, GMM(2,$q$) uses the instrument $Y_{t,q}$.

\textit{Proof of Proposition~\ref{prop:ar1_ordering}}.

\emph{(i)} For $|\alpha|<1$,
$P(y_{t-2}\mid y_{t-1})=\alpha y_{t-1}$, since
$\mathbb E[y_{t-2}y_{t-1}]
=\alpha\mathbb E[y_{t-2}^2]
=\alpha\mathbb E[y_{t-1}^2]$.
For $k\geq0$, define
$v_{t-k}:=y_{t-k-1}-\alpha y_{t-k}$.
Since
$\operatorname{Cov}(y_t,y_{t-j})
=\sigma^2\alpha^j/(1-\alpha^2)$ for $j\geq0$,
$\mathbb E v_{t-k}=0$ and
$\mathbb E v_{t-k}^2=\sigma^2$.
For $k\geq1$,
$\operatorname{Cov}(v_t,v_{t-k})
=\sigma^2\bigl(\alpha^k-\alpha^k-\alpha^{k+2}+\alpha^{k+2}\bigr)/(1-\alpha^2)
=0$.

Since $y_{t-k-1}=v_{t-k}+\alpha y_{t-k}$, it gives
$\operatorname{span}(y_{t-1},\ldots,y_{t-q+1})
=\operatorname{span}(y_{t-1},v_{t-1},\ldots,v_{t-q+2})$ by repeated substitution.
For $1\leq j\leq k$,
$\operatorname{Cov}(v_{t-k},y_{t-j})
=
\sigma^2
\bigl(
\alpha^{k+1-j}-\alpha^{k+1-j}
\bigr)/(1-\alpha^2)=0$.
Thus
$v_{t-k}\perp\operatorname{span}(y_{t-1},\ldots,y_{t-k})$.

After partialling out $y_{t-1}$, $y_t$ becomes $u_t$ and the
$v_{t-k}$ remain unchanged. Hence, we let
$\Omega_{h,q}^{\perp}
:=\sigma^{-4}\operatorname{LRV}(z_{t,q}^{\perp}e_{t,h})$ where $ z_{t,q}^{\perp}=
\bigl(u_t,v_{t-1},v_{t-2},\ldots,v_{t-q+2}\bigr)'.$
Using $\operatorname{Cov}(v_t,v_{t-k})=0$ for $k\geq1$ gives
\begin{align}
\Omega_{h,q}^{\perp}
&=
\begin{pmatrix}
\gamma_{h,0} & r_{h,q}' \\
r_{h,q} & A_{h,q}
\end{pmatrix},
\label{eq:Omega_perp_main}
\end{align}
where, for $k=1,\ldots,q-2$,
\begin{align}
(r_{h,q})_k
&=
\begin{cases}
-(h-k)\alpha^{2h-k-1}, & 1\leq k\leq h-1,\\
0, & k\geq h,
\end{cases}
\label{eq:r_hq}
\end{align}
and, for $k,\ell=1,\ldots,q-2$,
\begin{align}
(A_{h,q})_{k\ell}
&=
\gamma_{h,|k-\ell|}
=
\begin{cases}
\alpha^{|k-\ell|}
\displaystyle\sum_{j=0}^{h-|k-\ell|-1}\alpha^{2j},
& |k-\ell|\leq h-1,\\
0,
& |k-\ell|\geq h.
\end{cases}
\label{eq:A_hq}
\end{align}

Since
$\sigma^{-2}\mathbb E[y_t^{\perp}z_{t,q}^{\perp\prime}]
=(1,0,\ldots,0)$, the Schur-complement formula\footnote{For a partitioned matrix
$M=\begin{pmatrix}A&B\\ C&D\end{pmatrix}$ with $D$ nonsingular,
the Schur complement of $D$ in $M$ is
$A-BD^{-1}C$. If $A-BD^{-1}C$ is nonsingular, then
$(M^{-1})_{11}=(A-BD^{-1}C)^{-1}$.
See \citet[Eq.~(0.7.3.1)]{horn2013matrix}.} gives
\begin{align}
V_h^{\mathrm{GMM}(2,q)}
&=
\gamma_{h,0}
-
r_{h,q}'A_{h,q}^{-1}r_{h,q}.
\label{eq:V_gmmlalp_q_main}
\end{align}

For $q=3$,
$A_{h,3}=\gamma_{h,0}$ and
$r_{h,3}=-(h-1)\alpha^{2h-2}$, so
\begin{align}
V_h^{\mathrm{GMM}(2,3)}
&=
\gamma_{h,0}
-
\frac{(h-1)^2\alpha^{4h-4}}{\gamma_{h,0}}=
\sum_{j=0}^{h-1}\alpha^{2j}
-
\frac{(h-1)^2\alpha^{4h-4}}
{\sum_{j=0}^{h-1}\alpha^{2j}}.
\label{eq:V_gmm23_compact}
\end{align}

The derivation above is for $|\alpha|<1$. The uniform asymptotic
theory in Section~\ref{sec:uniform} implies that the same variance
formulas (continuous in $\alpha$ and bounded from below) apply at $\alpha=1$. In particular,
$V_h^{\mathrm{GMM}(2,3)}=(2h-1)/h$, when $\alpha=1$.

\emph{(ii)}
Let $A_{h,\infty}$ and $r_{h,\infty}$ be given by
\eqref{eq:A_hq} and \eqref{eq:r_hq}, respectively, with
$k,\ell\geq1$. Since $\gamma_{h,\ell}=0$ for $\ell\geq h$,
$(A_{h,\infty})_{k\ell}=0$ whenever $|k-\ell|\geq h$.

For $h\geq2$,
\begin{align}
\begin{split}
\left(A_{h,\infty}^{-1}r_{h,\infty}\right)_{1:h}
&=
\left(
-\alpha^{2h-2},
-(1-\alpha^2)\alpha^{2h-3},
\ldots,
-(1-\alpha^2)\alpha^h,
\alpha^{h+1}
\right)',\\
\left(A_{h,\infty}^{-1}r_{h,\infty}\right)_{k+h}
&=
\alpha^h
\left(A_{h,\infty}^{-1}r_{h,\infty}\right)_k,
\qquad k\geq1.
\end{split}
\label{eq:infinite-gmmlalp-solution}
\end{align}
To verify \eqref{eq:infinite-gmmlalp-solution}, multiplication by
$A_{h,\infty}$ gives $(r_{h,\infty})_k$ for
$k=1,\ldots,h-1$. For $k\geq h$, the second line of
\eqref{eq:infinite-gmmlalp-solution} and
$(A_{h,\infty})_{k\ell}=0$ for $|k-\ell|\geq h$ give zero.
Hence
$A_{h,\infty}(A_{h,\infty}^{-1}r_{h,\infty})=r_{h,\infty}$.

Since $(r_{h,\infty})_k=0$ for $k\geq h$,
\begin{align}
r_{h,\infty}'A_{h,\infty}^{-1}r_{h,\infty}
&=
(h-1)\alpha^{4h-4}
+
(1-\alpha^2)
\sum_{k=2}^{h-1}
(h-k)\alpha^{4h-2k-2}=
\alpha^{2h}\gamma_{h-1,0}.
\label{eq:infinite-gmmlalp-quadratic}
\end{align}
The last equality follows from
$\gamma_{h-1,0}=\sum_{j=0}^{h-2}\alpha^{2j}$.

Substituting \eqref{eq:infinite-gmmlalp-quadratic} into
\eqref{eq:V_gmmlalp_q_main} gives
\begin{align}
V_h^{\mathrm{GMM}(2,\infty)}
&=
\gamma_{h,0}
-
\alpha^{2h}\gamma_{h-1,0}
=
\sum_{j=0}^{h-1}\alpha^{2j}
-
\sum_{j=h}^{2h-2}\alpha^{2j}.
\label{asym:var:inf}
\end{align}
For $h=1$, $r_{1,\infty}=0$, so \eqref{asym:var:inf} follows directly.

\vspace{0.5\baselineskip}

\section{Proof of Proposition \ref{prop:optimal_iv_gmm}}
\label{app:proof_optimal_iv}

The result is algebraic. Let $\widetilde Y_{t,q}=(z_{t,p,q}',w_{t,p}')'$ be the
reordered instrument vector, with $z_{t,p,q}=(y_t',y_{t-p}',\cdots,y_{t-q+1}')'$
and $Y_{t,p}=(y_t',w_{t,p}')'$. Since $\widetilde Y_{t,q}$ is a permutation of
$Y_{t,q}$, the GMM estimator is unchanged after the corresponding permutation of
the moments and weight matrix.

Let $\widehat\Phi$ and $\widehat\Theta$ denote the sample residualization
matrices such that
\begin{align}
\widehat\Phi Y_{t,p}
&=
\begin{bmatrix}
\widehat y_t^\perp\\
w_{t,p}
\end{bmatrix},
&
\widehat\Theta \widetilde Y_{t,q}
&=
\begin{bmatrix}
\widehat z_{t,p,q}^\perp\\
w_{t,p}
\end{bmatrix}.
\label{eq:app_resid_transforms}
\end{align}
Thus $\widehat\Phi$ partials $w_{t,p}$ out of $y_t$ and $\widehat\Theta$ partials
it out of $z_{t,p,q}$, over $t=q,\cdots,T-h$.

Define
$\widetilde\Sigma_{p,q}=T^{-1}\sum_{t=q}^{T-h}Y_{t,p}\widetilde Y_{t,q}'$,
$\widetilde m_{q,h}=T^{-1}\sum_{t=q}^{T-h}\widetilde Y_{t,q}y_{1,t+h}$, and
let $\widetilde\Omega_{q,h}$ be the HAC matrix based on
$\widetilde Y_{t,q}\widehat e_{t,h}$. Since $J_p\widehat\Phi'=J_p$, the
coefficient block on $y_t$ can be computed after this change of coordinates:
\begin{align}
\widehat\beta_h^{\operatorname{GMM}(p,q)}
&=
J_p
\Bigl[
(\widehat\Phi\widetilde\Sigma_{p,q}\widehat\Theta')
(\widehat\Theta\widetilde\Omega_{q,h}\widehat\Theta')^{-1}
(\widehat\Phi\widetilde\Sigma_{p,q}\widehat\Theta')'
\Bigr]^{-1}
\nonumber\\
&\quad\times
(\widehat\Phi\widetilde\Sigma_{p,q}\widehat\Theta')
(\widehat\Theta\widetilde\Omega_{q,h}\widehat\Theta')^{-1}
\widehat\Theta\widetilde m_{q,h}.
\label{eq:app_gmm_transformed}
\end{align}

By least-squares orthogonality,
$\sum_{t=q}^{T-h}\widehat y_t^\perp w_{t,p}'=0$ and
$\sum_{t=q}^{T-h}\widehat z_{t,p,q}^\perp w_{t,p}'=0$, so
\[
\widehat\Phi\widetilde\Sigma_{p,q}\widehat\Theta'
=
\begin{pmatrix}
\widehat\Sigma_{p,q}^{\perp} & 0\\
0 & \widehat W_p
\end{pmatrix},
\qquad
\widehat\Theta\widetilde m_{q,h}
=
\begin{bmatrix}
\widehat m_{p,q,h}^{\perp}\\
\widehat m_{w,h}
\end{bmatrix},
\]
where $\widehat\Sigma_{p,q}^{\perp}=T^{-1}\sum_{t=q}^{T-h}
\widehat y_t^\perp(\widehat z_{t,p,q}^\perp)'$,
$\widehat W_p=T^{-1}\sum_{t=q}^{T-h}w_{t,p}w_{t,p}'$, and
$\widehat m_{p,q,h}^{\perp}=T^{-1}\sum_{t=q}^{T-h}
\widehat z_{t,p,q}^\perp y_{1,t+h}$. Write
\[
\widehat\Theta\widetilde\Omega_{q,h}\widehat\Theta'
=
\begin{pmatrix}
\widehat\Omega_{p,q,h}^{\perp} & \widehat\Xi\\
\widehat\Xi' & \widehat\Omega_{ww}
\end{pmatrix},
\]
whose upper-left block is the HAC matrix based on
$\widehat z_{t,p,q}^\perp\widehat e_{t,h}$.

We use the following block identity. Let $D=\operatorname{diag}(S,W)$ with $W$ square and nonsingular, and let
$\Omega=\begin{pmatrix}\Omega_z&\Xi\\ \Xi'&\Omega_w\end{pmatrix}$ be positive
definite. Then the first block row of $(D\Omega^{-1}D')^{-1}D\Omega^{-1}$ is
$\left(S\Omega_z^{-1}S'\right)^{-1}[S\Omega_z^{-1},0]$: squareness of $W$ makes
both the cross block $\Xi$ and the second moment block drop out of the first
coefficient block, and the Schur identity for $\Omega^{-1}$ reduces its
$(1,1)$-block contribution to $\Omega_z^{-1}$. The same Schur-complement calculation shows that the first diagonal block of
$(D\Omega^{-1}D')^{-1}$ is
$\left(S\Omega_z^{-1}S'\right)^{-1}$. Since $\widehat W_p$ is square and
nonsingular, applying the identity to \eqref{eq:app_gmm_transformed} gives
\begin{align}
\widehat\beta_h^{\operatorname{GMM}(p,q)}
&=
\left(
\widehat\Sigma_{p,q}^{\perp}
(\widehat\Omega_{p,q,h}^{\perp})^{-1}
\widehat\Sigma_{p,q}^{\perp\prime}
\right)^{-1}
\widehat\Sigma_{p,q}^{\perp}
(\widehat\Omega_{p,q,h}^{\perp})^{-1}
\widehat m_{p,q,h}^{\perp}.
\label{eq:app_oiv_reduced}
\end{align}

Define $\widehat z_{t,p,q}^*=\widehat\Sigma_{p,q}^{\perp}
(\widehat\Omega_{p,q,h}^{\perp})^{-1}\widehat z_{t,p,q}^{\perp}$. Then
$\sum_{t=q}^{T-h}\widehat z_{t,p,q}^*\widehat y_t^{\perp\prime}
=T\widehat\Sigma_{p,q}^{\perp}
(\widehat\Omega_{p,q,h}^{\perp})^{-1}\widehat\Sigma_{p,q}^{\perp\prime}$ and
$\sum_{t=q}^{T-h}\widehat z_{t,p,q}^*y_{1,t+h}
=T\widehat\Sigma_{p,q}^{\perp}
(\widehat\Omega_{p,q,h}^{\perp})^{-1}\widehat m_{p,q,h}^{\perp}$, so
\eqref{eq:app_oiv_reduced} is exactly
\[
\widehat\beta_h^{\operatorname{GMM}(p,q)}
=
\left(
\sum_{t=q}^{T-h}
\widehat z_{t,p,q}^* \widehat y_t^{\perp\prime}
\right)^{-1}
\sum_{t=q}^{T-h}
\widehat z_{t,p,q}^* y_{1,t+h}
=
\widehat\beta_h^{\operatorname{OIV}(p,q)}.
\]


The covariance identity follows from the same block calculation. In the
sample,
\begin{align}
\widehat V_h^{\operatorname{GMM}(p,q)}
&=
J_p
\left(
\widehat\Sigma_{p,q}
\widehat\Omega_{q,h}^{-1}
\widehat\Sigma_{p,q}'
\right)^{-1}
J_p'
=
\left(
\widehat\Sigma_{p,q}^{\perp}
(\widehat\Omega_{p,q,h}^{\perp})^{-1}
\widehat\Sigma_{p,q}^{\perp\prime}
\right)^{-1}
=
\widehat V_h^{\operatorname{OIV}(p,q)}.
\label{eq:app_feasible_variance_identity}
\end{align}

At the population level,
$\mathbb E[y_t^\perp w_{t,p}']=0$ and
$\mathbb E[z_{t,p,q}^\perp w_{t,p}']=0$,
so the residualized regressor--instrument cross moment is
\begin{align}
\mathbb E
\left[
\begin{pmatrix}
y_t^\perp\\
w_{t,p}
\end{pmatrix}
\begin{pmatrix}
z_{t,p,q}^\perp\\
w_{t,p}
\end{pmatrix}'
\right]
&=
\begin{pmatrix}
\Sigma_{p,q}^{\perp} & 0\\
0 & W_p
\end{pmatrix}.
\label{eq:app_population_cross_moment}
\end{align}
The corresponding long-run covariance matrix has the general partition
\begin{align}
\sum_{\ell=-\infty}^{\infty}
\mathbb E
\left[
\begin{pmatrix}
z_{t,p,q}^\perp\\
w_{t,p}
\end{pmatrix}
e_{t,h}e_{t-\ell,h}
\begin{pmatrix}
z_{t-\ell,p,q}^\perp\\
w_{t-\ell,p}
\end{pmatrix}'
\right]
&=
\begin{pmatrix}
\Omega_{p,q,h}^{\perp} & \Xi\\
\Xi' & \Omega_{ww}
\end{pmatrix},
\label{eq:app_population_block_matrices}
\end{align}
where $\Xi$ need not be zero. The permutation and residualization steps are invertible and therefore do
not affect nonsingularity. Since $W_p$ is square and nonsingular, the block
identity above then implies
\begin{align}
V_h^{\operatorname{GMM}(p,q)}
&=
J_p
\left(
\Sigma_{p,q}
\Omega_{q,h}^{-1}
\Sigma_{p,q}'
\right)^{-1}
J_p'
\nonumber\\
&=
J_p
\Biggl[
\begin{pmatrix}
\Sigma_{p,q}^{\perp} & 0\\
0 & W_p
\end{pmatrix}
\begin{pmatrix}
\Omega_{p,q,h}^{\perp} & \Xi\\
\Xi' & \Omega_{ww}
\end{pmatrix}^{-1}
\begin{pmatrix}
\Sigma_{p,q}^{\perp\prime} & 0\\
0 & W_p
\end{pmatrix}
\Biggr]^{-1}
J_p'
\nonumber\\
&=
\left(
\Sigma_{p,q}^{\perp}
(\Omega_{p,q,h}^{\perp})^{-1}
\Sigma_{p,q}^{\perp\prime}
\right)^{-1}
=
V_h^{\operatorname{OIV}(p,q)}.
\label{eq:app_population_variance_identity}
\end{align}

\vspace{0.5\baselineskip}

\section{Proof of Proposition \ref{prop:local-lalp-representation}}
\label{proof:local_mis_lalp_exact}

\begin{proof}
Set $\Pi_{0,T}=I_K$. For all sufficiently large $T$, the
autoregressive factors in \eqref{eq:local_arma_error} are invertible.
By \eqref{eq:local-filter-expansion},
\begin{equation}
\widetilde u_t
=
\sum_{j\geq0}\Pi_{j,T}u_{t-j}.
\label{eq:local-filter-innovation}
\end{equation}

The VAR($d$) recursion gives
\begin{align}
y_{1,t+h}
=
(\beta_h',\eta_{h,d}')Y_{t,d}
+
\nu_1'
\sum_{\ell=0}^{h-1}
\Psi_\ell\widetilde u_{t+h-\ell}.
\label{eq:local-proof-reference-projection}
\end{align}
Substituting \eqref{eq:local-filter-innovation} gives
\begin{align}
y_{1,t+h}
&=
(\beta_h',\eta_{h,d}')Y_{t,d}
+
\nu_1'
\sum_{\ell=0}^{h-1}
\Psi_\ell
\sum_{j=0}^{h-\ell-1}
\Pi_{j,T}u_{t+h-\ell-j}
\nonumber\\
&\quad
+
\sum_{r\geq0}
\left(
\nu_1'
\sum_{\ell=0}^{h-1}
\Psi_\ell\Pi_{h+r-\ell,T}
\right)u_{t-r}.
\label{eq:local-proof-collected}
\end{align}

Suppose first that $s\geq1$. For $r=0,\ldots,s-1$, define recursively
$\delta_{r,h,T}'
=
\nu_1'\sum_{\ell=0}^{h-1}\Psi_\ell\Pi_{h+r-\ell,T}
-
\sum_{k=0}^{r-1}\delta_{k,h,T}'\Pi_{r-k,T}$,
where an empty sum is zero. In particular,
\begin{align}
\delta_{0,h,T}'
&=
\sum_{\ell=0}^{h-1}
\beta_\ell'\Pi_{h-\ell,T}
=
\sum_{j=1}^{h}
\beta_{h-j}'\Pi_{j,T}
=
(\beta_{h,T}-\beta_h)'.
\label{eq:local-delta-recursion}
\end{align}
This proves \eqref{eq:local_beta_T}.

For $r=0,\ldots,s-1$,
\begin{equation}
u_{t-r}
=
y_{t-r}
-
\sum_{j=1}^{d}A_jy_{t-r-j}
-
\sum_{j\geq1}\Pi_{j,T}u_{t-r-j}.
\label{eq:local-proof-ut-identity}
\end{equation}
Successive substitution in \eqref{eq:local-proof-collected} gives
$y_{1,t+h}=(\beta_{h,T}',\eta_{h,p,T}')Y_{t,p}+e_{t,h,T}$, where
$p=d+s$ and
\begin{align}
\eta_{h,p,T}'
&=
[\eta_{h,d}',0_{1\times Ks}]
-
[\delta_{0,h,T}'(A_1,\ldots,A_d),
0_{1\times K(s-1)}]
\nonumber\\
&\quad
+
\sum_{r=1}^{s-1}
[
0_{1\times K(r-1)},
\delta_{r,h,T}',
-\delta_{r,h,T}'(A_1,\ldots,A_d),
0_{1\times K(s-r-1)}
].
\label{eq:local-eta-p}
\end{align}
Empty zero blocks and sums are omitted.

Moreover,
\begin{align}
e_{t,h,T}
&=
\nu_1'
\sum_{\ell=0}^{h-1}
\Psi_\ell u_{t+h-\ell}
+
\nu_1'
\sum_{j=1}^{\min\{s,h-1\}}
\sum_{\ell=0}^{h-j-1}
\Psi_\ell\Pi_{j,T}
u_{t+h-\ell-j}
+
\mathcal E_{t,h,T},
\end{align}
where
\begin{align}
\mathcal E_{t,h,T}
&=
\nu_1'
\sum_{j=s+1}^{h-1}
\sum_{\ell=0}^{h-j-1}
\Psi_\ell\Pi_{j,T}u_{t+h-\ell-j}
\nonumber\\
&\quad
+
\sum_{r\geq s}
\left(
\nu_1'
\sum_{\ell=0}^{h-1}
\Psi_\ell\Pi_{h+r-\ell,T}
-
\sum_{k=0}^{s-1}
\delta_{k,h,T}'\Pi_{r-k,T}
\right)u_{t-r}.
\label{eq:local-error-remainder}
\end{align}
The first sum is zero when $s\geq h-1$.

The recursion and these bounds give
$\|\delta_{r,h,T}\|=O(T^{-(r+1)\bar\zeta})$ for
$r=0,\ldots,s-1$,
$\sum_{j=s+1}^{h-1}\|\Pi_{j,T}\|
=O(T^{-(s+1)\bar\zeta})$, and
\begin{align}
\sum_{r\geq s}
\left\|
\nu_1'
\sum_{\ell=0}^{h-1}
\Psi_\ell\Pi_{h+r-\ell,T}
-
\sum_{k=0}^{s-1}
\delta_{k,h,T}'\Pi_{r-k,T}
\right\|
=
O(T^{-(s+1)\bar\zeta}).
\end{align}
By absolute summability and the moment condition on $u_t$,
$\mathcal E_{t,h,T}=O_p(T^{-(s+1)\bar\zeta})$.
This proves \eqref{eq:lalp_local} and
\eqref{eq:local-error-decomposition} for $s\geq1$.

If $s=0$, \eqref{eq:local-proof-reference-projection} and
\eqref{eq:local-filter-innovation} give
\begin{align}
y_{1,t+h}
&=
(\beta_h',\eta_{h,d}')Y_{t,d}
+
\nu_1'
\sum_{\ell=0}^{h-1}
\Psi_\ell u_{t+h-\ell}
+
\mathcal E_{t,h,T},
\end{align}
where
$\mathcal E_{t,h,T}
=
\nu_1'\sum_{\ell=0}^{h-1}\Psi_\ell
\sum_{j\geq1}\Pi_{j,T}u_{t+h-\ell-j}$.
By absolute summability and the moment condition on $u_t$,
$\mathcal E_{t,h,T}=O_p(T^{-\bar\zeta})$.
\end{proof}

\vspace{0.5\baselineskip}

\section{Proofs of Theorems \ref{thm:persistence-uniform-gmm} and
\ref{theorem:local_mis_lalp}}
\label{sec:uniform-lemmas}

We first prove Theorem~\ref{theorem:local_mis_lalp}.
Theorem~\ref{thm:persistence-uniform-gmm} follows by setting the local
ARMA coefficients to zero.

For a random vector $X$ and $a\geq1$, write
$\|X\|_{L^a}=(\mathbb E\|X\|_2^a)^{1/a}$.
For deterministic matrices, $\|\cdot\|_2$ denotes the spectral norm.
Let $\lesssim$ denote inequality up to a multiplicative constant. We use
\citet[Theorem~2]{andrews1988laws} for laws of large numbers,
\citet[Corollary~1]{dejong1997central} for triangular-array NED central
limit theorems, and
\citet[Theorem~2.1]{dejong2000consistency} for kernel long-run covariance
estimation. Their conditions follow from
Assumption~\ref{ass:stable-score}(ii), the geometric NED bounds, and the
uniform moment bounds established below.

The uniformity results in Theorem~\ref{thm:persistence-uniform-gmm} and \ref{theorem:local_mis_lalp} are established via the sequential approach used in
\citet{andrews2010asymptotic}, \citet{andrews2012estimation}, and
\citet{andrews2020generic}. For an arbitrary sequence of parameters in
$\mathcal P$, every subsequence admits a further subsequence along which
the persistence configuration falls into one of the preceding limiting
regimes. Since the same chi-square limit obtains along every such
subsequence, the sequential characterization of uniform convergence
yields the results like \eqref{eq:uniform-wald-convergence}.

Throughout, $\xi_t=(I_K-DL)y_t$, so that $B(L)\xi_t=\widetilde u_t$, and
$e_{t,h}=\nu_1'\sum_{\ell=0}^{h-1}\Psi_\ell u_{t+h-\ell}$, while
$e_{t,h,T}$ is given by
\eqref{eq:lalp_local}--\eqref{eq:local-error-decomposition}. Let $d$ satisfy
$A(L)=I_K-A_1L-\cdots-A_dL^d=B(L)(I_K-DL)$.
For Theorem~\ref{theorem:local_mis_lalp},
$s\geq0$, $q\geq p=d+s>1$ are fixed and
$\bar\zeta>1/(2s+2)$. Along any subsequence such that
$\rho_k\to\rho_k^\ast\in[-1,1]$ for every $k$, define
$\mathcal I_0=\{k:|\rho_k^\ast|<1\}$ and
$\mathcal I_1=\{k:|\rho_k^\ast|=1\}$,
$\Lambda_{0,T}=\operatorname{diag}(\tau_{k,T}:k\in\mathcal I_0)$,
$\Lambda_{1,T}=\operatorname{diag}(\tau_{k,T}:k\in\mathcal I_1)$,
$y_t^{(0)}=(y_{k,t})_{k\in\mathcal I_0}$, and
$y_t^{(1)}=(y_{k,t})_{k\in\mathcal I_1}$. Define
$Y_{t,p}^{\xi} = \bigl(\xi_t', \xi_{t-1}', \dots, \xi_{t-p+2}', (\Lambda_{0,T}^{-1/2}y_{t-p+1}^{(0)})'\bigr)'$
and define $Y_{t,q}^{\xi}$ analogously with $p$ replaced by $q$.
After permuting the last $K$ coordinates,
\begin{align}
H_pY_{t,p}
&=
\begin{pmatrix}
Y_{t,p}^{\xi}\\
\Lambda_{1,T}^{-1/2}y_{t-p+1}^{(1)}
\end{pmatrix},
&
H_qY_{t,q}
&=
\begin{pmatrix}
Y_{t,q}^{\xi}\\
\Lambda_{1,T}^{-1/2}y_{t-q+1}^{(1)}
\end{pmatrix}.
\label{eq:aux-rotated-partition}
\end{align}
Write $\widecheck\Sigma_{p,q}^{\xi} = T^{-1}\sum_{t=q}^{T-h} Y_{t,p}^{\xi}(Y_{t,q}^{\xi})'$. Let $\widecheck\Omega_{q,h}^{\xi}$ be the principal block of
$\widecheck\Omega_{q,h}$ corresponding to $Y_{t,q}^{\xi}$, and define
$\Sigma_{p,q,T}^{\xi} = \mathbb{E}\bigl[Y_{t,p}^{\xi}(Y_{t,q}^{\xi})'\bigr]$ and $\Omega_{q,h,T}^{\xi} = \operatorname{Var}\bigl(T^{-1/2}\sum_{t=q}^{T-h} Y_{t,q}^{\xi}e_{t,h}\bigr)$. Whenever $\max_k|\rho_k|\le1-c$, define
$\Sigma_{p,q,T} = \mathbb{E}\bigl[Y_{t,p}Y_{t,q}'\bigr]$ and $\Omega_{q,h,T} = \operatorname{Var}\bigl(T^{-1/2}\sum_{t=q}^{T-h} Y_{t,q}e_{t,h,T}\bigr)$.

\subsection{Proof of Theorem~\ref{theorem:local_mis_lalp}}
\label{proof:local_mis_lalp}
 
\begin{proof}
Suppose first that $s\geq1$. By the sequential characterization of uniform
convergence, it suffices to consider an arbitrary parameter sequence in
$\mathcal P$.
 
By Proposition~\ref{prop:local-lalp-representation},
$y_{1,t+h}=(\beta_{h,T}',\eta_{h,p,T}')Y_{t,p}+e_{t,h,T}$.
Equation~\eqref{gmm_iv} and $J_p(H_p')^{-1}=J_p$ give
\begin{align}
\sqrt T\left(\widehat\beta_h^{\operatorname{GMM}(p,q)}-\beta_{h,T}\right)
&=
J_p
\left(
\widehat\Sigma_{p,q}\widehat\Omega_{q,h}^{-1}\widehat\Sigma_{p,q}'
\right)^{-1}
\widehat\Sigma_{p,q}\widehat\Omega_{q,h}^{-1}
T^{-1/2}\sum_{t=q}^{T-h}Y_{t,q}e_{t,h,T}
\label{eq:local-proof-unrotated-expansion}
\\
&=
J_p
\left(
\widecheck\Sigma_{p,q}\widecheck\Omega_{q,h}^{-1}
\widecheck\Sigma_{p,q}'
\right)^{-1}
\widecheck\Sigma_{p,q}\widecheck\Omega_{q,h}^{-1}
T^{-1/2}\sum_{t=q}^{T-h}H_qY_{t,q}e_{t,h,T}.
\label{eq:local-proof-expansion}
\end{align}
 
Fix an arbitrary subsequence. By compactness of $[-1,1]^K$, pass to a
further subsequence such that $\rho_k\to\rho_k^\ast$ for every $k$, and use
$\mathcal I_0$, $\mathcal I_1$, $Y_{t,p}^{\xi}$, and $Y_{t,q}^{\xi}$
defined above. The permutation in \eqref{eq:aux-rotated-partition} leaves
the coordinates selected by $J_p$ unchanged. When
$\mathcal I_1=\varnothing$, the lower-right blocks below are absent.

Define
$\widecheck Q_{p,q,h}^{\xi}
=
\widecheck\Sigma_{p,q}^{\xi}
(\widecheck\Omega_{q,h}^{\xi})^{-1}
(\widecheck\Sigma_{p,q}^{\xi})'$.
Lemma~\ref{lem:stable-persistent-separation} gives
\begin{align}
\widecheck\Sigma_{p,q}
&=
\begin{pmatrix}
\widecheck\Sigma_{p,q}^{\xi} & o_p(1)\\
o_p(1) & O_p(1)
\end{pmatrix},
\label{eq:local-proof-sigma-block}
\\
\widecheck\Omega_{q,h}
&=
\begin{pmatrix}
\widecheck\Omega_{q,h}^{\xi} & o_p(1)\\
o_p(1) & O_p(1)
\end{pmatrix},
\label{eq:local-proof-omega-block}
\\
T^{-1/2}
\sum_{t=q}^{T-h}
H_qY_{t,q}e_{t,h,T}
&=
\begin{pmatrix}
T^{-1/2}\displaystyle\sum_{t=q}^{T-h}
Y_{t,q}^{\xi}e_{t,h}+o_p(1)
\\[1mm]
O_p(1)
\end{pmatrix},
\label{eq:local-proof-score-block}
\end{align}
and, as shown at the end of this proof,
\begin{align}
\widecheck\Omega_{q,h}^{-1}
&=
\begin{pmatrix}
(\widecheck\Omega_{q,h}^{\xi})^{-1} & o_p(1)\\
o_p(1) & O_p(1)
\end{pmatrix},
\label{eq:local-proof-omega-inverse}
\\
\left(
\widecheck\Sigma_{p,q}
\widecheck\Omega_{q,h}^{-1}
\widecheck\Sigma_{p,q}'
\right)^{-1}
&=
\begin{pmatrix}
(\widecheck Q_{p,q,h}^{\xi})^{-1} & o_p(1)\\
o_p(1) & O_p(1)
\end{pmatrix}.
\label{eq:local-proof-Q-inverse-block}
\end{align}
 
Given these results, $p>1$ implies that the coordinates selected by
$J_p$ are contained in $Y_{t,p}^{\xi}$. Hence
\begin{align}
\sqrt T\left(\widehat\beta_h^{\operatorname{GMM}(p,q)}-\beta_{h,T}\right)
&=
J_p
\begin{pmatrix}
(\widecheck Q_{p,q,h}^{\xi})^{-1}
\widecheck\Sigma_{p,q}^{\xi}
(\widecheck\Omega_{q,h}^{\xi})^{-1}
T^{-1/2}\displaystyle\sum_{t=q}^{T-h}Y_{t,q}^{\xi}e_{t,h}\\[1mm]
0
\end{pmatrix}
+o_p(1),
\label{eq:local-proof-selected-expansion}
\\
\widehat V_h^{\operatorname{GMM}(p,q)}
&=
J_p
\begin{pmatrix}
(\widecheck Q_{p,q,h}^{\xi})^{-1} & 0\\
0 & 0
\end{pmatrix}
J_p'
+o_p(1).
\label{eq:local-proof-selected-variance}
\end{align}

Let
$Q_{p,q,h,T}^{\xi}
=\Sigma_{p,q,T}^{\xi}
(\Omega_{q,h,T}^{\xi})^{-1}
(\Sigma_{p,q,T}^{\xi})'$.
Lemma~\ref{lem:stable-block-approximation} gives
$\widecheck\Sigma_{p,q}^{\xi}-\Sigma_{p,q,T}^{\xi}=o_p(1)$,
$\widecheck\Omega_{q,h}^{\xi}-\Omega_{q,h,T}^{\xi}=o_p(1)$,
\eqref{eq:stable-block-score-clt}, and, for all sufficiently large $T$,
$\|\Sigma_{p,q,T}^{\xi}\|_2+\|\Omega_{q,h,T}^{\xi}\|_2\le C$,
$\lambda_{\min}(\Sigma_{p,q,T}^{\xi})\ge c_\Sigma$, and
$\lambda_{\min}(\Omega_{q,h,T}^{\xi})\ge c_\Omega$.
The weighted-Gram eigenvalue bound%
\footnote{The weighted-Gram eigenvalue bound states the following fact. For a full-row-rank matrix $A$ and a symmetric positive definite matrix $B$,
$\lambda_{\min}(AB^{-1}A')
\ge \lambda_{\min}(A)^2/\lambda_{\max}(B)$ and
$\lambda_{\max}(AB^{-1}A')
\le \lambda_{\max}(A)^2/\lambda_{\min}(B)$.
See \citet[Theorem~4.2.2]{horn2013matrix}.}
therefore gives
$0<c\le\lambda_{\min}(Q_{p,q,h,T}^{\xi})
\le\lambda_{\max}(Q_{p,q,h,T}^{\xi})\le C$.
Thus
$\widecheck Q_{p,q,h}^{\xi}-Q_{p,q,h,T}^{\xi}=o_p(1)$ and
$(\widecheck Q_{p,q,h}^{\xi})^{-1}-(Q_{p,q,h,T}^{\xi})^{-1}=o_p(1)$.
Since $J_p$ selects coordinates from the upper-left block, Cauchy's
interlacing theorem%
\footnote{Cauchy's interlacing theorem states the following fact. If $B$ is a principal submatrix of a symmetric matrix $A$, then
$\lambda_{\min}(B)\ge\lambda_{\min}(A)$.
See \citet[Theorem~4.3.28]{horn2013matrix}.}
implies
$0<c\le
\lambda_{\min}\!\left[
J_p\bigl(\begin{smallmatrix}
(Q_{p,q,h,T}^{\xi})^{-1}&0\\0&0
\end{smallmatrix}\bigr)J_p'
\right]
\le
\lambda_{\max}\!\left[
J_p\bigl(\begin{smallmatrix}
(Q_{p,q,h,T}^{\xi})^{-1}&0\\0&0
\end{smallmatrix}\bigr)J_p'
\right]\le C$.

Using these limits in \eqref{eq:local-proof-selected-expansion} gives
\begin{align}
\begin{split}
    &\sqrt T\left(\widehat\beta_h^{\operatorname{GMM}(p,q)}-\beta_{h,T}\right)\\
=&
J_p
\begin{pmatrix}
(Q_{p,q,h,T}^{\xi})^{-1}
\Sigma_{p,q,T}^{\xi}
(\Omega_{q,h,T}^{\xi})^{-1/2}\\[1mm]
0
\end{pmatrix}
(\Omega_{q,h,T}^{\xi})^{-1/2}
T^{-1/2}\sum_{t=q}^{T-h}Y_{t,q}^{\xi}e_{t,h}
+o_p(1).
\end{split}
\label{eq:local-proof-beta-population-expansion}
\end{align}
By the definition of $Q_{p,q,h,T}^{\xi}$,
\begin{align}
\begin{split}
   & \operatorname{Var}\!\left[
J_p
\begin{pmatrix}
(Q_{p,q,h,T}^{\xi})^{-1}
\Sigma_{p,q,T}^{\xi}
(\Omega_{q,h,T}^{\xi})^{-1/2}\\
0
\end{pmatrix}
(\Omega_{q,h,T}^{\xi})^{-1/2}
T^{-1/2}\sum_{t=q}^{T-h}Y_{t,q}^{\xi}e_{t,h}
\right]\\
=&
J_p
\begin{pmatrix}
(Q_{p,q,h,T}^{\xi})^{-1} & 0\\
0 & 0
\end{pmatrix}
J_p'.
\end{split}
\label{eq:local-proof-beta-leading-covariance}
\end{align}
Premultiplying \eqref{eq:local-proof-beta-population-expansion} by
$\bigl[J_p\bigl(\begin{smallmatrix}
(Q_{p,q,h,T}^{\xi})^{-1}&0\\0&0
\end{smallmatrix}\bigr)J_p'\bigr]^{-1/2}$
therefore produces a coefficient matrix with orthonormal rows,%
\footnote{By \eqref{eq:local-proof-beta-leading-covariance},
\begin{align*}
&
\left[
J_p
\begin{pmatrix}
(Q_{p,q,h,T}^{\xi})^{-1} & 0\\
0 & 0
\end{pmatrix}
J_p'
\right]^{-1/2}
J_p
\begin{pmatrix}
(Q_{p,q,h,T}^{\xi})^{-1}
\Sigma_{p,q,T}^{\xi}
(\Omega_{q,h,T}^{\xi})^{-1/2}\\
0
\end{pmatrix}
\\
&\qquad\times
\left\{
\left[
J_p
\begin{pmatrix}
(Q_{p,q,h,T}^{\xi})^{-1} & 0\\
0 & 0
\end{pmatrix}
J_p'
\right]^{-1/2}
J_p
\begin{pmatrix}
(Q_{p,q,h,T}^{\xi})^{-1}
\Sigma_{p,q,T}^{\xi}
(\Omega_{q,h,T}^{\xi})^{-1/2}\\
0
\end{pmatrix}
\right\}'
=
I_K.
\end{align*}}
and hence with spectral norm one. 
By Lemma~\ref{lem:stable-block-approximation},
$(\Omega_{q,h,T}^{\xi})^{-1/2}T^{-1/2}\sum_tY_{t,q}^{\xi}e_{t,h}
\xrightarrow{d}N(0,I)$.
Along any subsequence on which the coefficient matrix converges, Slutsky's
theorem, the preceding eigenvalue bounds, and the block identity above give
\begin{align}
\left[
J_p
\begin{pmatrix}
(Q_{p,q,h,T}^{\xi})^{-1} & 0\\
0 & 0
\end{pmatrix}
J_p'
\right]^{-1/2}
\sqrt T\left(
\widehat\beta_h^{\operatorname{GMM}(p,q)}-\beta_{h,T}
\right)
&\xrightarrow{d}N(0,I_K).
\label{eq:local-proof-beta-standardized}
\end{align}
Similarly,
\begin{align}
\widehat V_h^{\operatorname{GMM}(p,q)}
&=
J_p
\begin{pmatrix}
(Q_{p,q,h,T}^{\xi})^{-1} & 0\\
0 & 0
\end{pmatrix}
J_p'
+o_p(1).
\label{eq:local-proof-variance-approximation}
\end{align}
Since $G$ has full row rank, the congruence eigenvalue bound%
\footnote{The congruence eigenvalue bound states the following fact. For a symmetric positive
definite matrix $V$ and a full-row-rank matrix $G$,
$\lambda_{\min}(GVG')
\ge \lambda_{\min}(V)\lambda_{\min}(GG')$ and
$\lambda_{\max}(GVG')
\le \lambda_{\max}(V)\lambda_{\max}(GG')$.
See \citet[Theorem~4.2.2]{horn2013matrix}.}
preserves the preceding eigenvalue bounds under premultiplication and
postmultiplication by $G$ and $G'$. Hence
\eqref{eq:local-proof-beta-standardized},
\eqref{eq:local-proof-variance-approximation}, and Slutsky's theorem give
$W_{T,h}^{\operatorname{local}}\xrightarrow{d}\chi_g^2$.

It remains to verify \eqref{eq:local-proof-omega-inverse} and
\eqref{eq:local-proof-Q-inverse-block}.

\noindent
\emph{Proof of \eqref{eq:local-proof-omega-inverse}.}
By Assumption~\ref{ass:rotated-nonsingularity} and Cauchy's interlacing
theorem, both diagonal blocks of \eqref{eq:local-proof-omega-block}
have eigenvalues bounded away from zero with probability approaching one,
and hence $O_p(1)$ inverses. Therefore, the Schur complement of the
lower-right block satisfies
\begin{align}
\widecheck\Omega_{q,h}^{\xi}
-
o_p(1)O_p(1)o_p(1)
=
\widecheck\Omega_{q,h}^{\xi}+o_p(1).
\end{align}
The block-inverse formula then gives
\eqref{eq:local-proof-omega-inverse}.
 
\medskip
\noindent
\emph{Proof of \eqref{eq:local-proof-Q-inverse-block}.}
Lemma~\ref{lem:rotated-product-bounds} and
Lemma~\ref{lem:hac-feasible} give
$\|\widecheck\Sigma_{p,q}\|_2+\|\widecheck\Omega_{q,h}\|_2=O_p(1)$.
Since $q-p$ is fixed, $H_pY_{t,p}$ is an exact uniformly bounded linear
transformation of $H_qY_{t,q}$. The singular-value inequality,%
\footnote{The singular-value inequality states the following fact. For conformable matrices $A$ and $B$, with $A$ of full row
rank,
$\lambda_{\min}(A)
\ge \lambda_{\min}(AB)/\|B\|_2$.
See \citet[Eq.~(7.3.14)]{horn2013matrix}.}
the weighted-Gram eigenvalue bound, and
Assumption~\ref{ass:rotated-nonsingularity} therefore give, with
probability approaching one,
\begin{align}
\lambda_{\min}
\left(
\widecheck\Sigma_{p,q}
\widecheck\Omega_{q,h}^{-1}
\widecheck\Sigma_{p,q}'
\right)
&\ge
\frac{
\lambda_{\min}
\left(
\widecheck\Sigma_{p,q}
\right)^2
}{
\lambda_{\max}
\left(
\widecheck\Omega_{q,h}
\right)
}
\ge
\frac{
C^{-2}
\lambda_{\min}
\left(
\widecheck\Sigma_{p,p}
\right)^2
}{
\lambda_{\max}
\left(
\widecheck\Omega_{q,h}
\right)
}
\ge
c
>
0.
\label{eq:local-proof-Q-lower}
\end{align}
Equations~\eqref{eq:local-proof-sigma-block} and
\eqref{eq:local-proof-omega-inverse} imply
\begin{align}
\widecheck\Sigma_{p,q}
\widecheck\Omega_{q,h}^{-1}
\widecheck\Sigma_{p,q}'
&=
\begin{pmatrix}
\widecheck Q_{p,q,h}^{\xi}
+
o_p(1)
&
o_p(1)
\\
o_p(1)
&
O_p(1)
\end{pmatrix}.
\label{eq:local-proof-Q-block}
\end{align}
By \eqref{eq:local-proof-Q-lower} and Cauchy's interlacing theorem, both
diagonal blocks in \eqref{eq:local-proof-Q-block} have minimum
eigenvalue bounded away from zero with probability approaching one.
Weyl's inequality%
\footnote{Weyl's inequality states the following fact. For symmetric matrices $A$ and $B$,
$\left|
\lambda_{\min}(A+B)-\lambda_{\min}(A)
\right|
\le \|B\|_2$.
See \citet[Corollary~4.3.15]{horn2013matrix}.}
then gives $(\widecheck Q_{p,q,h}^{\xi})^{-1}=O_p(1)$, and the
block-inverse formula applied to \eqref{eq:local-proof-Q-block} gives
\eqref{eq:local-proof-Q-inverse-block}.
 
\medskip
\noindent
It remains to consider $s=0$, where
Proposition~\ref{prop:local-lalp-representation} gives
$y_{1,t+h}=(\beta_h',\eta_{h,d}')Y_{t,d}+e_{t,h,T}$, so that the
preceding argument applies with $\beta_h$ in place of $\beta_{h,T}$.
Moreover, \eqref{eq:local_beta_T} and
\eqref{eq:local-filter-coefficient-bounds} give
$\sqrt T\|\beta_{h,T}-\beta_h\|_2
\le\sqrt T\sum_{j=1}^{h}\|\Pi_{j,T}\|_2\|\beta_{h-j}\|_2
\le CT^{1/2-\bar\zeta}=o(1)$,
since $\bar\zeta>1/2$ when $s=0$. Thus the same limit holds when the
Wald statistic is centered at $\beta_{h,T}$.
 
The parameter sequence and selected subsequence were arbitrary. Hence
every subsequence admits a further subsequence along which
$W_{T,h}^{\operatorname{local}}\xrightarrow{d}\chi_g^2$.
The sequential characterization of uniform convergence, continuity of
$F_{\chi_g^2}$, and P\'olya's theorem yield
\eqref{eq:local_mis_uniform}.
\end{proof}
 
\begin{remark}[Lag augmentation and local misspecification]
\label{rem:local-intuition}
By \eqref{eq:aux-local-error-gap},
\begin{align}
\begin{split}
    &T^{-1/2}\sum_t
H_qY_{t,q}\left(e_{t,h,T}-e_{t,h}\right)\\
=&
T^{-1/2}\sum_t
H_qY_{t,q}\nu_1'
\sum_{j=1}^{\min\{s,h-1\}}
\sum_{\ell=0}^{h-j-1}
\Psi_\ell\Pi_{j,T}u_{t+h-\ell-j}+
T^{-1/2}\sum_t
H_qY_{t,q}\mathcal E_{t,h,T}\\
=&
O_p(T^{-\bar\zeta})
+
O_p(T^{1/2-(s+1)\bar\zeta}).
\end{split}
\label{eq:aux-local-remark-score}
\end{align}
The first rate holds for every $\bar\zeta>0$ because
$\mathbb E(u_{t+h-\ell-j}\mid\mathcal F_t)=0$ for
$j\le\min\{s,h-1\}$ and $\ell\le h-j-1$, whereas the second is
$o_p(1)$ if and only if $(s+1)\bar\zeta>1/2$. The same decomposition
gives
$\|H_q(\widehat\Omega_{q,h}-\widetilde\Omega_{q,h})H_q'\|_2
=O_p(T^{-\bar\zeta})
+O_p(b_TT^{-\bar\zeta-1/2})
+O_p(b_TT^{-(s+1)\bar\zeta})
=o_p(1)$.
\end{remark}
 
\begin{remark}[Neither non-augmented nor lag-augmented VAR inference is robust to local misspecification, illustrated under the stationary AR(1) model with local ARMA(1,1) misspecification]
\label{rem:var-illustration}
Let $\{u_t\}$ be i.i.d.\ with $\mathbb E u_t=0$ and
$\mathbb E u_t^2=\sigma^2$. Consider
$y_t=\alpha y_{t-1}+\widetilde u_t$ and
$(1-\Phi_TL)\widetilde u_t=(1+\Theta_TL)u_t$, where
$|\alpha|<1$,
$\Phi_T=T^{-1/2}\Phi_1$,
$\Theta_T=T^{-1/2}\Theta_1$, and
$\Pi_T=\Phi_T+\Theta_T=T^{-1/2}\pi$ with
$\pi=\Phi_1+\Theta_1$. Thus
$\zeta_{AR}=\zeta_{MA}=1/2$. For all sufficiently large $T$,
$\widetilde u_t
=
u_t+\Pi_Tu_{t-1}
+\Pi_T\sum_{j\ge2}\Phi_T^{\,j-1}u_{t-j}$.
Since
$\Pi_T\Phi_T^{\,j-1}
=
T^{-j/2}\pi\Phi_1^{\,j-1}$,
$\|\Pi_T\sum_{j\ge2}\Phi_T^{\,j-1}u_{t-j}\|_{L^2}
=O(T^{-1})$.
Hence the local AR and MA departures enter the first-order calculations
only through $\Pi_T=\Phi_T+\Theta_T$.
 
By \eqref{eq:local_beta_T}, the true impulse response is exactly
$\beta_{h,T}
=
\alpha^h
+
T^{-1/2}\pi\alpha^{h-1}
+
\sum_{j=2}^{h}
T^{-j/2}\pi\Phi_1^{\,j-1}\alpha^{h-j}$,
where the sum is interpreted as zero for $h=1$. For fixed $h$, the
last sum is $O(T^{-1})$, since its leading term is
$T^{-1}\pi\Phi_1\alpha^{h-2}$ when $h\ge2$. Hence
$\sqrt T(\beta_{h,T}-\alpha^h)=\pi\alpha^{h-1}+O(T^{-1/2})$.
 
\noindent\textbf{Proof of \eqref{eq:var-illustration-nonaug}.} (Non-augmented VAR)
Using the preceding expansion of $\widetilde u_t$,
\begin{align}
\sqrt T(\widehat\alpha-\alpha)
&=
\frac{
T^{-1/2}\sum_t y_{t-1}u_t
+
\pi T^{-1}\sum_t y_{t-1}u_{t-1}
}{
T^{-1}\sum_t y_{t-1}^2
}
+
O_p(T^{-1/2}).
\label{eq:var-illustration-nonaug-expansion}
\end{align}
The sequence $\{y_{t-1}u_t\}$ is a martingale difference,
$T^{-1}\sum_t y_{t-1}u_{t-1}
=
\sigma^2+o_p(1)$, and
$T^{-1}\sum_t y_{t-1}^2
=
\sigma^2/(1-\alpha^2)+o_p(1)$.
The martingale CLT and the law of large numbers therefore give
$\sqrt T(\widehat\alpha-\alpha)
\xrightarrow{d}
N(\pi(1-\alpha^2),1-\alpha^2)$, and the delta method gives
$\sqrt T(\widehat\alpha^{\,h}-\alpha^h)
\xrightarrow{d}
N\bigl(h\alpha^{h-1}\pi(1-\alpha^2),
h^2\alpha^{2(h-1)}(1-\alpha^2)\bigr)$.
Subtracting $\sqrt T(\beta_{h,T}-\alpha^h)$ gives
\eqref{eq:var-illustration-nonaug}.
 
\noindent\textbf{Proof of \eqref{eq:var-illustration-aug}.} (Lag-augmented VAR)
From the local ARMA$(1,1)$ model,
$u_t=(1+\Theta_TL)^{-1}(1-\Phi_TL)(1-\alpha L)y_t$.
Expanding $(1+\Theta_TL)^{-1}$ gives
\begin{align}
y_t
&=
(\alpha+\Pi_T)y_{t-1}
-
\Pi_T(\alpha+\Theta_T)y_{t-2}
+
u_t
+
R_t.
\label{eq:var-illustration-ar2}
\end{align}
Here
$R_t=-\Pi_T(\alpha+\Theta_T)\sum_{j\ge3}(-\Theta_T)^{j-2}y_{t-j}$,
and $\|R_t\|_{L^2}=O(T^{-1})$ because $\Pi_T=O(T^{-1/2})$ and
$\Theta_T=O(T^{-1/2})$.

Let $(\phi_1^*,\phi_2^*)$ be the population coefficients from
projecting $y_t$ on $Y_{t,2}$, and let
$u_t^*=y_t-\phi_1^*y_{t-1}-\phi_2^*y_{t-2}$. Since
$\mathbb E[Y_{t,2}u_t]=0$, \eqref{eq:var-illustration-ar2} gives
$\phi_1^*=\alpha+\Pi_T
+e_1'(\mathbb E[Y_{t,2}Y_{t,2}'])^{-1}\mathbb E[Y_{t,2}R_t]$,
where $\mathbb E[Y_{t,2}R_t]=O(T^{-1})$ and
$(\mathbb E[Y_{t,2}Y_{t,2}'])^{-1}=O(1)$, so
$\sqrt T(\phi_1^*-\alpha)=\pi+O(T^{-1/2})$. The same two bounds give
$\|u_t^*-u_t\|_{L^2}=O(T^{-1})$ and hence
$T^{-1/2}\sum_tY_{t,2}(u_t^*-u_t)=O_p(T^{-1/2})$.

As $\Phi_T,\Theta_T\to0$,
$\sigma^2(\mathbb E[Y_{t,2}Y_{t,2}'])^{-1}
\to
\bigl(\begin{smallmatrix}1&-\alpha\\-\alpha&1\end{smallmatrix}\bigr)$,
so the martingale CLT gives
$\sqrt T(\widehat\phi_1-\phi_1^*)\xrightarrow{d}N(0,1)$ and
$\sqrt T(\widehat\phi_1-\alpha)\xrightarrow{d}N(\pi,1)$. The delta
method and
$\sqrt T(\beta_{h,T}-\alpha^h)=\pi\alpha^{h-1}+O(T^{-1/2})$ then give $\sqrt T(\widehat\phi_1^{\,h}-\alpha^h)\xrightarrow{d}
N\bigl(h\alpha^{h-1}\pi,\;h^2\alpha^{2(h-1)}\bigr)$ and hence
\eqref{eq:var-illustration-aug}:
\begin{align}
\sqrt T(\widehat\phi_1^{\,h}-\beta_{h,T})
&\xrightarrow{d}
N\bigl(\alpha^{h-1}\pi(h-1),\;h^2\alpha^{2(h-1)}\bigr).
\nonumber
\end{align}
\end{remark}
 
\begin{remark}[Relation to \citet{olea2024double}]
\label{rem:mpqw}
\citet{olea2024double} study a stationary local-to-SVAR model with an
MA$(\infty)$ departure with rate $T^{-\zeta}$. Their Proposition~3.1
shows that the omitted-variable bias of the LP impulse-response
estimator is of order $T^{-2\zeta}$, yielding conventional root-$T$
inference when $\zeta>1/4$. Their Proposition~3.2 characterizes the
order-$T^{-\zeta}$ bias of the VAR impulse-response estimator.

Our setting instead imposes the fixed-order factorization in
\eqref{eq:local_arma_error}, which implies
$\|\Pi_{j,T}\|=O(T^{-j\bar\zeta})$. This successive decay is what lag
augmentation exploits: with $p=d+s$, the approximation error is
$O_p(T^{-(s+1)\bar\zeta})$, so $s=1$ gives the $1/4$ threshold and
larger $s$ permits smaller $\bar\zeta$, allowing the misspecification to
vanish more slowly than under the $T^{-1/4}$ boundary. Theorem~\ref{theorem:local_mis_lalp}
also admits unit and near-unit roots.
\end{remark}

\subsection{Auxiliary lemmas}
\label{subsec:local-aux-lemmas}

\begin{lemma}[Local-filter approximation and persistence dichotomy]
\label{lem:local-filter-approximation}
\label{lem:rho-tau-dichotomy}
Suppose \eqref{localmis}--\eqref{eq:local_arma_error} and
Assumptions~\ref{ass:persistent} and \ref{ass:stable-score} hold. For
every fixed $j_0\geq1$, uniformly over $(D,B)\in\mathcal P$,
\begin{align}
&\sum_{j\geq j_0}\|\Pi_{j,T}\|_2
\le
CT^{-j_0\bar\zeta},
\label{eq:local-filter-coefficient-bounds}
\\
&\sup_{t\le T}\|\xi_t-B(L)^{-1}u_t\|_{L^{2r}}
\le
CT^{-\bar\zeta},
\label{eq:local-quasi-difference-bound}
\\
&\sup_{t\le T}\|\xi_t\|_{L^{4r+\delta}}
\le
C,
\label{eq:local-xi-moment}
\\
&\sup_{0\le t\le T}\|\Lambda_T^{-1/2}y_t\|_{L^{4r+\delta}}
\le
C.
\label{eq:scaled-level-moment}
\end{align}
Moreover, let $\tau_{k,T}=\sum_{\ell=0}^{T-1}\rho_k^{2\ell}$ and
suppose $\rho_k\to\rho_k^\ast\in[-1,1]$. If $|\rho_k^\ast|<1$, then
$\sup_T\tau_{k,T}<C<\infty$, whereas if $|\rho_k^\ast|=1$, then
$\tau_{k,T}\to\infty$.
\end{lemma}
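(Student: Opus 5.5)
The lemma has five claims, and I will prove them in the order stated: the filter bound first, then the two bounds on $\xi_t$, and finally the scaled-level bound and the $\tau$ dichotomy.

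\emph{Filter bound \eqref{eq:local-filter-coefficient-bounds}.} For $T$ large enough, $\|\Phi_{j,T}\|_2\le CT^{-\zeta_{AR}}\le 1/2$. Each AR factor therefore inverts as the Neumann series $(I_K-\Phi_{j,T}L)^{-1}=\sum_{i\ge0}\Phi_{j,T}^iL^i$. Multiplying these series by the $n$ MA factors expresses $\Pi_{j,T}$ as a sum of products of exactly $j$ local matrices. The number of such products is at most the number of ways to distribute $j$ lags among $m+n$ factors, namely $\binom{j+m+n-1}{m+n-1}$, which is polynomial in $j$. Each product has norm at most $(CT^{-\bar\zeta})^j$. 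Hence
\[
\sum_{j\ge j_0}\|\Pi_{j,T}\|_2\le\sum_{j\ge j_0}\mathrm{poly}(j)\,(CT^{-\bar\zeta})^j\le C'T^{-j_0\bar\zeta},
\]
since the geometric ratio $CT^{-\bar\zeta}$ tends to zero. None of the constants depend on $(D,B)$.

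\emph{Bounds \eqref{eq:local-quasi-difference-bound} and \eqref{eq:local-xi-moment} on $\xi_t$.} Since $B(L)\xi_t=\widetilde u_t$ and $\mathbf B$ is uniformly stable, $B(L)^{-1}=\sum_\ell\Psi^B_\ell L^\ell$ with $\|\Psi^B_\ell\|_2\le C(1-c)^\ell$. This gives
\[
\xi_t-B(L)^{-1}u_t=B(L)^{-1}\sum_{j\ge1}\Pi_{j,T}u_{t-j},
\]
plus a contribution from initial conditions, which I handle through the companion-form recursion and the moment bound in Assumption~\ref{ass:stable-score}(iii). Minkowski's inequality in $L^{2r}$ (or in $L^{4r+\delta}$), combined with the stationarity of $u_t$ and $\mathbb E\|u_t\|^{4r+\delta}\le C$, bounds the $L^{2r}$ norm by
\[
\Bigl(\sum_\ell\|\Psi^B_\ell\|_2\Bigr)\Bigl(\sum_{j\ge1}\|\Pi_{j,T}\|_2\Bigr)\,\|u\|_{L^{2r}}\le CT^{-\bar\zeta}.
\]
The same argument with the $j\ge0$ sum and the $L^{4r+\delta}$ norm gives \eqref{eq:local-xi-moment}.

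\emph{Scaled-level bound \eqref{eq:scaled-level-moment}.} Iterating $y_{k,t}=\rho_ky_{k,t-1}+\xi_{k,t}$ gives
\[
y_{k,t}=\rho_k^ty_{k,0}+\sum_{i=0}^{t-1}\rho_k^i\xi_{k,t-i}.
\]
The initial term is handled directly by Assumption~\ref{ass:stable-score}(iii), since $|\rho_k|^t\le1$. This is the step I expect to be the main obstacle. A crude Minkowski bound on the sum gives only $\sum_i|\rho_k|^i$, which is of order $\tau_{k,T}$ rather than $\tau_{k,T}^{1/2}$. So I will instead substitute $\xi_t=\sum_\ell\Psi^B_\ell\sum_j\Pi_{j,T}u_{t-\ell-j}$ (up to the negligible initial part) and rewrite the sum as
\[
\sum_{s}a_{k,t,s}u_{t-s},\qquad \sum_s\|a_{k,t,s}\|^2\le C\sum_{i<T}\rho_k^{2i}=C\tau_{k,T}.
\]
The square-summability holds because convolving the sequence $\rho_k^i$ with the absolutely summable kernel $\Psi^B*\Pi$ costs at most a constant factor in $\ell^2$, by Young's inequality. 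I then apply Burkholder's inequality for martingale-difference sums (Assumption~\ref{ass:stable-score}(i)) to the series in $u$, with $\sup_s\|u_s\|_{L^{4r+\delta}}\le C$, which bounds $\|\sum_sa_{k,t,s}u_{t-s}\|_{L^{4r+\delta}}$ by $C(\sum_s\|a_{k,t,s}\|^2)^{1/2}$. Dividing by $\tau_{k,T}^{1/2}$ gives \eqref{eq:scaled-level-moment} uniformly in $t\le T$ and over $\mathcal P$.

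\emph{Dichotomy for $\tau_{k,T}$.} If $|\rho_k^*|<1$, then eventually $|\rho_k|\le(1+|\rho_k^*|)/2<1$, so
\[
\tau_{k,T}\le\frac{1}{1-\bigl((1+|\rho_k^*|)/2\bigr)^2}<\infty.
\]
If $|\rho_k^*|=1$, then for each fixed $M$ and all large $T$ we have $\tau_{k,T}\ge\sum_{\ell<M}\rho_k^{2\ell}\to M$, so $\liminf_T\tau_{k,T}\ge M$ for every $M$. Hence $\tau_{k,T}\to\infty$.
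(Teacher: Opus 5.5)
Your proposal is correct and follows the paper's proof essentially step by step: the Neumann expansion of the AR factors with a polynomial-in-$j$ count of $j$-fold products for $\Pi_{j,T}$, Minkowski applied to the MA$(\infty)$ representation of $\xi_t$, Burkholder--Rosenthal applied to $\sum_\ell\rho_k^\ell\xi_{k,t-\ell}$ rewritten as a weighted martingale-difference sum, and the elementary $\tau_{k,T}$ dichotomy. Your Young's-inequality step also makes explicit the $\ell^2$ coefficient bound $C\tau_{k,T}$, which the paper leaves implicit.

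One small correction: your aside about an initial-condition contribution to $\xi_t$ should be dropped. The paper takes $\xi_t=B(L)^{-1}\widetilde u_t$ as the infinite-past filter, so no such term arises, and Assumption~\ref{ass:stable-score}(iii) only controls $\Lambda_T^{-1/2}y_0$, so it would not by itself bound such a transient anyway.
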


\begin{lemma}[Uniform rotated moment bounds]
\label{lem:rotated-product-bounds}
Suppose \eqref{localmis}--\eqref{eq:local_arma_error} and
Assumptions~\ref{ass:persistent} and \ref{ass:stable-score} hold. For
any fixed integers $\ell,\ell'\ge0$ and $j$, uniformly over
$(D,B)\in\mathcal P$,
\begin{align}
&T^{-1}\sum_t
\xi_{t-\ell}y_{t-\ell'}'
\Lambda_T^{-1/2}
=
O_p(1),
\label{eq:level-level-bound-xi}
\\
&T^{-1}\sum_t
\Lambda_T^{-1/2}
y_{t-\ell}y_{t-\ell'}'
\Lambda_T^{-1/2}
=
O_p(1),
\label{eq:level-level-bound-y}
\\
&T^{-1}\sum_t
\xi_{t-\ell}y_{t-\ell'}'
\Lambda_T^{-1/2}
e_{t,h}e_{t+j,h}
=
O_p(1),
\label{eq:level-level-bound-xi-e}
\\
&T^{-1}\sum_t
\Lambda_T^{-1/2}
y_{t-\ell}y_{t-\ell'}'
\Lambda_T^{-1/2}
e_{t,h}e_{t+j,h}
=
O_p(1),
\label{eq:level-level-bound-y-e}
\\
&T^{-1/2}\sum_t
\xi_{t-\ell}e_{t,h}
=
O_p(1),
\label{eq:level-score-bound-xi}
\\
&T^{-1/2}\sum_t
\Lambda_T^{-1/2}y_{t-\ell}e_{t,h}
=
O_p(1).
\label{eq:level-score-bound-y}
\end{align}
Moreover,
\begin{align}
&\sum_{h<|j|\le b_T}
k(jb_T^{-1})
T^{-1}\sum_t
\xi_{t-\ell}e_{t,h}
y_{t-j-\ell'}'
\Lambda_T^{-1/2}e_{t-j,h}
=
O_p\left(b_TT^{-1/2}\right),
\label{eq:hac-level-tail-bound-xi}
\\
&\sum_{h<|j|\le b_T}
k(jb_T^{-1})
T^{-1}\sum_t
\Lambda_T^{-1/2}y_{t-\ell}e_{t,h}
y_{t-j-\ell'}'
\Lambda_T^{-1/2}e_{t-j,h}
=
O_p\left(b_TT^{-1/2}\right).
\label{eq:hac-level-tail-bound-y}
\end{align}
\end{lemma}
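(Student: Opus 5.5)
The proof uses two building blocks repeatedly: the stable-filter representation of $\xi_t$, and the scaled-level representation of $\Lambda_T^{-1/2}y_t$. By Lemma~\ref{lem:local-filter-approximation}, $\xi_t=B(L)^{-1}\widetilde u_t$ has uniformly bounded $L^{4r+\delta}$ moments, and so does $\Lambda_T^{-1/2}y_t$. For the levels I would write $y_{k,t}=\rho_k^t y_{k,0}+\sum_{i=0}^{t-1}\rho_k^i\xi_{k,t-i}$. Each of the eight claims is then an average of products of these objects. For \eqref{eq:level-level-bound-xi}--\eqref{eq:level-level-bound-y-e} I would not look for a law of large numbers. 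It suffices to bound the $L^1$ norm of each summand uniformly in $t$, by Hölder's inequality with the moment bounds $4r+\delta$ on $\xi$, on $\Lambda_T^{-1/2}y$, and on $e_{t,h}$. Note that $e_{t,h}$ is a finite combination of $u$'s with uniformly bounded $\Psi_\ell$, since the $\Psi_\ell$ involve at most $\mathbf A^{h}$ for fixed $h$. Products of four factors need fourth moments, which is why $4r+\delta$ is assumed. Markov's inequality then gives $O_p(1)$ uniformly over $\mathcal P$.

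For the score bounds \eqref{eq:level-score-bound-xi}--\eqref{eq:level-score-bound-y} I would use the MDS structure. The variable $x_t=\xi_{t-\ell}$ (or $\Lambda_T^{-1/2}y_{t-\ell}$) is $\mathcal F_t$-measurable, and $e_{t,h}=\sum_{m=1}^{h}c_m'u_{t+m}$ is a fixed finite sum of future innovations. I would therefore split $T^{-1/2}\sum_t x_te_{t,h}=\sum_{m=1}^h T^{-1/2}\sum_t x_t c_m'u_{t+m}$. Each inner sum is a martingale with differences $x_tc_m'u_{t+m}$ adapted to $\mathcal F_{t+m}$, and its cross terms vanish because $\mathbb E[u_{t+m}\mid\mathcal F_{t+m-1}]=0$. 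Hence $\mathbb E\|T^{-1/2}\sum_t x_t c_m'u_{t+m}\|^2=T^{-1}\sum_t\mathbb E\|x_t\|^2\|u_{t+m}\|^2\le C$, again by Hölder's inequality. Under local misspecification the same argument applies after replacing $e_{t,h}$ by its leading MDS part, with the remainder handled as in Remark~\ref{rem:local-intuition}. The statement here is about $e_{t,h}$, so this step is not even needed.

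The HAC tail bounds \eqref{eq:hac-level-tail-bound-xi}--\eqref{eq:hac-level-tail-bound-y} are the main obstacle. For a fixed lag $|j|>h$ I want
\begin{equation*}
\Bigl\|T^{-1}\sum_t x_te_{t,h}\,x_{t-j}'e_{t-j,h}\Bigr\|_{L^1}=O(T^{-1/2})
\end{equation*}
uniformly in $j$. Summing over at most $2b_T$ lags with $|k|$ bounded then gives $O_p(b_TT^{-1/2})$. The key observation is that when $|j|>h$ the windows of $e_{t,h}$ and $e_{t-j,h}$ do not overlap. For $j>h$, every term $u_{t+m}$ in $e_{t,h}$ has index larger than every variable in $x_t$, $x_{t-j}$ and $e_{t-j,h}$. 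So I would again expand $e_{t,h}=\sum_m c_m'u_{t+m}$ and treat $\sum_t (x_tx_{t-j}'e_{t-j,h})\,c_m'u_{t+m}$ as a martingale sum. Its second moment is $\sum_t\mathbb E[\|x_tx_{t-j}'e_{t-j,h}\|^2\|u_{t+m}\|^2]\le CT$, using eighth moments of the individual factors. This is guaranteed by $4r+\delta>8$ for $r>2$, which must be checked but holds. Negative $j$ is symmetric, with the roles of $t$ and $t-j$ exchanged. The delicate part is making the constants uniform in $j$ and in the persistence parameters. The $L^{4r+\delta}$ bounds in Lemma~\ref{lem:local-filter-approximation} are already uniform in $t$ and over $\mathcal P$, so this should go through. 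The tail sum starts at $|j|>h$ precisely so that no lag has overlapping windows; lags with $|j|\le h$ are covered instead by the $O_p(1)$ bounds \eqref{eq:level-level-bound-xi-e}--\eqref{eq:level-level-bound-y-e}.
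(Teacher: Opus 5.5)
Your proposal is correct and follows essentially the same route as the paper: H\"older plus Markov for the four level-product bounds, the $\mathcal F_t$-measurability of the regressors together with $\mathbb E(e_{t,h}\mid\mathcal F_t)=0$ for the score bounds, and, for $|j|>h$, mean-zero summands with an $O(T^{-1})$ second-moment bound uniform in $j$, summed over at most $2b_T$ lags. The only cosmetic difference is that you split $e_{t,h}$ into its $h$ innovation terms to get exact martingale-difference sums, whereas the paper argues directly that summands dated more than $h$ apart are uncorrelated.
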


\begin{lemma}[Local-score equivalence and feasible HAC approximation]
\label{lem:local-score-equivalence}
\label{lem:hac-feasible}
Suppose \eqref{localmis}--\eqref{eq:local_arma_error} and
Assumptions~\ref{ass:persistent} and \ref{ass:stable-score} hold. Let
$s\geq0$, $h$ and $q\geq p=d+s>1$ be fixed, and suppose
$\bar\zeta>1/(2s+2)$. Then, uniformly over $(D,B)\in\mathcal P$,
\begin{align}
&T^{-1/2}
\sum_{t=q}^{T-h}
H_qY_{t,q}
\left(
e_{t,h,T}-e_{t,h}
\right)
=
o_p(1).
\label{eq:local-score-equivalence}
\end{align}
If, in addition, Assumption~\ref{ass:rotated-nonsingularity} holds,
then $H_q(\widehat\Omega_{q,h}-\widetilde\Omega_{q,h})H_q'=o_p(1)$,
where $\widetilde\Omega_{q,h,T}$ and $\widetilde\Omega_{q,h}$ are the
infeasible HAC matrices formed from $Y_{t,q}e_{t,h,T}$ and
$Y_{t,q}e_{t,h}$, respectively.
\end{lemma}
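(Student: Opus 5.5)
The statement is Lemma~\ref{lem:local-score-equivalence}. My plan treats the score equivalence and the HAC approximation separately; both use the error decomposition \eqref{eq:local-error-decomposition} of Proposition~\ref{prop:local-lalp-representation}. Write
\[
e_{t,h,T}-e_{t,h}=a_{t,h,T}+\mathcal E_{t,h,T},
\qquad
a_{t,h,T}=\nu_1'\sum_{j=1}^{\min\{s,h-1\}}\sum_{\ell=0}^{h-j-1}\Psi_\ell\Pi_{j,T}u_{t+h-\ell-j}.
\]
The score difference then splits into two sums, one driven by $a_{t,h,T}$ and one by $\mathcal E_{t,h,T}$.

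\emph{The $a_{t,h,T}$ sum.} Every innovation in $a_{t,h,T}$ is dated strictly after $t$, since $h-\ell-j\ge1$. Hence $H_qY_{t,q}a_{t,h,T}$ is a finite sum of terms $H_qY_{t,q}\,c'u_{t+m}$ with $1\le m\le h-1$ and $\|c\|\le CT^{-\bar\zeta}$, using \eqref{eq:local-filter-coefficient-bounds}. Each such sum is a martingale-difference array with respect to $\mathcal F_{t+m}$, lagged at most $h$ steps. Its second moment is therefore bounded by $CT^{-2\bar\zeta}\,T^{-1}\sum_t\mathbb E\|H_qY_{t,q}\|^2\|u_{t+m}\|^2$. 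That average is uniformly bounded by \eqref{eq:local-xi-moment}, \eqref{eq:scaled-level-moment}, the moment bound in Assumption~\ref{ass:stable-score}, and Hölder's inequality. This part is $O_p(T^{-\bar\zeta})$.

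\emph{The $\mathcal E_{t,h,T}$ sum.} I do not use orthogonality here. By \eqref{eq:local-error-remainder}, $\mathcal E_{t,h,T}$ is a linear filter of $u$ whose total coefficient norm is $O(T^{-(s+1)\bar\zeta})$. Minkowski and Cauchy--Schwarz give
\[
\Big\|T^{-1/2}\sum_t H_qY_{t,q}\mathcal E_{t,h,T}\Big\|_{L^1}\le T^{1/2}\sup_t\|H_qY_{t,q}\|_{L^2}\|\mathcal E_{t,h,T}\|_{L^2}=O(T^{1/2-(s+1)\bar\zeta}).
\]
This is $o(1)$ exactly when $\bar\zeta>1/(2s+2)$. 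All constants are uniform over $\mathcal P$ because the moment bounds in Lemma~\ref{lem:local-filter-approximation} are uniform. This proves \eqref{eq:local-score-equivalence}, matching Remark~\ref{rem:local-intuition}.

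\emph{The HAC approximation.} First compare $\widehat\Omega_{q,h}$ with $\widetilde\Omega_{q,h,T}$, which is built on $e_{t,h,T}$. The feasible residual satisfies $\widehat e_{t,h}-e_{t,h,T}=-(\widehat\theta-\theta_T)'Y_{t,d+1}$, where $\theta_T$ is the pseudo-true coefficient vector of \eqref{eq:lalp_local}. (When $p>d+1$, rewrite the $Y_{t,d+1}$ coefficient accordingly, noting that the population coefficients on lags beyond $d+1$ are $O(T^{-2\bar\zeta})$.) Rotating by $H_{d+1}$, the rotated LALP error $H_{d+1}'^{-1}(\widehat\theta-\theta_T)$ is $O_p(T^{-1/2})$. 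This follows from the same block argument as the theorem proof, applied to OLS with $q=p=d+1$, together with the score bound just proved.

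Expand each rotated autocovariance $H_q\widehat\Gamma(j)H_q'$ into cross products. Each cross term involves rotated regressor products bounded by Lemma~\ref{lem:rotated-product-bounds}, multiplied by the $O_p(T^{-1/2})$ coefficient error. Summing over $|j|\le b_T$ gives $O_p(b_TT^{-1/2})=o_p(1)$. The long-lag tails are controlled by \eqref{eq:hac-level-tail-bound-xi}--\eqref{eq:hac-level-tail-bound-y}.

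Second, compare $\widetilde\Omega_{q,h,T}$ with $\widetilde\Omega_{q,h}$ using the decomposition $e_{t,h,T}-e_{t,h}=a_{t,h,T}+\mathcal E_{t,h,T}$. The quadratic term in $e_{t,h}$ times $a_{t,h,T}$ is bounded by $CT^{-\bar\zeta}$ per lag. Its bandwidth sum uses the martingale structure for $|j|>h$, which yields $O_p(T^{-\bar\zeta}+b_TT^{-\bar\zeta-1/2})$. The $\mathcal E$ terms give $O_p(b_TT^{-(s+1)\bar\zeta})$. This is $o_p(1)$ because $b_T=o(T^{1/2})$ and $(s+1)\bar\zeta>1/2$.

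\emph{Main obstacle.} The hard step is this last bandwidth sum. A crude bound of the form $b_T\times$(per-lag bound) fails for the $a_{t,h,T}$ cross terms multiplied by scaled levels. I will handle it the way \eqref{eq:hac-level-tail-bound-y} is obtained: for $|j|>h$, use that the products form a martingale-difference array across $t$, which reduces each lag's contribution to $O_p(T^{-1/2})$ times the coefficient norm. Assumption~\ref{ass:rotated-nonsingularity} is not needed for these differences. It enters only to ensure the first-step OLS rate via invertibility of $\widecheck\Sigma_{d+1,d+1}$.
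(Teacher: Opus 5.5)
Your proposal is correct and follows essentially the same route as the paper. It uses the same split of $e_{t,h,T}-e_{t,h}$ into the forward-dated $O(T^{-\bar\zeta})$ piece, handled through the martingale structure, and $\mathcal E_{t,h,T}$, handled by the crude $L^1$ bound of order $T^{1/2-(s+1)\bar\zeta}$. It then makes the same two-step HAC comparison through $\widetilde\Omega_{q,h,T}$, with rates $O_p(b_TT^{-1/2})$ and $O_p(T^{-\bar\zeta}+b_TT^{-\bar\zeta-1/2}+b_TT^{-(s+1)\bar\zeta})$.

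The one soft spot is your parenthetical for $p>d+1$. If the first step really regresses on $Y_{t,d+1}$, its residual differs from $e_{t,h,T}$ by an $O(T^{-2\bar\zeta})$ term, and the crude sum of that term over $|j|\le b_T$ need not vanish when $\bar\zeta\le 1/4$, which is possible once $s\ge 2$. The paper's proof instead writes the first-step error as $\widecheck\Sigma_{p,p}^{-1}T^{-1}\sum_tH_pY_{t,p}e_{t,h,T}$, i.e.\ it implicitly takes the first-step regression on $Y_{t,p}$ (the LALP when $p=d+1$). With that choice your $O_p(T^{-1/2})$ rate follows directly from the score bound.
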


\begin{lemma}[Stable--persistent separation and stable-block approximation]
\label{lem:stable-persistent-separation}
\label{lem:stable-block-approximation}
Suppose \eqref{localmis}--\eqref{eq:local_arma_error} and
Assumptions~\ref{ass:persistent} and \ref{ass:stable-score} hold. Let
$h$ and $q\geq p$ be fixed, and consider any subsequence along
which $\rho_k\to\rho_k^\ast\in[-1,1]$ for every $k$.

\textup{(i)} It holds that
$\widecheck\Sigma_{p,q}^{\xi}-\Sigma_{p,q,T}^{\xi}=o_p(1)$,
$\|\Sigma_{p,q,T}^{\xi}\|_2+\|\Omega_{q,h,T}^{\xi}\|_2\le C$, and
\begin{align}
&\widecheck\Sigma_{p,q}
=
\begin{pmatrix}
\widecheck\Sigma_{p,q}^{\xi} & o_p(1)\\
o_p(1) & O_p(1)
\end{pmatrix}.
\label{eq:stable-persistent-sigma-separation}
\end{align}
If Assumption~\ref{ass:rotated-nonsingularity} holds, then
$\lambda_{\min}(\Sigma_{p,q,T}^{\xi})\ge c_\Sigma$ for sufficiently
large $T$.

\textup{(ii)} If $s\geq0$, $q\geq p=d+s>1$ and
$\bar\zeta>1/(2s+2)$, then
\begin{align}
&T^{-1/2}
\sum_{t=q}^{T-h}
H_qY_{t,q}e_{t,h,T}
=
\begin{pmatrix}
T^{-1/2}\displaystyle\sum_{t=q}^{T-h}
Y_{t,q}^{\xi}e_{t,h}+o_p(1)
\\[1mm]
O_p(1)
\end{pmatrix}.
\label{eq:stable-persistent-score-separation}
\end{align}

\textup{(iii)} If, in addition to the conditions in \textup{(ii)},
Assumption~\ref{ass:rotated-nonsingularity} holds, then
$\widecheck\Omega_{q,h}^{\xi}-\Omega_{q,h,T}^{\xi}=o_p(1)$,
$\lambda_{\min}(\Omega_{q,h,T}^{\xi})\ge c_\Omega$ for sufficiently
large $T$, and
\begin{align}
&\widecheck\Omega_{q,h}
=
\begin{pmatrix}
\widecheck\Omega_{q,h}^{\xi} & o_p(1)\\
o_p(1) & O_p(1)
\end{pmatrix},
\label{eq:stable-persistent-block-separation}
\\
&(\Omega_{q,h,T}^{\xi})^{-1/2}
T^{-1/2}
\sum_{t=q}^{T-h}
Y_{t,q}^{\xi}e_{t,h}
\xrightarrow{d}
N\left(0,I_{(q-1)K+|\mathcal I_0|}\right).
\label{eq:stable-block-score-clt}
\end{align}
\end{lemma}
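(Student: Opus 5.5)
Working along an arbitrary subsequence with $\rho_k\to\rho_k^\ast$, as fixed in the statement, I will prove the three parts in order. Throughout I use the moment bounds \eqref{eq:local-xi-moment}--\eqref{eq:scaled-level-moment} of Lemma~\ref{lem:local-filter-approximation} and the product bounds of Lemma~\ref{lem:rotated-product-bounds}.

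\emph{Part (i).} The block $Y^{\xi}_{t,p}$ stacks quasi-differences $\xi_{t-j}$ and the scaled levels $\Lambda_{0,T}^{-1/2}y^{(0)}_{t-p+1}$ of the coordinates in $\mathcal I_0$. The roots $\rho_k$ of these coordinates stay bounded away from one along the subsequence, so $y^{(0)}_t=\sum_{\ell}D_0^\ell\xi^{(0)}_{t-\ell}$ is a geometrically decaying filter of $\xi_t$. In turn, $\xi_t=B(L)^{-1}\widetilde u_t$, and by \eqref{eq:local-filter-innovation} this is a stable filter of $u_t$. Hence $Y^\xi_{t,p}$ is a uniformly $L^{4r+\delta}$-bounded, geometrically NED triangular array on the $\alpha$-mixing $u_t$, up to an initial-condition transient that is $O_p(T^{-1/2})$ after averaging.

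The law of large numbers of \citet[Theorem~2]{andrews1988laws} then gives $\widecheck\Sigma^\xi_{p,q}-\Sigma^\xi_{p,q,T}=o_p(1)$. The same moment bounds give $\|\Sigma^\xi_{p,q,T}\|_2+\|\Omega^\xi_{q,h,T}\|_2\le C$. For $\Omega^\xi_{q,h,T}$ only finitely many autocovariances, $|j|\le h$, are nonzero beyond geometrically decaying tails, because $e_{t,h}$ is an $(h-1)$-dependent MA of the MDS $u_t$.

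For the off-diagonal blocks of \eqref{eq:stable-persistent-sigma-separation}, take an entry $T^{-1}\sum_t\xi_{t-\ell}\,y^{(1)\prime}_{t-\ell'}\Lambda_{1,T}^{-1/2}$. I substitute $y^{(1)}_{t-\ell'}=\sum_{m}D_1^m\xi^{(1)}_{t-\ell'-m}+D_1^{t-\ell'}y_0$. The part at lags near zero has bounded size and is killed by $\Lambda_{1,T}^{-1/2}\to0$; here I use the dichotomy $\tau_{k,T}\to\infty$ from Lemma~\ref{lem:rho-tau-dichotomy}. The remaining long-lag sum is a sample cross-moment of nearly uncorrelated terms. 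Its variance is bounded by $C\,T^{-1}\tau_{k,T}$, so after the scaling it is $O_p(T^{-1/2})+O(\tau_{k,T}^{-1/2})=o_p(1)$. The lower-right block is $O_p(1)$ by \eqref{eq:level-level-bound-y}.

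For the eigenvalue bound I argue by contradiction. If $\lambda_{\min}(\Sigma^\xi_{p,p,T})\to0$ along a further subsequence, then by Cauchy interlacing applied to the near block-diagonal $\widecheck\Sigma_{p,p}$ this would contradict Assumption~\ref{ass:rotated-nonsingularity}. The bound then transfers to $\Sigma^\xi_{p,q,T}$, because $Y^\xi_{t,p}$ is a sub-vector of $Y^\xi_{t,q}$.

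\emph{Part (ii).} Write $e_{t,h,T}=e_{t,h}+(e_{t,h,T}-e_{t,h})$. Lemma~\ref{lem:local-score-equivalence}, via the decomposition in Remark~\ref{rem:local-intuition}, removes the difference at cost $o_p(1)$; this is where $\bar\zeta>1/(2s+2)$ is needed. The $\xi$-block then reads exactly $T^{-1/2}\sum_tY^\xi_{t,q}e_{t,h}$, and the level block is $O_p(1)$ by \eqref{eq:level-score-bound-y}.

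\emph{Part (iii).} I replace $\widehat\Omega_{q,h}$ by the infeasible HAC matrix built from $e_{t,h}$, using Lemma~\ref{lem:hac-feasible}. On the $\xi$-block, \citet[Theorem~2.1]{dejong2000consistency} with $b_T=o(T^{1/2})$ gives $\widecheck\Omega^\xi_{q,h}-\Omega^\xi_{q,h,T}=o_p(1)$. The cross blocks split into two pieces. The lags $|j|\le h$ are handled as in part (i), through the $\Lambda_{1,T}^{-1/2}$ decay. The lags $h<|j|\le b_T$ are controlled by \eqref{eq:hac-level-tail-bound-xi}. This is the delicate step, since that bound is only $O_p(b_TT^{-1/2})=o_p(1)$, so the bandwidth restriction is used exactly here.

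The lower bound $\lambda_{\min}(\Omega^\xi_{q,h,T})\ge c_\Omega$ follows by the same interlacing-plus-contradiction argument as in part (i), now applied to $\widecheck\Omega_{q,h}$. The CLT \eqref{eq:stable-block-score-clt} comes from \citet[Corollary~1]{dejong1997central} applied to $\lambda'(\Omega^\xi_{q,h,T})^{-1/2}Y^\xi_{t,q}e_{t,h}$, followed by Cramér--Wold. The required variance normalization holds by definition of $\Omega^\xi_{q,h,T}$ together with its eigenvalue lower bound.

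I expect the main obstacle to be the uniform vanishing of the stable--persistent cross terms, both in $\widecheck\Sigma$ and in the HAC cross blocks, for near-unit roots with $\tau_{k,T}$ growing arbitrarily slowly. My first attempt will be the explicit variance computation of the scaled cross sums combined with the dichotomy $\tau_{k,T}\to\infty$.
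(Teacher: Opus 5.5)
Your proposal is correct and follows the paper's proof essentially step by step. It uses the same bounded-mean / $CT^{-1}\tau_{k,T}$-variance bound for the stable--persistent cross moments, the same interlacing-plus-Weyl contradiction for the eigenvalue bounds, and Lemma~\ref{lem:local-score-equivalence} for (ii); for (iii) it uses the same split of the HAC cross terms into $|j|\le h$ versus $h<|j|\le b_T$, together with de Jong's HAC and CLT results. One small correction: for $q>p$, $Y^{\xi}_{t,p}$ is not literally a sub-vector of $Y^{\xi}_{t,q}$, since its last block is the level $\Lambda_{0,T}^{-1/2}y^{(0)}_{t-p+1}$. It is, however, a uniformly bounded linear transformation of $Y^{\xi}_{t,q}$, and that is exactly what the singular-value inequality needs to transfer the lower bound from $\Sigma^{\xi}_{p,p,T}$ to $\Sigma^{\xi}_{p,q,T}$.
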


\begin{lemma}[Stationary specialization]
\label{lem:stationary-regime}
Suppose \eqref{localmis}--\eqref{eq:local_arma_error} and
Assumptions~\ref{ass:persistent} and \ref{ass:stable-score} hold. Let
$h$ and $q\geq p$ be fixed and suppose
$\max_{1\le k\le K}|\rho_k|\le1-c$ for some fixed $c>0$. Then
\begin{align}
\widehat\Sigma_{p,q}-\Sigma_{p,q,T}
&=
o_p(1).
\end{align}
Moreover, $\|\Sigma_{p,q,T}\|_2\le C$. If Assumption~\ref{ass:rotated-nonsingularity} holds, then
$\lambda_{\min}(\Sigma_{p,p,T})\ge c_\Sigma$. If, in addition,
$s\geq0$, $q\geq p=d+s>1$ and $\bar\zeta>1/(2s+2)$, then
$\|\Omega_{q,h,T}\|_2\le C$,
$\lambda_{\min}(\Omega_{q,h,T})\ge c_\Omega$, and
\begin{align}
&\widehat\Omega_{q,h}-\Omega_{q,h,T}
=
o_p(1),
\label{eq:stationary-hac-limit}
\\
&\Omega_{q,h,T}^{-1/2}
T^{-1/2}
\sum_{t=q}^{T-h}
Y_{t,q}e_{t,h,T}
\xrightarrow{d}
N(0,I_{qK}).
\label{eq:stationary-score-clt}
\end{align}
\end{lemma}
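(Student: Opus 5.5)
Lemma~\ref{lem:stationary-regime} is the stationary specialization of the stable-block argument, so my plan reuses the ingredients behind Lemmas~\ref{lem:local-filter-approximation}--\ref{lem:stable-block-approximation} with $\Lambda_T$ uniformly bounded. The steps below are arranged so that each one feeds the next.

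\textbf{Step 1: design matrix.} When $\max_k|\rho_k|\le1-c$, each $\tau_{k,T}\le 1/(1-(1-c)^2)$, so $\Lambda_T$ and $\Lambda_T^{-1}$ are uniformly bounded. Hence $H_p$, $H_q$ and their inverses are uniformly bounded matrices.
\begin{itemize}
\item I write $y_t=\sum_{j\ge0}\Gamma_{j,T}\widetilde u_{t-j}$, with $\|\Gamma_{j,T}\|_2\le C(1-c')^j$ uniformly, using the companion structure of $\mathbf B$ and $D$. Substituting \eqref{eq:local-filter-innovation} and the bound \eqref{eq:local-filter-coefficient-bounds} yields a linear process in $u_t$ with geometrically decaying coefficients.
\item The initial condition contributes a geometrically vanishing term, controlled by Assumption~\ref{ass:stable-score}(iii).
\item Consequently $Y_{t,p}Y_{t,q}'$ is $L^2$-NED on the $\alpha$-mixing $\{u_t\}$ with geometric NED coefficients, and it has uniformly bounded $(2r+\delta/2)$-th moments by \eqref{eq:local-xi-moment}--\eqref{eq:scaled-level-moment}.
\item The triangular-array LLN of \citet[Theorem~2]{andrews1988laws} then gives $\widehat\Sigma_{p,q}-\Sigma_{p,q,T}=o_p(1)$, and the moment bound gives $\|\Sigma_{p,q,T}\|_2\le C$.
\item For the eigenvalue bound, note $\widecheck\Sigma_{p,p}=H_p\widehat\Sigma_{p,p}H_p'$ and $H_p$ is bounded. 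Assumption~\ref{ass:rotated-nonsingularity} together with the LLN gives $\lambda_{\min}(H_p\Sigma_{p,p,T}H_p')\ge \underline c-o(1)$, since a deterministic matrix lying within $o_p(1)$ of a matrix with eigenvalues $\ge\underline c$ with probability $1-\varepsilon$ must itself satisfy the bound. Dividing by $\|H_p^{-1}\|_2^2$ then gives $\lambda_{\min}(\Sigma_{p,p,T})\ge c_\Sigma$.
\end{itemize}

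\textbf{Step 2: score.} I decompose $e_{t,h,T}=e_{t,h}+(e_{t,h,T}-e_{t,h})$ and apply Lemma~\ref{lem:local-score-equivalence} multiplied by the bounded $H_q^{-1}$. This gives $T^{-1/2}\sum_t Y_{t,q}(e_{t,h,T}-e_{t,h})=o_p(1)$ exactly when $\bar\zeta>1/(2s+2)$.
\begin{itemize}
\item The leading process $Y_{t,q}e_{t,h}$ has mean zero by the MDS property, because $e_{t,h}$ involves only $u_{t+1},\dots,u_{t+h}$. It is an $h$-dependent-in-future, NED array with bounded $2r$-th moments.
\item The NED CLT of \citet[Corollary~1]{dejong1997central} therefore applies after standardizing by its own variance, once that variance is shown to be nondegenerate.
\item $\Omega_{q,h,T}$ differs from $\operatorname{Var}(T^{-1/2}\sum_t Y_{t,q}e_{t,h})$ by $o(1)$, by the same $L^2$ bound as above plus Cauchy--Schwarz. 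The operator-norm bound $\|\Omega_{q,h,T}\|_2\le C$ follows from absolute summability of the autocovariances under geometric NED and the moment bounds.
\end{itemize}

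\textbf{Step 3: HAC matrix.} Lemma~\ref{lem:hac-feasible} reduces the feasible $\widehat\Omega_{q,h}$ to the infeasible $\widetilde\Omega_{q,h}$, again after multiplying by the bounded $H_q^{-1}$. Then \citet[Theorem~2.1]{dejong2000consistency} gives $\widetilde\Omega_{q,h}-\operatorname{Var}(T^{-1/2}\sum_tY_{t,q}e_{t,h})=o_p(1)$ under $b_T\to\infty$ and $b_T=o(T^{1/2})$. Together with Step 2 this yields \eqref{eq:stationary-hac-limit}.

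\textbf{Step 4: lower eigenvalue bound and CLT.} The bound $\lambda_{\min}(\Omega_{q,h,T})\ge c_\Omega$ follows from Assumption~\ref{ass:rotated-nonsingularity} applied to $\widecheck\Omega_{q,h}=H_q\widehat\Omega_{q,h}H_q'$, combined with \eqref{eq:stationary-hac-limit} and boundedness of $H_q^{-1}$, by the same deterministic-versus-probabilistic argument as in Step 1. With $\Omega_{q,h,T}$ bounded above and below, the Cramér--Wold device and the NED CLT give \eqref{eq:stationary-score-clt}.

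\textbf{Main obstacle.} I expect the main difficulty to be verifying the NED/moment conditions uniformly over the triangular array, with constants not depending on $(D,B)$ or $T$. The key is to make the geometric decay of $\Gamma_{j,T}$ explicit from $\|\mathbf B^\ell\|_2\le C(1-c)^\ell$ and $|\rho_k|\le1-c$, and to absorb the local filter through \eqref{eq:local-filter-coefficient-bounds}.
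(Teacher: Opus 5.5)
Your proof is correct, and it uses the same toolkit as the paper's own proof:
\begin{itemize}
\item $H_p,H_q$ and their inverses are uniformly bounded, and the processes are geometrically NED;
\item \citet[Theorem~2]{andrews1988laws} handles the Gram matrix;
\item Weyl's inequality plus Assumption~\ref{ass:rotated-nonsingularity} gives both eigenvalue lower bounds;
\item Lemma~\ref{lem:hac-feasible} gives the feasible-to-infeasible HAC step;
\item \citet{dejong2000consistency} and \citet{dejong1997central}, with Cram\'er--Wold, finish the argument.
\end{itemize}

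The one real difference is where you anchor the score. The paper works directly with $Y_{t,q}e_{t,h,T}$. It compares $\widehat\Omega_{q,h}$ with the infeasible HAC built from $e_{t,h,T}$ and applies de Jong's HAC result to that process. It then runs the CLT on the centered $Y_{t,q}e_{t,h,T}$, disposing of its nonzero mean via $\|T^{-1/2}\sum_t\mathbb E[Y_{t,q}e_{t,h,T}]\|_2\le CT^{1/2-(s+1)\bar\zeta}$.

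You instead run the HAC and CLT arguments on the reference score $Y_{t,q}e_{t,h}$, and transfer to $e_{t,h,T}$ through Lemma~\ref{lem:local-score-equivalence}. The reference score is exactly mean zero, and its autocovariances vanish beyond lag $h$, so the upper bound on its variance is immediate. Your route buys a cleaner limiting object and reuses an already-proved lemma.

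The cost is an extra step: you must show $\Omega_{q,h,T}-\operatorname{Var}(T^{-1/2}\sum_tY_{t,q}e_{t,h})=o(1)$. The $o_p(1)$ conclusion of Lemma~\ref{lem:local-score-equivalence} does not deliver this, because convergence in probability does not control second moments. You need an $L^2$ version, and it is available from that lemma's proof:
\begin{itemize}
\item The explicit future-innovation terms have $L^2$ norm $O(T^{-\bar\zeta})$ by the martingale argument.
\item Minkowski plus H\"older with the $(4r+\delta)$-moment bounds give $\|T^{-1/2}\sum_tY_{t,q}\mathcal E_{t,h,T}\|_{L^2}=O(T^{1/2-(s+1)\bar\zeta})=o(1)$.
\end{itemize}
You should state this explicitly rather than citing ``the same $L^2$ bound as above,'' since nothing above it was an $L^2$ bound.

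There is also a small slip in the eigenvalue transfers. Passing from $\lambda_{\min}(H_p\Sigma_{p,p,T}H_p')\ge\underline c-o(1)$ back to $\Sigma_{p,p,T}$ divides by $\|H_p\|_2^2$, not $\|H_p^{-1}\|_2^2$. Likewise, Step~4 uses boundedness of $H_q$, not $H_q^{-1}$. Both matrices are bounded in this regime, so nothing changes.
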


\subsection{Proofs of auxiliary lemmas}
\label{subsec:local-aux-proofs}

\subsubsection{Proof of Lemma~\ref{lem:local-filter-approximation}}

\noindent
\emph{Proof of \eqref{eq:local-filter-coefficient-bounds}.}
For all sufficiently large $T$,
$(I_K-\Phi_{j,T}L)^{-1}
=\sum_{a\geq0}\Phi_{j,T}^{\,a}L^a$, $j=1,\ldots,m$.
Each term in $\Pi_{j,T}$ contains exactly $j$ local matrices, each with
norm at most $CT^{-\bar\zeta}$, and the number of such terms is polynomial
in $j$. Hence
$\|\Pi_{j,T}\|_2\le C(1+j)^{m+n}(CT^{-\bar\zeta})^j$, and summing the
geometric tail gives
$\sum_{j\geq j_0}\|\Pi_{j,T}\|_2\le CT^{-j_0\bar\zeta}$.

\medskip
\noindent
\emph{Proof of \eqref{eq:local-quasi-difference-bound},
\eqref{eq:local-xi-moment}, and the NED property.}
Since $A(L)=B(L)(I_K-DL)$ and $\xi_t=(I_K-DL)y_t$,
\begin{align}
\xi_t
&=
B(L)^{-1}u_t
+
B(L)^{-1}\sum_{j\geq1}\Pi_{j,T}u_{t-j}
=
\sum_{j\geq0}C_{j,T}u_{t-j}.
\label{eq:aux-local-xi-representation}
\end{align}
Uniform geometric summability of $B(L)^{-1}$, stability of the local
autoregressive inverse factors for all sufficiently large $T$, and fixed
$m,n$ give, for some $\varrho\in(0,1)$,
$\sup_T\|C_{j,T}\|_2\le C\varrho^j$ and
$\sup_T\sum_{j>\ell_0}\|C_{j,T}\|_2\le C\varrho^{\ell_0}$.
Minkowski's inequality and Assumption~\ref{ass:stable-score}(ii) then
give $\sup_{t\le T}\|\xi_t\|_{L^{4r+\delta}}\le C$, while truncating
\eqref{eq:aux-local-xi-representation} after $\ell_0$ lags gives the geometric
$L_{4r+\delta}$-NED bound. Finally,
$\sup_{t\le T}\|\xi_t-B(L)^{-1}u_t\|_{L^{2r}}
\le C\sum_{j\geq1}\|\Pi_{j,T}\|_2\le CT^{-\bar\zeta}$.

\medskip
\noindent
\emph{Proof of \eqref{eq:scaled-level-moment}.}
From $\xi_t=(I_K-DL)y_t$,
\begin{align}
y_t
&=
D^ty_0+\sum_{\ell=0}^{t-1}D^\ell\xi_{t-\ell}.
\label{eq:local-level-linear-representation}
\end{align}
For each $k$, rearranging \eqref{eq:aux-local-xi-representation} as a
weighted sum of the martingale differences and applying the
Burkholder--Rosenthal inequality%
\footnote{Burkholder--Rosenthal
inequality states the following fact. For deterministic coefficients $\{a_j\}$ and a
martingale-difference sequence $\{u_j\}$ with uniformly bounded
$(4r+\delta)$th moments,
$\mathbb E|\sum_j a_j'u_j|^{4r+\delta}
\le C(\sum_j\|a_j\|^2)^{(4r+\delta)/2}$.
See, e.g., \citet[Theorem~2.10]{hall1980martingale}.}
gives, by Assumption~\ref{ass:stable-score}(ii),
$\mathbb E|\sum_{\ell=0}^{t-1}\rho_k^\ell\xi_{k,t-\ell}|^{4r+\delta}
\le C\tau_{k,T}^{(4r+\delta)/2}$.
Thus
$\|\tau_{k,T}^{-1/2}\sum_{\ell=0}^{t-1}
\rho_k^\ell\xi_{k,t-\ell}\|_{L^{4r+\delta}}\le C$.
Since $D$ and $\Lambda_T$ are diagonal and $|\rho_k|\le1$,
$\|\Lambda_T^{-1/2}D^ty_0\|\le\|\Lambda_T^{-1/2}y_0\|$, so
\eqref{eq:local-level-linear-representation} and
Assumption~\ref{ass:stable-score}(iii) yield
$\sup_{0\le t\le T}\|\Lambda_T^{-1/2}y_t\|_{L^{4r+\delta}}\le C$.

\medskip
\noindent
\emph{Proof of the persistence dichotomy.}
If $|\rho_k^\ast|<1$, then $|\rho_k|\le1-c$ for all sufficiently large
$T$, so $\tau_{k,T}\le\{1-(1-c)^2\}^{-1}$. If
$|\rho_k^\ast|=1$, then, for every fixed $\ell_0$ and $T>\ell_0$,
$\tau_{k,T}\ge\sum_{\ell=0}^{\ell_0}\rho_k^{2\ell}\to \ell_0+1$; hence
$\tau_{k,T}\to\infty$.

\subsubsection{Proof of Lemma~\ref{lem:rotated-product-bounds}}

\noindent
\emph{Proof of \eqref{eq:level-level-bound-xi}--\eqref{eq:level-level-bound-y-e}.}
Since $e_{t,h}$ is a fixed linear combination of
$u_{t+1},\ldots,u_{t+h}$,
$\sup_{t\le T}\|e_{t,h}\|_{L^{4r+\delta}}\le C$.
Together with Lemma~\ref{lem:local-filter-approximation} and H\"older's
inequality,
\begin{align}
\begin{split}
\mathbb E\|\xi_{t-\ell}y_{t-\ell'}'\Lambda_T^{-1/2}\|
&\le C,\\
\mathbb E\|\Lambda_T^{-1/2}y_{t-\ell}y_{t-\ell'}'
\Lambda_T^{-1/2}\|
&\le C,\\
\mathbb E\|\xi_{t-\ell}y_{t-\ell'}'\Lambda_T^{-1/2}
e_{t,h}e_{t+j,h}\|
&\le C,\\
\mathbb E\|\Lambda_T^{-1/2}y_{t-\ell}y_{t-\ell'}'
\Lambda_T^{-1/2}e_{t,h}e_{t+j,h}\|
&\le C.
\end{split}
\end{align}
The triangle inequality and Markov's inequality therefore imply that
each corresponding sample average is $O_p(1)$, which gives
\eqref{eq:level-level-bound-xi}--\eqref{eq:level-level-bound-y-e}.

\medskip
\noindent
\emph{Proof of \eqref{eq:level-score-bound-xi} and
\eqref{eq:level-score-bound-y}.}
Both $\xi_{t-\ell}$ and $\Lambda_T^{-1/2}y_{t-\ell}$ are
$\mathcal F_t$-measurable and
$\mathbb E(e_{t,h}\mid\mathcal F_t)=0$.
For $t>s+h$,
\begin{align}
\mathbb E\!\left[
\xi_{t-\ell}e_{t,h}e_{s,h}\xi_{s-\ell}'
\right]
&=
\mathbb E\!\left[
\mathbb E\!\left(
\xi_{t-\ell}e_{t,h}\mid\mathcal F_t
\right)
e_{s,h}\xi_{s-\ell}'
\right]
=0,
\end{align}
and the same argument applies with
$\xi_{t-\ell}$ replaced by $\Lambda_T^{-1/2}y_{t-\ell}$.
Lemma~\ref{lem:local-filter-approximation} and H\"older's inequality
give uniformly bounded second moments. Since only $|t-s|\le h$ terms
remain,
\begin{align}
\mathbb E\left\|\sum_t\xi_{t-\ell}e_{t,h}\right\|^2
+
\mathbb E\left\|\sum_t\Lambda_T^{-1/2}y_{t-\ell}e_{t,h}\right\|^2
\le CT.
\end{align}
Chebyshev's inequality gives
\eqref{eq:level-score-bound-xi} and
\eqref{eq:level-score-bound-y}.

\medskip
\noindent
\emph{Proof of \eqref{eq:hac-level-tail-bound-xi} and
\eqref{eq:hac-level-tail-bound-y}.}
Fix $j>h$. Since $\mathbb E(e_{t,h}\mid\mathcal F_t)=0$ and every other
factor is $\mathcal F_t$-measurable, each summand has mean zero, and
summands dated more than $h$ apart are uncorrelated. Hence, by
\eqref{eq:local-xi-moment}, \eqref{eq:scaled-level-moment} and
H\"older's inequality,
\begin{align}
&\mathbb E
\left\|
T^{-1}\sum_t
\xi_{t-\ell}e_{t,h}
y_{t-j-\ell'}'\Lambda_T^{-1/2}e_{t-j,h}
\right\|^2
\le
CT^{-1}
\end{align}
uniformly over $h<j\le b_T$, and the same bound holds with
$\xi_{t-\ell}$ replaced by $\Lambda_T^{-1/2}y_{t-\ell}$. The
Cauchy--Schwarz inequality therefore bounds the corresponding first
moments by $CT^{-1/2}$, and the case $j<-h$ follows by reindexing and
transposition. Summing over $h<|j|\le b_T$ with bounded kernel weights
and applying Markov's inequality proves
\eqref{eq:hac-level-tail-bound-xi} and
\eqref{eq:hac-level-tail-bound-y}.

\subsubsection{Proof of Lemma~\ref{lem:local-score-equivalence}}

By \eqref{eq:local-error-decomposition},
\begin{align}
e_{t,h,T}-e_{t,h}
&=
\nu_1'
\sum_{j=1}^{\min\{s,h-1\}}
\sum_{\ell=0}^{h-j-1}
\Psi_\ell\Pi_{j,T}u_{t+h-\ell-j}
+
\mathcal E_{t,h,T}.
\label{eq:aux-local-error-gap}
\end{align}
For $s=0$,
$\sum_{j=1}^{\min\{s,h-1\}}
\sum_{\ell=0}^{h-j-1}
\Psi_\ell\Pi_{j,T}u_{t+h-\ell-j}=0$.
Moreover,
\eqref{eq:local-error-remainder} and
\eqref{eq:local-filter-coefficient-bounds} give
$\sup_{t\le T}\|\mathcal E_{t,h,T}\|_{L^{2r}}
\le CT^{-(s+1)\bar\zeta}$.
Also $\sup_{t\le T}\|H_qY_{t,q}\|_{L^{2r}}\le C$.
Since $H_qY_{t,q}$ is $\mathcal F_t$-measurable, the martingale moment
inequality and \eqref{eq:local-filter-coefficient-bounds} give
\begin{align}
T^{-1/2}\sum_t
H_qY_{t,q}\nu_1'
\sum_{j=1}^{\min\{s,h-1\}}
\sum_{\ell=0}^{h-j-1}
\Psi_\ell\Pi_{j,T}u_{t+h-\ell-j}
&=
O_p(T^{-\bar\zeta}),
\label{eq:aux-local-first-order-score}
\\
\left\|
T^{-1/2}\sum_tH_qY_{t,q}\mathcal E_{t,h,T}
\right\|_{L^1}
&\le CT^{1/2-(s+1)\bar\zeta}=o(1).
\label{eq:aux-local-second-order-score}
\end{align}
Markov's inequality proves \eqref{eq:local-score-equivalence}.

For the HAC approximation, Proposition~\ref{prop:local-lalp-representation}
gives, for $s\geq1$,
\begin{align}
(H_p')^{-1}
\left[
\widehat\theta_h^{\operatorname{LALP}}
-
\begin{pmatrix}
\beta_{h,T}\\
\eta_{h,p,T}
\end{pmatrix}
\right]
&=
\widecheck\Sigma_{p,p}^{-1}
T^{-1}\sum_tH_pY_{t,p}e_{t,h,T}
=
O_p(T^{-1/2}),
\label{eq:aux-local-ols-expansion}
\end{align}
by \eqref{eq:local-score-equivalence},
Lemma~\ref{lem:rotated-product-bounds}, and
Assumption~\ref{ass:rotated-nonsingularity}. For $s=0$, replace
$(\beta_{h,T},\eta_{h,p,T})$ by $(\beta_h,\eta_{h,d})$.
Expanding around $e_{t,h,T}$ gives
$\|H_q(\widehat\Omega_{q,h}-\widetilde\Omega_{q,h,T})H_q'\|_2
=O_p(b_TT^{-1/2})+O_p(b_TT^{-1})=o_p(1)$.
Using \eqref{eq:aux-local-error-gap} and the bound on
$\mathcal E_{t,h,T}$ gives
$\|H_q(\widetilde\Omega_{q,h,T}-\widetilde\Omega_{q,h})H_q'\|_2
=O_p(T^{-\bar\zeta})
+O_p(b_TT^{-\bar\zeta-1/2})
+O_p(b_TT^{-(s+1)\bar\zeta})=o_p(1)$.
Here $b_T=o(T^{1/2})$ and $(s+1)\bar\zeta>1/2$.

\subsubsection{Proof of Lemma~\ref{lem:stable-persistent-separation}}

By Lemma~\ref{lem:rho-tau-dichotomy},
$\|\Lambda_{1,T}^{-1/2}\|_2\to0$.
For $k\in\mathcal I_0$, the corresponding powers of $D$ decay
geometrically. Equation~\eqref{eq:local-level-linear-representation},
Assumption~\ref{ass:stable-score}(iii), and
Lemma~\ref{lem:local-filter-approximation} imply that the coordinates of
$Y_{t,p}^{\xi}$ and $Y_{t,q}^{\xi}$ are uniformly geometrically NED on
$\{u_t\}$, up to initial-condition terms whose normalized contribution is
$o_p(1)$.

Fix a coordinate $i$ of $Y_{t,p}^{\xi}$ and $k\in\mathcal I_1$; the same
argument applies with $p$ replaced by $q$. By
\eqref{eq:local-level-linear-representation},
\begin{align}
y_{k,t-b}
&=
\rho_k^{\,t-b}y_{k,0}
+
\sum_{\ell=0}^{t-b-1}\rho_k^\ell\xi_{k,t-b-\ell}.
\label{eq:aux-persistent-level-expansion}
\end{align}
Geometric NED and Assumption~\ref{ass:stable-score}(ii) give, for fixed
$b$, $\sup_{t,T}|\mathbb E[(Y_{t,p}^{\xi})_i y_{k,t-b}]|\le C$ and\\
$\mathbb E|T^{-1}\sum_t\{(Y_{t,p}^{\xi})_i y_{k,t-b}
-\mathbb E[(Y_{t,p}^{\xi})_i y_{k,t-b}]\}|^2
\le CT^{-1}\tau_{k,T}$. Hence
\begin{align}
T^{-1}\sum_t
(Y_{t,p}^{\xi})_i\,y_{k,t-b}\tau_{k,T}^{-1/2}
&=
O(\tau_{k,T}^{-1/2})+O_p(T^{-1/2})
=
o_p(1).
\label{eq:aux-stable-persistent-cross-generic}
\end{align}
The same bound holds for fixed-lag products multiplied by
$e_{t,h}e_{t-j,h}$.

\noindent
\emph{Proof of \textup{(i)}.}
The off-diagonal blocks in
\eqref{eq:stable-persistent-sigma-separation} are $o_p(1)$ by
\eqref{eq:aux-stable-persistent-cross-generic}; the lower-right block is
$O_p(1)$ by \eqref{eq:level-level-bound-y}.
For the upper-left block,
\citet[Theorem~2]{andrews1988laws} applies entrywise to
$Y_{t,p}^{\xi}(Y_{t,q}^{\xi})'$, giving
$\widecheck\Sigma_{p,q}^{\xi}-\Sigma_{p,q,T}^{\xi}=o_p(1)$.
The moment bounds give
$\|\Sigma_{p,q,T}^{\xi}\|_2+\|\Omega_{q,h,T}^{\xi}\|_2\le C$.

Under Assumption~\ref{ass:rotated-nonsingularity}, applying the same limit
with $q=p$, Cauchy's interlacing theorem, and Weyl's inequality gives
$\lambda_{\min}(\Sigma_{p,p,T}^{\xi})\ge c_\Sigma$ for all sufficiently
large $T$. Since $Y_{t,p}^{\xi}$ is an exact uniformly bounded linear
transformation of $Y_{t,q}^{\xi}$, the singular-value inequality gives,
after reducing $c_\Sigma$ if necessary,
$\lambda_{\min}(\Sigma_{p,q,T}^{\xi})\ge c_\Sigma$.

\medskip
\noindent
\emph{Proof of \textup{(ii)}.}
Lemma~\ref{lem:local-score-equivalence} gives
$T^{-1/2}\sum_tH_qY_{t,q}e_{t,h,T}
=T^{-1/2}\sum_tH_qY_{t,q}e_{t,h}+o_p(1)$.
Under \eqref{eq:aux-rotated-partition}, the upper block is
$T^{-1/2}\sum_tY_{t,q}^{\xi}e_{t,h}$ and the lower block is $O_p(1)$ by
\eqref{eq:level-score-bound-y}.

\medskip
\noindent
\emph{Proof of \textup{(iii)}.}
The off-diagonal blocks of $H_q\widetilde\Omega_{q,h}H_q'$ are
kernel-weighted sums over $|j|\le b_T$ of the terms below. For
$|j|\le h$, \eqref{eq:aux-stable-persistent-cross-generic} gives
\begin{align}
&T^{-1}\sum_t
Y_{t,q}^{\xi}e_{t,h}
(y_{t-j-q+1}^{(1)})'
\Lambda_{1,T}^{-1/2}e_{t-j,h}
=
o_p(1),
\end{align}
and for $h<|j|\le b_T$, \eqref{eq:hac-level-tail-bound-xi} and
\eqref{eq:hac-level-tail-bound-y} give
\begin{align}
&\sum_{h<|j|\le b_T}
k(jb_T^{-1})
T^{-1}\sum_t
Y_{t,q}^{\xi}e_{t,h}
(y_{t-j-q+1}^{(1)})'
\Lambda_{1,T}^{-1/2}e_{t-j,h}
=
O_p(b_TT^{-1/2})
=
o_p(1).
\end{align}
The lower-right block is $O_p(1)$ by
\eqref{eq:level-level-bound-y-e} and
\eqref{eq:hac-level-tail-bound-y}. Together with
Lemma~\ref{lem:hac-feasible}, these bounds prove
\eqref{eq:stable-persistent-block-separation}.

Let $\widetilde\Omega_{q,h}^{\xi}$ be the upper-left block of
$H_q\widetilde\Omega_{q,h}H_q'$. Then
$\widecheck\Omega_{q,h}^{\xi}-\Omega_{q,h,T}^{\xi}
=(\widecheck\Omega_{q,h}^{\xi}-\widetilde\Omega_{q,h}^{\xi})
+(\widetilde\Omega_{q,h}^{\xi}-\Omega_{q,h,T}^{\xi})$.
Lemma~\ref{lem:hac-feasible} gives
$\widecheck\Omega_{q,h}^{\xi}
-\widetilde\Omega_{q,h}^{\xi}=o_p(1)$.
Also
$\sup_{t\le T}\|Y_{t,q}^{\xi}e_{t,h}\|_{L^r}\le C$, and
$Y_{t,q}^{\xi}e_{t,h}$ is uniformly geometrically $L_r$-NED on
$\{u_t\}$. Hence
\citet[Theorem~2.1]{dejong2000consistency} gives
$\widetilde\Omega_{q,h}^{\xi}
-\Omega_{q,h,T}^{\xi}=o_p(1)$.

Cauchy's interlacing theorem gives
$\lambda_{\min}(\widecheck\Omega_{q,h}^{\xi})
\ge\lambda_{\min}(\widecheck\Omega_{q,h})$.
If $\lambda_{\min}(\Omega_{q,h,T}^{\xi})$ were not bounded away from zero,
there would exist a subsequence along which it converges to zero.
The preceding limit and Weyl's inequality would then contradict
Assumption~\ref{ass:rotated-nonsingularity}. Thus
$\lambda_{\min}(\Omega_{q,h,T}^{\xi})\ge c_\Omega$ for all sufficiently
large $T$.

Finally, $\mathbb E[Y_{t,q}^{\xi}e_{t,h}]=0$. For nonzero $\nu$ and
$0<\delta_0<r-2$, the lower bound on
$\lambda_{\min}(\Omega_{q,h,T}^{\xi})$ gives
\begin{align}
&T^{-1-\delta_0/2}\sum_{t=q}^{T-h}
\mathbb E
\left|
\nu'(\Omega_{q,h,T}^{\xi})^{-1/2}
Y_{t,q}^{\xi}e_{t,h}
\right|^{2+\delta_0}
\le
CT^{-\delta_0/2}\|\nu\|^{2+\delta_0}
\to0.
\end{align}
The geometric NED bounds and
Assumption~\ref{ass:stable-score}(ii) therefore satisfy
\citet[Corollary~1]{dejong1997central}. The Cram\'er--Wold device gives
\eqref{eq:stable-block-score-clt}.

\subsubsection{Proof of Lemma~\ref{lem:stationary-regime}}

Under $\max_k|\rho_k|\le1-c$, $\Lambda_T$, $H_p$, $H_q$, and their
inverses are uniformly bounded. Lemma~\ref{lem:local-filter-approximation}
and the stable level recursion imply uniform geometric NED for the
relevant level and quasi-difference processes, while the initial-condition
terms decay geometrically.

For the Gram matrix, \citet[Theorem~2]{andrews1988laws} applies entrywise,
so $\widehat\Sigma_{p,q}-\Sigma_{p,q,T}=o_p(1)$ and
$\|\Sigma_{p,q,T}\|_2\le C$. Under
Assumption~\ref{ass:rotated-nonsingularity}, the same result with $q=p$,
the boundedness of $H_p,H_p^{-1}$, and Weyl's inequality give
$\lambda_{\min}(\Sigma_{p,p,T})\ge c_\Sigma$ for all sufficiently large
$T$.

For \eqref{eq:stationary-hac-limit},
$\widehat\Omega_{q,h}-\Omega_{q,h,T}
=(\widehat\Omega_{q,h}-\widetilde\Omega_{q,h,T})
+(\widetilde\Omega_{q,h,T}-\Omega_{q,h,T})$. The proof of Lemma~\ref{lem:hac-feasible} gives
$\widehat\Omega_{q,h}-\widetilde\Omega_{q,h,T}=o_p(1)$, while
\citet[Theorem~2.1]{dejong2000consistency} gives
$\widetilde\Omega_{q,h,T}-\Omega_{q,h,T}=o_p(1)$. If
$\lambda_{\min}(\Omega_{q,h,T})$ were not uniformly bounded away from
zero, then \eqref{eq:stationary-hac-limit}, boundedness of
$H_q,H_q^{-1}$, and Weyl's inequality
would contradict Assumption~\ref{ass:rotated-nonsingularity}. Hence
$\lambda_{\min}(\Omega_{q,h,T})\ge c_\Omega$ for all sufficiently large
$T$.

The explicit terms in \eqref{eq:aux-local-error-gap} have zero
conditional mean, while
\begin{align}
\left\|
T^{-1/2}\sum_t
\mathbb E[Y_{t,q}e_{t,h,T}]
\right\|_2
&\le
CT^{1/2-(s+1)\bar\zeta}
=
o(1).
\end{align}
For nonzero $\nu\in\mathbb R^{qK}$ and $0<\delta_0<r-2$, the lower
bound on $\lambda_{\min}(\Omega_{q,h,T})$ gives
\begin{align}
&T^{-1-\delta_0/2}\sum_t
\mathbb E
\left|
\nu'
\Omega_{q,h,T}^{-1/2}
\left\{
Y_{t,q}e_{t,h,T}
-
\mathbb E[Y_{t,q}e_{t,h,T}]
\right\}
\right|^{2+\delta_0}
\le
CT^{-\delta_0/2}
\to0.
\end{align}
Geometric NED and Assumption~\ref{ass:stable-score}(ii) then satisfy
\citet[Corollary~1]{dejong1997central}. Together with the preceding
mean bound, the Cram\'er--Wold device gives
\eqref{eq:stationary-score-clt}.

\subsection{Proof of Theorem~\ref{thm:persistence-uniform-gmm}}
\label{sec:uniform-theorem-proof}

\begin{proof}
Let $s=p-d\geq0$. Since
$p\geq\max\{d,2\}$, $p=d+s>1$.
Set $\Phi_j=0$ and $\Theta_k=0$ for all $j$ and $k$ in
\eqref{eq:local_arma_error}, and take any
$\bar\zeta>1/(2s+2)$. Then
$\widetilde u_t=u_t$, $\Pi_{j,T}=0$ for every $j\geq1$,
$\beta_{h,T}=\beta_h$, and $e_{t,h,T}=e_{t,h}$.
Theorem~\ref{theorem:local_mis_lalp} therefore applies directly.
Hence $W_{T,h}\xrightarrow{d}\chi_g^2$ along every parameter sequence
in $\mathcal P$. The sequential characterization of uniform
convergence, continuity of $F_{\chi_g^2}$, and P\'olya's theorem yield
\eqref{eq:uniform-wald-convergence}.
\end{proof}

\vspace{0.5\baselineskip}

\section{Assumption~\ref{ass:rotated-nonsingularity} under local misspecification}
\label{app:local-rotated-nonsingularity}

This section illustrates Assumption~\ref{ass:rotated-nonsingularity}
with an AR$(1)$ model under local ARMA$(1,1)$ error misspecification.
Consider
\begin{align}
y_t
&=
\rho_Ty_{t-1}+\widetilde u_t.
\label{eq:local-nonsing-ar1}
\end{align}
Let $\rho_T\in[-1,1]$ and
$(1-\phi_TL)\widetilde u_t=(1+\theta_TL)u_t$, where
$\phi_T=T^{-\zeta_{AR}}\phi$,
$\theta_T=T^{-\zeta_{MA}}\theta$, and
$\bar\zeta=\min\{\zeta_{AR},\zeta_{MA}\}>1/4$.
Let $\{u_t\}_{t\in\mathbb Z}$ be i.i.d.\ with
$\mathbb E u_t=0$, $\mathbb E u_t^2=\sigma^2>0$, and sufficiently
many finite moments, and set $y_0=0$.

Let $\tau_T=\sum_{\ell=0}^{T-1}\rho_T^{2\ell}$ and
$\xi_t=(1-\rho_TL)y_t=\widetilde u_t$. By
\eqref{eq:rotated-vector},
$H_2Y_{t,2}=(\xi_t,\tau_T^{-1/2}y_{t-1})'$ and
$H_3Y_{t,3}=(\xi_t,\xi_{t-1},\tau_T^{-1/2}y_{t-2})'$;
$\widecheck\Sigma_{2,2}
=T^{-1}\sum_tH_2Y_{t,2}Y_{t,2}'H_2'$ and
$\widecheck\Omega_{3,h}
=\sum_{|j|\le b_T}K(j/b_T)T^{-1}\sum_t
(H_3Y_{t,3}\widehat e_{t,h})
(H_3Y_{t-j,3}\widehat e_{t-j,h})'$.
All stochastic orders below are uniform over $\rho_T\in[-1,1]$.

We show that, for every $\varepsilon>0$, there exists $c>0$ such that
\begin{align}
\liminf_{T\to\infty}
\inf_{\rho_T\in[-1,1]}
\mathbb P_{\rho_T}
\left\{
\lambda_{\min}(\widecheck\Sigma_{2,2})\ge c,\;
\lambda_{\min}(\widecheck\Omega_{3,h})\ge c
\right\}
&\ge
1-\varepsilon.
\label{eq:local-nonsing-conclusion}
\end{align}

Let
$y_t^\dagger=\rho_Ty_{t-1}^\dagger+u_t$ with $y_0^\dagger=0$.
Then
$H_2Y_{t,2}^\dagger=(u_t,\tau_T^{-1/2}y_{t-1}^\dagger)'$,
$H_3Y_{t,3}^\dagger
=(u_t,u_{t-1},\tau_T^{-1/2}y_{t-2}^\dagger)'$,
$\widecheck\Sigma_{2,2}^\dagger
=T^{-1}\sum_tH_2Y_{t,2}^\dagger
Y_{t,2}^{\dagger\prime}H_2'$, and
$\widecheck\Omega_{3,h}^\dagger
=\sum_{|j|\le b_T}K(j/b_T)T^{-1}\sum_t
(H_3Y_{t,3}^\dagger\widehat e_{t,h}^\dagger)
(H_3Y_{t-j,3}^\dagger\widehat e_{t-j,h}^\dagger)'$.

We establish \eqref{eq:local-nonsing-conclusion} from the following
three results. For every $\varepsilon>0$, there exists $\bar c>0$ such that
\begin{align}
\liminf_{T\to\infty}
\inf_{\rho_T\in[-1,1]}
\mathbb P_{\rho_T}
\left\{
\lambda_{\min}(\widecheck\Sigma_{2,2}^\dagger)\ge\bar c,\;
\lambda_{\min}(\widecheck\Omega_{3,h}^\dagger)\ge\bar c
\right\}
&\ge
1-\varepsilon.
\label{eq:local-nonsing-corr}
\end{align}
Moreover,
\begin{align}
\left\|
\widecheck\Sigma_{2,2}
-
\widecheck\Sigma_{2,2}^\dagger
\right\|_2
&=
O_p(T^{-\bar\zeta})
=
o_p(1).
\label{eq:local-nonsing-sigma-gap}
\end{align}
Finally,
\begin{align}
\left\|
\widecheck\Omega_{3,h}
-
\widecheck\Omega_{3,h}^\dagger
\right\|_2
&=
o_p(1).
\label{eq:local-nonsing-omega-gap}
\end{align}

\textbf{\textit{Proof of \eqref{eq:local-nonsing-conclusion}}.}
Fix $\varepsilon>0$ and let $\bar c>0$ satisfy
\eqref{eq:local-nonsing-corr}. By
\eqref{eq:local-nonsing-sigma-gap} and
\eqref{eq:local-nonsing-omega-gap},
$\|\widecheck\Sigma_{2,2}-\widecheck\Sigma_{2,2}^\dagger\|_2
\le\bar c/2$ and
$\|\widecheck\Omega_{3,h}-\widecheck\Omega_{3,h}^\dagger\|_2
\le\bar c/2$
with probability $1-o(1)$ uniformly over $\rho_T$.
Weyl's inequality gives
$\lambda_{\min}(\widecheck\Sigma_{2,2})
\ge\lambda_{\min}(\widecheck\Sigma_{2,2}^\dagger)
-\|\widecheck\Sigma_{2,2}-\widecheck\Sigma_{2,2}^\dagger\|_2$
and
$\lambda_{\min}(\widecheck\Omega_{3,h})
\ge\lambda_{\min}(\widecheck\Omega_{3,h}^\dagger)
-\|\widecheck\Omega_{3,h}-\widecheck\Omega_{3,h}^\dagger\|_2$.
Hence both minimum eigenvalues are at least $\bar c/2$ with
asymptotic probability at least $1-\varepsilon-o(1)$ uniformly over
$\rho_T$. Relabeling $\bar c/2$ as $c$ proves
\eqref{eq:local-nonsing-conclusion}.

It remains to establish
\eqref{eq:local-nonsing-corr}--\eqref{eq:local-nonsing-omega-gap}.

\textbf{\textit{Proof of \eqref{eq:local-nonsing-corr}}.}

\noindent\textbf{Lower bound for $\widecheck\Sigma_{2,2}^\dagger$.}
\begin{align}
\widecheck\Sigma_{2,2}^\dagger
&=
\begin{pmatrix}
T^{-1}\sum_tu_t^2
&
(T\tau_T^{1/2})^{-1}\sum_tu_ty_{t-1}^\dagger
\\
(T\tau_T^{1/2})^{-1}\sum_tu_ty_{t-1}^\dagger
&
(T\tau_T)^{-1}\sum_t(y_{t-1}^\dagger)^2
\end{pmatrix}.
\label{eq:local-nonsing-reference-sigma}
\end{align}
Since $\mathbb E u_t^4<\infty$,
$\operatorname{Var}(T^{-1}\sum_tu_t^2)=O(T^{-1})$, so
$T^{-1}\sum_tu_t^2\xrightarrow{p}\sigma^2$ by Chebyshev's inequality.
Since $u_ty_{t-1}^\dagger$ is a martingale difference and
$\mathbb E(y_{t-1}^\dagger)^2\le\sigma^2\tau_T$,
\begin{align}
\mathbb E\left[
\left(
\frac{1}{T\tau_T^{1/2}}
\sum_tu_ty_{t-1}^\dagger
\right)^2
\right]
&\le
\frac{C}{T}.
\end{align}
Hence
$(T\tau_T^{1/2})^{-1}\sum_tu_ty_{t-1}^\dagger
=O_p(T^{-1/2})$
uniformly over $\rho_T$.

Consider
$(T\tau_T)^{-1}\sum_t(y_{t-1}^\dagger)^2$.
Take an arbitrary sequence $\{\rho_T\}\subset[-1,1]$ and a subsequence
along which $T(1-|\rho_T|)$ converges in $[0,\infty]$, taking a further
subsequence with constant sign when necessary.

If $T(1-|\rho_T|)\to\infty$, then
$|\rho_T|^{2T}\to0$ and
$\tau_T(1-\rho_T^2)=1-\rho_T^{2T}\to1$.
Since
$\mathbb E(y_{t-1}^\dagger)^2
=\sigma^2\sum_{\ell=0}^{t-2}\rho_T^{2\ell}$,
direct summation gives
$(T\tau_T)^{-1}\sum_t\mathbb E(y_{t-1}^\dagger)^2\to\sigma^2$.
For $s\le t$, independence of future innovations gives
\begin{align}
\operatorname{Cov}
\left(
(y_{s-1}^\dagger)^2,
(y_{t-1}^\dagger)^2
\right)
&=
\rho_T^{2(t-s)}
\operatorname{Var}
\left(
(y_{s-1}^\dagger)^2
\right).
\end{align}
Since $\mathbb E u_t^4<\infty$,
$\operatorname{Var}((y_{s-1}^\dagger)^2)\le C\tau_T^2$.
Summing the geometric covariance bound gives
\begin{align}
\operatorname{Var}\left(
\frac{1}{T\tau_T}\sum_t(y_{t-1}^\dagger)^2
\right)
&\le
\frac{C}{T(1-\rho_T^2)}
=
o(1).
\end{align}
Thus
$(T\tau_T)^{-1}\sum_t(y_{t-1}^\dagger)^2
\xrightarrow{p}\sigma^2$
by Chebyshev's inequality.

If $T(1-|\rho_T|)\to a\in[0,\infty)$ and $\rho_T\to1$, then
$\tau_T/T\to\kappa_a=\int_0^1e^{-2ar}\,dr$, and
\citet{phillips1987towards} gives
\begin{align}
\frac{1}{T\tau_T}\sum_t(y_{t-1}^\dagger)^2
&\xrightarrow{d}
\frac{\sigma^2}{\kappa_a}
\int_0^1
\left(
\int_0^r e^{-a(r-s)}\,dW(s)
\right)^2dr
>0
\quad\text{a.s.}
\label{eq:local-nonsing-local-unity}
\end{align}
The limit is positive almost surely because the continuous Gaussian
process is not identically zero. If $\rho_T\to-1$, then
$(-1)^ty_t^\dagger
=(-\rho_T)(-1)^{t-1}y_{t-1}^\dagger+(-1)^tu_t$ and
$[(-1)^ty_t^\dagger]^2=(y_t^\dagger)^2$, so
\eqref{eq:local-nonsing-local-unity} holds with $-\rho_T\to1$.

Hence every sequence $\{\rho_T\}$ has a subsequence along which
$\widecheck\Sigma_{2,2}^\dagger
\xrightarrow{d}\operatorname{diag}(\sigma^2,L)$, where
$L=\sigma^2$ in the first case and $L>0$ almost surely in the second.
Fix $\delta>0$. If no such uniform lower bound existed, there would be
$c_k\downarrow0$ and $\{\rho_{T_k}\}$ such that
$\mathbb P_{\rho_{T_k}}
\{\lambda_{\min}(\widecheck\Sigma_{2,2}^\dagger)<c_k\}\ge\delta$,
contradicting
$\widecheck\Sigma_{2,2}^\dagger
\xrightarrow{d}\operatorname{diag}(\sigma^2,L)$ with $L>0$ almost surely. Hence there exists
$c_\Sigma>0$ such that
\begin{align}
\liminf_{T\to\infty}
\inf_{\rho_T\in[-1,1]}
\mathbb P_{\rho_T}
\left\{
\lambda_{\min}(\widecheck\Sigma_{2,2}^\dagger)\ge c_\Sigma
\right\}
&\ge
1-\delta.
\label{eq:local-nonsing-reference-lower}
\end{align}

\noindent\textbf{Lower bound for $\widecheck\Omega_{3,h}^\dagger$.}
Here
$e_{t,h}^\dagger=\sum_{\ell=1}^h\rho_T^{h-\ell}u_{t+\ell}$.
Equation~\eqref{eq:local-nonsing-reference-lower} and
Lemma~\ref{lem:rotated-product-bounds} give
\begin{align}
\begin{split}
    &\left\|
\widecheck\Omega_{3,h}^\dagger
-
\sum_{|j|\le b_T}K(j/b_T)T^{-1}\sum_t
(H_3Y_{t,3}^\dagger e_{t,h}^\dagger)
(H_3Y_{t-j,3}^\dagger e_{t-j,h}^\dagger)'
\right\|_2\\
=&
O_p(b_TT^{-1/2})+O_p(b_TT^{-1})
=
o_p(1).
\end{split}
\end{align}
For $|j|>h$, conditioning on the innovation history through $t$
gives zero mean, and Lemma~\ref{lem:rotated-product-bounds} gives
$O_p(b_TT^{-1/2})=o_p(1)$ after summing over these lags.
For $|j|\le h$, $K(j/b_T)\to1$. Expanding
$e_{t,h}^\dagger e_{t-j,h}^\dagger$ gives
$u_{t+\ell_1}u_{t-j+\ell_2}$,
$1\le\ell_1,\ell_2\le h$.
If $t+\ell_1\neq t-j+\ell_2$, the corresponding sample average is
$O_p(T^{-1/2})$ by the $L^2$ bound in
Lemma~\ref{lem:rotated-product-bounds} and Chebyshev's inequality.
If $t+\ell_1=t-j+\ell_2$, writing $u_s^2=\sigma^2+(u_s^2-\sigma^2)$ gives an
$O_p(T^{-1/2})$ centered term by Chebyshev's inequality. Therefore
\begin{align}
\widecheck\Omega_{3,h}^\dagger
&=
\frac{\sigma^2}{T}
\sum_s
\left(
\sum_{\ell=1}^h\rho_T^{h-\ell}H_3Y_{s-\ell,3}^\dagger
\right)
\left(
\sum_{\ell=1}^h\rho_T^{h-\ell}H_3Y_{s-\ell,3}^\dagger
\right)'
+
o_p(1).
\label{eq:local-nonsing-reference-hac-gram}
\end{align}

Since
$H_3Y_{s-\ell,3}^\dagger
=(u_{s-\ell},u_{s-\ell-1},\tau_T^{-1/2}y_{s-\ell-2}^\dagger)'$,
the upper-left $2\times2$ block in
\eqref{eq:local-nonsing-reference-hac-gram} converges uniformly by
Chebyshev's inequality to
$\sigma^4\bigl(\begin{smallmatrix}
\sum_{\ell=0}^{h-1}\rho_T^{2\ell}
&
\rho_T\sum_{\ell=0}^{h-2}\rho_T^{2\ell}
\\
\rho_T\sum_{\ell=0}^{h-2}\rho_T^{2\ell}
&
\sum_{\ell=0}^{h-1}\rho_T^{2\ell}
\end{smallmatrix}\bigr)$.
Its smallest eigenvalue divided by $\sigma^4$ is
$(1-|\rho_T|)\sum_{\ell=0}^{h-2}|\rho_T|^{2\ell}
+|\rho_T|^{2h-2}$, which is uniformly bounded away from zero.

For the third coordinate,
\begin{align}
\tau_T^{-1/2}\sum_{\ell=1}^h\rho_T^{h-\ell}y_{s-\ell-2}^\dagger
&=
\tau_T^{-1/2}
\left(\sum_{\ell=0}^{h-1}\rho_T^{2\ell}\right)
y_{s-h-2}^\dagger
+
\tau_T^{-1/2}\sum_{\ell=1}^{h-1}\rho_T^{h-\ell}
\sum_{k=0}^{h-\ell-1}\rho_T^ku_{s-\ell-2-k}.
\end{align}
The $L^2$ bound and Chebyshev's inequality give
$o_p(1)$ uniformly over $\rho_T$ for the sample cross-products of
$\tau_T^{-1/2}y_{s-h-2}^\dagger$ with
$\sum_{\ell=1}^h\rho_T^{h-\ell}u_{s-\ell}$,
$\sum_{\ell=1}^h\rho_T^{h-\ell}u_{s-\ell-1}$, and
$\tau_T^{-1/2}\sum_{\ell=1}^{h-1}\rho_T^{h-\ell}
\sum_{k=0}^{h-\ell-1}\rho_T^ku_{s-\ell-2-k}$.
The remaining contribution to the Schur complement is nonnegative.
Hence the Schur complement is bounded below by
$\sigma^2
\left(\sum_{\ell=0}^{h-1}\rho_T^{2\ell}\right)^2
(T\tau_T)^{-1}\sum_s(y_{s-h-2}^\dagger)^2-o_p(1)$.
Since $\sum_{\ell=0}^{h-1}\rho_T^{2\ell}\ge1$,
\eqref{eq:local-nonsing-reference-lower} gives a uniform positive
lower bound.

Therefore, for every $\delta>0$, there exists $c_\Omega>0$ such that
\begin{align}
\liminf_{T\to\infty}
\inf_{\rho_T\in[-1,1]}
\mathbb P_{\rho_T}
\left\{
\lambda_{\min}(\widecheck\Omega_{3,h}^\dagger)\ge c_\Omega
\right\}
&\ge
1-\delta.
\label{eq:local-nonsing-reference-omega-lower}
\end{align}
Taking $\delta=\varepsilon/2$ in
\eqref{eq:local-nonsing-reference-lower} and
\eqref{eq:local-nonsing-reference-omega-lower}, the union bound gives
\eqref{eq:local-nonsing-corr} with
$\bar c=\min\{c_\Sigma,c_\Omega\}$.

\textbf{\textit{Proof of \eqref{eq:local-nonsing-sigma-gap}}.}
For all sufficiently large $T$, $|\phi_T|<1$, and
\begin{align}
\xi_t-u_t
&=
(\phi_T+\theta_T)
\sum_{\ell=1}^{\infty}\phi_T^{\ell-1}u_{t-\ell}.
\label{eq:local-nonsing-xi-gap}
\end{align}
Hence
$\sup_{t\le T}\|\xi_t-u_t\|_{2r}\le CT^{-\bar\zeta}$.
Since $y_0=y_0^\dagger=0$,
$y_t-y_t^\dagger
=
(\phi_T+\theta_T)
\sum_{s=1}^t
\rho_T^{t-s}
\sum_{\ell=1}^{\infty}
\phi_T^{\ell-1}u_{s-\ell}$.
Moreover,
$|\phi_T+\theta_T|
(\sum_{s=0}^{t-1}\rho_T^{2s})^{1/2}
\sum_{\ell\ge0}|\phi_T|^\ell
\le CT^{-\bar\zeta}\tau_T^{1/2}$.
The weighted Rosenthal bound in
Lemma~\ref{lem:local-filter-approximation} therefore gives
$\sup_{t\le T}\|\tau_T^{-1/2}(y_t-y_t^\dagger)\|_{2r}
\le CT^{-\bar\zeta}$ and
\begin{align}
\sup_{t\le T}
\left\|
H_jY_{t,j}
-
H_jY_{t,j}^\dagger
\right\|_{2r}
&\le
CT^{-\bar\zeta},
\qquad j\in\{2,3\}.
\label{eq:local-nonsing-rotated-gap}
\end{align}

Using
\begin{align}
\begin{split}
&H_2Y_{t,2}Y_{t,2}'H_2'
-H_2Y_{t,2}^\dagger Y_{t,2}^{\dagger\prime}H_2' \\
&=
(H_2Y_{t,2}-H_2Y_{t,2}^\dagger)(H_2Y_{t,2}^\dagger)'
+
H_2Y_{t,2}^\dagger(H_2Y_{t,2}-H_2Y_{t,2}^\dagger)' \\
&\quad+
(H_2Y_{t,2}-H_2Y_{t,2}^\dagger)
(H_2Y_{t,2}-H_2Y_{t,2}^\dagger)',
\end{split}
\end{align}
By \eqref{eq:local-nonsing-rotated-gap} and the uniform moment bounds,
$\|\widecheck\Sigma_{2,2}-\widecheck\Sigma_{2,2}^\dagger\|_2
=O_p(T^{-\bar\zeta})=o_p(1)$.

\textbf{\textit{Proof of \eqref{eq:local-nonsing-omega-gap}}.}
Equations~\eqref{eq:local-nonsing-reference-lower} and
\eqref{eq:local-nonsing-sigma-gap} give
$\|\widecheck\Sigma_{2,2}^{-1}\|_2+
\|(\widecheck\Sigma_{2,2}^\dagger)^{-1}\|_2=O_p(1)$.

Define
$\widetilde\Omega_{3,h}
:=
\sum_{|j|\le b_T}K(j/b_T)T^{-1}\sum_t
(H_3Y_{t,3}e_{t,h,T})
(H_3Y_{t-j,3}e_{t-j,h,T})'$
and\\
$\widetilde\Omega_{3,h}^\dagger
:=
\sum_{|j|\le b_T}K(j/b_T)T^{-1}\sum_t
(H_3Y_{t,3}^\dagger e_{t,h}^\dagger)
(H_3Y_{t-j,3}^\dagger e_{t-j,h}^\dagger)'$.
Then
\begin{align}
\left\|
\widecheck\Omega_{3,h}
-
\widecheck\Omega_{3,h}^\dagger
\right\|_2
&\le
\left\|
\widecheck\Omega_{3,h}
-
\widetilde\Omega_{3,h}
\right\|_2
+
\left\|
\widetilde\Omega_{3,h}
-
\widetilde\Omega_{3,h}^\dagger
\right\|_2
+
\left\|
\widetilde\Omega_{3,h}^\dagger
-
\widecheck\Omega_{3,h}^\dagger
\right\|_2.
\label{eq:local-nonsing-omega-decomposition}
\end{align}
Proposition~\ref{prop:local-lalp-representation},
Lemma~\ref{lem:local-score-equivalence},
\eqref{eq:local-nonsing-reference-lower}, and
Lemma~\ref{lem:rotated-product-bounds} give
$\|\widecheck\Omega_{3,h}-\widetilde\Omega_{3,h}\|_2
+
\|\widetilde\Omega_{3,h}^\dagger-\widecheck\Omega_{3,h}^\dagger\|_2
=
O_p(b_TT^{-1/2})+O_p(b_TT^{-1})
=
o_p(1)$.

Moreover,
$H_3Y_{t,3}e_{t,h,T}
-H_3Y_{t,3}^\dagger e_{t,h}^\dagger
=
(H_3Y_{t,3}-H_3Y_{t,3}^\dagger)e_{t,h}^\dagger
+
H_3Y_{t,3}(e_{t,h,T}-e_{t,h}^\dagger)$.
By \eqref{eq:local-nonsing-rotated-gap} and the uniform moment bounds,
\begin{align}
\|H_3Y_{t,3}e_{t,h,T}
-H_3Y_{t,3}^\dagger e_{t,h}^\dagger\|_r
&=O(T^{-\bar\zeta})
\end{align}
uniformly over $\rho_T$.

For $|j|\le h$,
\begin{align}
\left\|
T^{-1}\sum_t
\left[
H_3Y_{t,3}e_{t,h,T}e_{t-j,h,T}(H_3Y_{t-j,3})'
-
H_3Y_{t,3}^\dagger e_{t,h}^\dagger
e_{t-j,h}^\dagger(H_3Y_{t-j,3}^\dagger)'
\right]
\right\|_2
&=
O_p(T^{-\bar\zeta}).
\end{align}

For $h<|j|\le b_T$,
\begin{align}
\begin{split}
&T^{-1}\sum_t
\left[
H_3Y_{t,3}e_{t,h,T}e_{t-j,h,T}(H_3Y_{t-j,3})'
-
H_3Y_{t,3}^\dagger e_{t,h}^\dagger
e_{t-j,h}^\dagger(H_3Y_{t-j,3}^\dagger)'
\right]
\\
&=
T^{-1}\sum_t
\left[
H_3Y_{t,3}e_{t,h,T}
-
H_3Y_{t,3}^\dagger e_{t,h}^\dagger
\right]
e_{t-j,h}^\dagger(H_3Y_{t-j,3}^\dagger)'
\\
&\quad+
T^{-1}\sum_t
H_3Y_{t,3}^\dagger e_{t,h}^\dagger
\left[
H_3Y_{t-j,3}e_{t-j,h,T}
-
H_3Y_{t-j,3}^\dagger e_{t-j,h}^\dagger
\right]'
\\
&\quad+
T^{-1}\sum_t
\left[
H_3Y_{t,3}e_{t,h,T}
-
H_3Y_{t,3}^\dagger e_{t,h}^\dagger
\right]
\left[
H_3Y_{t-j,3}e_{t-j,h,T}
-
H_3Y_{t-j,3}^\dagger e_{t-j,h}^\dagger
\right]'.
\end{split}
\end{align}
The three terms are respectively
$O_p(T^{-\bar\zeta-1/2})$,
$O_p(T^{-\bar\zeta-1/2})$, and
$O_p(T^{-2\bar\zeta})$,
uniformly over $h<|j|\le b_T$.
Hence
\begin{align}
\|\widecheck\Omega_{3,h}-\widecheck\Omega_{3,h}^\dagger\|_2
&=
O_p(T^{-\bar\zeta})
+O_p(b_TT^{-1/2})
+O_p(b_TT^{-\bar\zeta-1/2})
+O_p(b_TT^{-2\bar\zeta})
=o_p(1).
\label{eq:local-nonsing-hac-rate}
\end{align}
Since $b_T=o(T^{1/2})$ and $\bar\zeta>1/4$, this proves
\eqref{eq:local-nonsing-omega-gap}.

\vspace{0.5\baselineskip}

\section{Strict efficiency of optimal GMM}
\label{app:gmm_vs_ols}

This appendix specializes the moment-redundancy condition of
\cite{breusch1999redundancy} to a time-series regression with controls
and additional instruments. After partialling out the controls,
redundancy is characterized by equality between a contemporaneous
projection coefficient and its long-run counterpart.\footnote{This
also gives the information interpretation of \cite{hall2007information}:
an additional moment is informative if and only if the component of its
score that is long-run orthogonal to the baseline score has a nonzero
derivative with respect to the parameter.}

Consider
\begin{align}
\label{eq:lemma_model}
y_t
=
x_t'\beta+w_t'\gamma+u_t,
\end{align}
where $x_t\in\mathbb R^{d_x}$, $w_t\in\mathbb R^{d_w}$, and
$z_t\in\mathbb R^{d_z}$. We maintain
$\mathbb E[x_tu_t]=0$, $\mathbb E[w_tu_t]=0$,
$\mathbb E[z_tu_t]=0$, and nonsingularity of
$\Sigma_{ww}:=\mathbb E[w_tw_t']$. We assume stationarity and weak
dependence throughout.

Set
$y_t^\perp:=y_t-P(y_t\mid w_t)$,
$x_t^\perp:=x_t-P(x_t\mid w_t)$, and
$z_t^\perp:=z_t-P(z_t\mid w_t)$.
Since $\mathbb E[w_tu_t]=0$,
\begin{align}
\label{eq:residualized_model}
y_t^\perp
=
x_t^{\perp\prime}\beta+u_t.
\end{align}
Let
$Z_t^\perp:=(x_t^{\perp\prime},z_t^{\perp\prime})'$ and assume
$\Sigma_{xx}^\perp:=\mathbb E[x_t^\perp x_t^{\perp\prime}]$
is nonsingular. Under the stationarity, mixing, and moment conditions
of the main text,
$\Omega^\perp
:=
\sum_{\ell=-\infty}^{\infty}
\mathbb E[
Z_t^\perp u_tu_{t-\ell}Z_{t-\ell}^{\perp\prime}]$
exists, is positive definite, and admits a consistent estimator.
Partition
$\Omega^\perp
=
\begin{pmatrix}
\Omega_{xx}^\perp & \Omega_{xz}^\perp\\
\Omega_{xz}^{\perp\prime} & \Omega_{zz}^\perp
\end{pmatrix}$,
and let
$\Sigma_{xz}^\perp:=\mathbb E[x_t^\perp z_t^{\perp\prime}]$ and
$S^\perp
:=
\Omega_{zz}^\perp
-
\Omega_{xz}^{\perp\prime}
(\Omega_{xx}^\perp)^{-1}
\Omega_{xz}^\perp$.
Since $\Omega^\perp\succ0$, $S^\perp\succ0$.

Using $x_t^\perp$ alone gives the OLS estimator by
Frisch--Waugh--Lovell, with
$V^{\mathrm{OLS}}_\beta
=(\Sigma_{xx}^\perp)^{-1}
\Omega_{xx}^\perp
(\Sigma_{xx}^\perp)^{-1}$.
Using $Z_t^\perp$ with the optimal weight gives
$V^{\mathrm{GMM}}_\beta
=
\left(
\mathbb E[x_t^\perp Z_t^{\perp\prime}]
(\Omega^\perp)^{-1}
\mathbb E[Z_t^\perp x_t^{\perp\prime}]
\right)^{-1}$.
Define
\begin{align}
\label{eq:lemma_R}
R^\perp
:=
\Sigma_{xz}^\perp
-
\Sigma_{xx}^\perp
\bigl(\Omega_{xx}^\perp\bigr)^{-1}
\Omega_{xz}^\perp
\in
\mathbb R^{d_x\times d_z}.
\end{align}
The score
$z_t^\perp u_t-
\Omega_{xz}^{\perp\prime}
(\Omega_{xx}^\perp)^{-1}x_t^\perp u_t$
is long-run orthogonal to $x_t^\perp u_t$, and its derivative with
respect to $\beta$ is $-R^{\perp\prime}$. Hence the additional moments
are redundant if and only if $R^\perp=0$.

\begin{lemma}[Strict efficiency of optimal GMM for $\beta$]
\label{lem:gmm_vs_ols}
Under the conditions above,
\begin{align}
\label{eq:lemma_gain}
V^{\mathrm{OLS}}_\beta-V^{\mathrm{GMM}}_\beta
=
V^{\mathrm{OLS}}_\beta R^\perp
\left(
S^\perp+R^{\perp\prime}V^{\mathrm{OLS}}_\beta R^\perp
\right)^{-1}
R^{\perp\prime}V^{\mathrm{OLS}}_\beta
\succeq0,
\end{align}
with equality if and only if $R^\perp=0$. Moreover,
$\operatorname{rank}
(V^{\mathrm{OLS}}_\beta-V^{\mathrm{GMM}}_\beta)
=
\operatorname{rank}(R^\perp)$,
and the column space of
$V^{\mathrm{OLS}}_\beta-V^{\mathrm{GMM}}_\beta$
equals that of
$V^{\mathrm{OLS}}_\beta R^\perp$.
\end{lemma}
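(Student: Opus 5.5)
The plan is to change coordinates in the moment vector so that the optimal-GMM information matrix splits into the OLS information plus a nonnegative increment, and then to invert that sum with the Woodbury identity.

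\textbf{Step 1: block-triangular transformation.} Optimal GMM is invariant to nonsingular linear transformations of the moment vector. I therefore replace $Z_t^\perp u_t$ by $(x_t^\perp u_t,\ \tilde z_t u_t)$, where
\[
\tilde z_t u_t := z_t^\perp u_t-\Omega_{xz}^{\perp\prime}(\Omega_{xx}^\perp)^{-1}x_t^\perp u_t .
\]
The transformation matrix is
\[
L=\begin{pmatrix} I & 0\\ -\Omega_{xz}^{\perp\prime}(\Omega_{xx}^\perp)^{-1} & I\end{pmatrix}.
\]
Under $L$, the long-run variance becomes $L\Omega^\perp L'=\operatorname{diag}(\Omega_{xx}^\perp,S^\perp)$. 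The Jacobian $\mathbb E[x_t^\perp Z_t^{\perp\prime}]=(\Sigma_{xx}^\perp,\Sigma_{xz}^\perp)$ becomes $(\Sigma_{xx}^\perp,R^\perp)$. Hence
\[
(V_\beta^{\mathrm{GMM}})^{-1}=\Sigma_{xx}^\perp(\Omega_{xx}^\perp)^{-1}\Sigma_{xx}^\perp+R^\perp(S^\perp)^{-1}R^{\perp\prime}=(V_\beta^{\mathrm{OLS}})^{-1}+R^\perp(S^\perp)^{-1}R^{\perp\prime}.
\]
The fact that $S^\perp\succ0$ follows from the Schur complement of $\Omega^\perp\succ0$.

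\textbf{Step 2: Woodbury inversion.} Write $V:=V^{\mathrm{OLS}}_\beta\succ0$. By Woodbury,
\[
\bigl(V^{-1}+R(S)^{-1}R'\bigr)^{-1}=V-VR(S+R'VR)^{-1}R'V,
\]
which gives \eqref{eq:lemma_gain} directly. Positive semidefiniteness follows because $S+R'VR\succ0$, so its inverse is positive definite and the difference is a congruence of a positive definite matrix.

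\textbf{Step 3: rank and column space.} Set $M:=(S+R'VR)^{-1}\succ0$ and factor $M=M^{1/2}M^{1/2}$. The difference is then $(VRM^{1/2})(VRM^{1/2})'$. Its column space equals that of $VRM^{1/2}$, which equals that of $VR$ since $M^{1/2}$ is nonsingular. The rank equals $\operatorname{rank}(VR)=\operatorname{rank}(R)$ since $V$ is nonsingular. In particular, the difference is zero if and only if $VR=0$, that is, if and only if $R^\perp=0$.

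\textbf{Main obstacle.} The only delicate point is the invariance claim in Step 1, namely that the optimal-GMM asymptotic variance is unchanged under $L$. This is immediate from
\[
(GL')(L\Omega L')^{-1}(LG')=G\Omega^{-1}G'
\]
for nonsingular $L$. Beyond that, everything is finite-dimensional linear algebra, and stationarity is needed only to guarantee that $\Omega^\perp$ and $\Sigma^\perp$ exist.
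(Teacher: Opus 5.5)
Your proof is correct and follows essentially the same route as the paper: both reduce the optimal-GMM information to $Q^{\mathrm{OLS}}+R^\perp(S^\perp)^{-1}R^{\perp\prime}$ and then invert it with the Woodbury identity. The differences are cosmetic. You obtain that decomposition by congruence with the unit block-triangular $L$, which is equivalent to the paper's explicit block inversion of $\Omega^\perp$ and matches the orthogonalized score the paper describes just before the lemma. You also spell out the rank and column-space claims via the factorization $(VRM^{1/2})(VRM^{1/2})'$, which the paper simply asserts.
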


\begin{proof}
Let
$Q^{\mathrm{OLS}}
:=\Sigma_{xx}^\perp
(\Omega_{xx}^\perp)^{-1}
\Sigma_{xx}^\perp$, so that
$V^{\mathrm{OLS}}_\beta=(Q^{\mathrm{OLS}})^{-1}$.
Block inversion gives
\begin{align}
\bigl(\Omega^\perp\bigr)^{-1}
=
\begin{pmatrix}
\bigl(\Omega_{xx}^\perp\bigr)^{-1}
+
\bigl(\Omega_{xx}^\perp\bigr)^{-1}
\Omega_{xz}^\perp
\bigl(S^\perp\bigr)^{-1}
\Omega_{xz}^{\perp\prime}
\bigl(\Omega_{xx}^\perp\bigr)^{-1}
&
-\bigl(\Omega_{xx}^\perp\bigr)^{-1}
\Omega_{xz}^\perp
\bigl(S^\perp\bigr)^{-1}
\\[2pt]
-\bigl(S^\perp\bigr)^{-1}
\Omega_{xz}^{\perp\prime}
\bigl(\Omega_{xx}^\perp\bigr)^{-1}
&
\bigl(S^\perp\bigr)^{-1}
\end{pmatrix}.
\end{align}
Substituting
$\mathbb E[x_t^\perp Z_t^{\perp\prime}]
=(\Sigma_{xx}^\perp,\Sigma_{xz}^\perp)$
and using \eqref{eq:lemma_R} gives
$\mathbb E[x_t^\perp Z_t^{\perp\prime}]
(\Omega^\perp)^{-1}
\mathbb E[Z_t^\perp x_t^{\perp\prime}]
=
Q^{\mathrm{OLS}}
+
R^\perp(S^\perp)^{-1}R^{\perp\prime}$.
Therefore
$V^{\mathrm{GMM}}_\beta
=
\bigl(
Q^{\mathrm{OLS}}
+
R^\perp(S^\perp)^{-1}R^{\perp\prime}
\bigr)^{-1}$.

The Woodbury identity\footnote{The Woodbury identity states the following fact. For conformable matrices $A$, $U$, $C$, and
$V$, with $A$ and $C$ nonsingular,
$(A+UCV)^{-1}
=
A^{-1}
-
A^{-1}U
(C^{-1}+VA^{-1}U)^{-1}
VA^{-1}$,
provided the displayed inverses exist.
See \citet[Eq.~(0.7.4.1)]{horn2013matrix}.}
gives
\begin{align}
V^{\mathrm{GMM}}_\beta
=
V^{\mathrm{OLS}}_\beta
-
V^{\mathrm{OLS}}_\beta R^\perp
\left(
S^\perp+R^{\perp\prime}V^{\mathrm{OLS}}_\beta R^\perp
\right)^{-1}
R^{\perp\prime}V^{\mathrm{OLS}}_\beta,
\end{align}
which proves \eqref{eq:lemma_gain}.

Since $V^{\mathrm{OLS}}_\beta\succ0$ and $S^\perp\succ0$,
$S^\perp+R^{\perp\prime}V^{\mathrm{OLS}}_\beta R^\perp\succ0$.
Therefore
$V^{\mathrm{OLS}}_\beta-V^{\mathrm{GMM}}_\beta\succeq0$, with equality
if and only if
$R^{\perp\prime}V^{\mathrm{OLS}}_\beta=0$, equivalently
$R^\perp=0$.

Since
$\bigl(
S^\perp+R^{\perp\prime}V^{\mathrm{OLS}}_\beta R^\perp
\bigr)^{-1}\succ0$, $\operatorname{rank}
\bigl(
V^{\mathrm{OLS}}_\beta-V^{\mathrm{GMM}}_\beta
\bigr)=
\operatorname{rank}(R^\perp). $ Moreover, the column space of
$V^{\mathrm{OLS}}_\beta-V^{\mathrm{GMM}}_\beta$
equals that of
$V^{\mathrm{OLS}}_\beta R^\perp$.
\end{proof}

The condition $R^\perp=0$ is equivalently
\begin{align}
\label{eq:R_zero_equiv}
\bigl(\Sigma_{xx}^\perp\bigr)^{-1}\Sigma_{xz}^\perp
=
\bigl(\Omega_{xx}^\perp\bigr)^{-1}\Omega_{xz}^\perp.
\end{align}
The two sides of \eqref{eq:R_zero_equiv} are, respectively, the
contemporaneous projection coefficient of $z_t^\perp$ on $x_t^\perp$
and the long-run projection coefficient of $z_t^\perp u_t$ on
$x_t^\perp u_t$. Thus the additional moments are redundant precisely
when these coefficients coincide.

Several cases follow directly. If
$z_t^\perp=Ax_t^\perp$, then
$\Sigma_{xz}^\perp=\Sigma_{xx}^\perp A'$ and
$\Omega_{xz}^\perp=\Omega_{xx}^\perp A'$, so $R^\perp=0$.
If $Z_t^\perp u_t$ is serially uncorrelated and
$\mathbb E[u_t^2\mid Z_t^\perp]=\sigma^2$, then
$\Omega_{xx}^\perp=\sigma^2\Sigma_{xx}^\perp$ and
$\Omega_{xz}^\perp=\sigma^2\Sigma_{xz}^\perp$, again implying
$R^\perp=0$.
If $\Sigma_{xz}^\perp=0$, then
\eqref{eq:R_zero_equiv} holds if and only if
$\Omega_{xz}^\perp=0$.

Hence strict efficiency fails only when the contemporaneous and
long-run projection coefficients in \eqref{eq:R_zero_equiv} coincide.
Serial correlation or conditional heteroskedasticity can make
$R^\perp\neq0$, but neither is sufficient. In particular, if the
scores are serially uncorrelated and conditionally homoskedastic, then
$\Omega^\perp
=\sigma^2\mathbb E[Z_t^\perp Z_t^{\perp\prime}]$ and
$R^\perp=0$.

\vspace{0.5\baselineskip}

\section{Bootstrap implementation}
\label{app:bootstrap}

Asymptotic normal inference based on \eqref{eq:Vpq_hat} is convenient,
but finite-sample distortions may remain. Following
\citet{montiel2021local}, we therefore use a recursive wild bootstrap
adapted to GMM--LP. Fix a nonzero vector $\nu\in\mathbb R^K$ and consider inference on
$\nu'\beta_h$. Let
$\widehat s_{h,\nu}^{\mathrm{GMM}(p,q)}
:=
\left(
T^{-1}\nu'\widehat V_h^{\mathrm{GMM}(p,q)}\nu
\right)^{1/2}$.

The bootstrap proceeds as follows.
\begin{enumerate}[label=(\roman*)]

\item
Estimate the VAR($p$) by OLS,
$
y_t
=
\hat c+\widehat A_1y_{t-1}+\cdots+\widehat A_py_{t-p}
+\widehat u_t.
$
Let $\widehat{\mathbf A}$ denote the associated companion matrix and set
$\widehat\beta_h^{\mathrm{VAR}}
=
J_p(\widehat{\mathbf A}')^hJ_p'\nu_1$.
The bootstrap statistic is centered at
$\nu'\widehat\beta_h^{\mathrm{VAR}}$.

\item
For each $b=1,\ldots,B$, draw i.i.d.\
$\xi_t^{(b)}\sim N(0,1)$ and set
$u_t^{*(b)}=\xi_t^{(b)}\widehat u_t$.
Generate
$
y_t^{*(b)}
=
\hat c+\widehat A_1y_{t-1}^{*(b)}
+\cdots+\widehat A_py_{t-p}^{*(b)}
+u_t^{*(b)},
\qquad
t=p+1,\ldots,T.
$

\item
Re-estimate the horizon-$h$ GMM--LP from
$\{y_t^{*(b)}\}_{t=1}^T$ using the same $(p,q)$ and kernel as in the
original sample. The horizon-$h$ regression variables are demeaned
before residualizing with respect to the lag controls, as in the
original sample. The bandwidth is computed once from the original
sample, horizon by horizon, and held fixed across replications.

This gives
$\widehat\beta_h^{*(b),\mathrm{GMM}(p,q)}$ and
$\widehat V_h^{*(b),\mathrm{GMM}(p,q)}$, with
$\widehat s_{h,\nu}^{*(b),\mathrm{GMM}(p,q)}
:=
\left(
T^{-1}\nu'\widehat V_h^{*(b),\mathrm{GMM}(p,q)}\nu
\right)^{1/2}$.

\item
Form
$\widehat T_{h,\nu}^{*(b)}
=
\left[
\nu'\widehat\beta_h^{*(b),\mathrm{GMM}(p,q)}
-
\nu'\widehat\beta_h^{\mathrm{VAR}}
\right]
/
\widehat s_{h,\nu}^{*(b),\mathrm{GMM}(p,q)}$.
Let $\widehat c_{h,\nu}(\tau)$ be the empirical $\tau$-quantile of
$\{\widehat T_{h,\nu}^{*(b)}\}_{b=1}^B$. The equal-tailed
percentile-$t$ confidence interval is
$\mathrm{CI}_{h,\nu}^{\mathrm{boot}}
=
\Bigl[
\nu'\widehat\beta_h^{\mathrm{GMM}(p,q)}
-
\widehat s_{h,\nu}^{\mathrm{GMM}(p,q)}
\widehat c_{h,\nu}(1-\alpha/2),
\,
\nu'\widehat\beta_h^{\mathrm{GMM}(p,q)}
-
\widehat s_{h,\nu}^{\mathrm{GMM}(p,q)}
\widehat c_{h,\nu}(\alpha/2)
\Bigr]$.

\end{enumerate}

\section{90\% confidence intervals in the empirical application}
\label{app:IRF}

In Figure \ref{fig:application-epu-shock} we provide additional results on confidence intervals for the empirical application in Section \ref{sec:application}.

\begin{figure}[htbp]
\makebox[\linewidth][c]{
\includegraphics[width=1.1\textwidth]{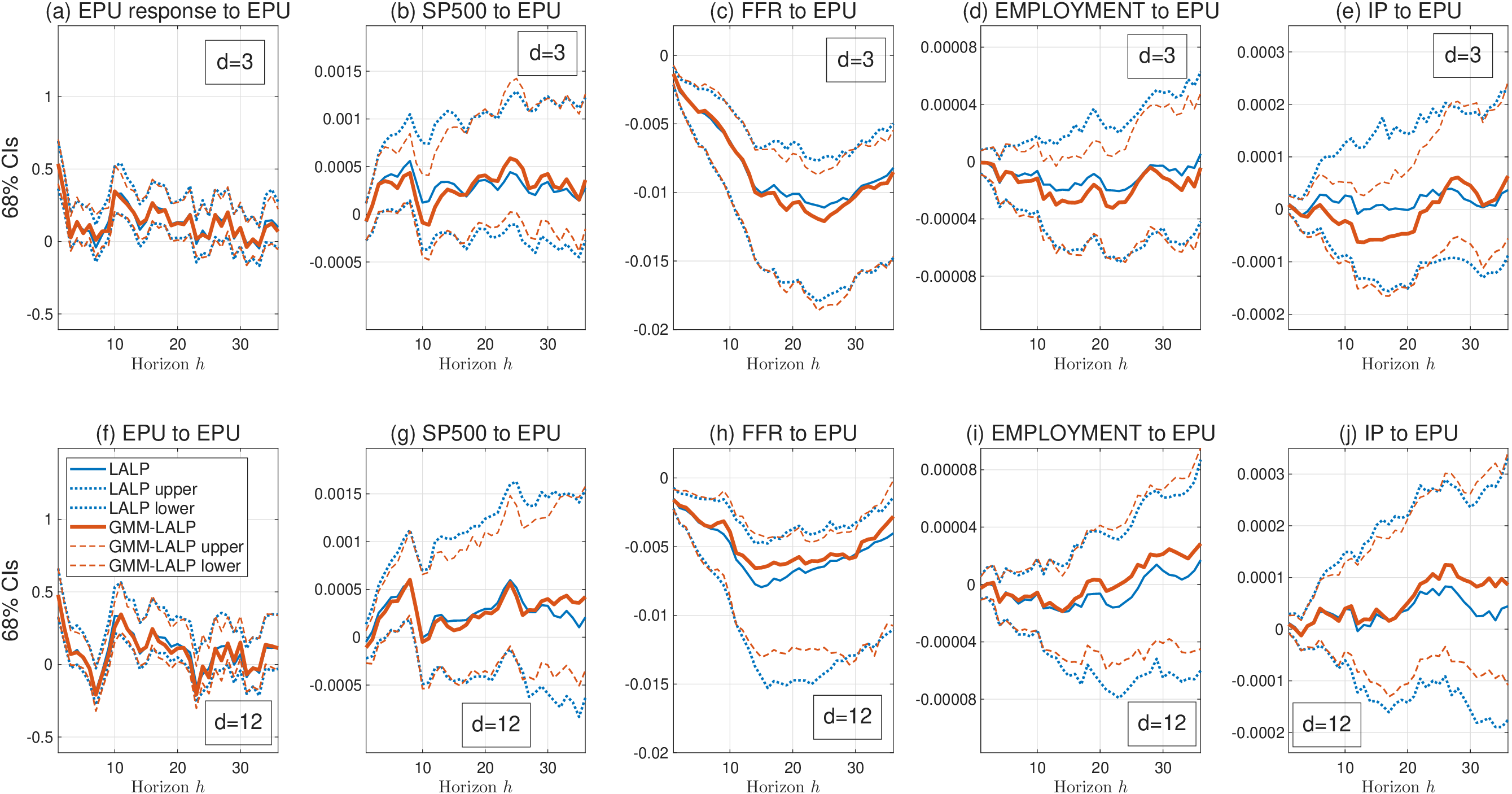}
}
\caption{Responses to an EPU innovation.}
\label{fig:application-epu-shock}
\begin{minipage}{1\textwidth}
\vspace{2pt}
\footnotesize
\textit{Notes:} The figure plots the impulse responses for EPU, SP500,
FFR, EMPLOYMENT, and IP to an EPU innovation with 90\% confidence intervals. Blue lines are LALP estimates and
bootstrap confidence intervals; orange lines are GMM-LALP estimates and bootstrap confidence intervals. The
model lag order is $d=3$ for the upper panels, and is $d=12$ for the lower panels. The horizons run from 1 to 36 months.
\end{minipage}
\end{figure}

\renewcommand{\refname}{References for the Appendices}
\putbib

\end{bibunit}

\end{document}